\newcommand{\mbc}[1]{{\mbcomment{#1}}}
\newtheorem{theorem}{Theorem}
\newtheorem{corollary}[theorem]{Corollary}
\newtheorem{lemma}[theorem]{Lemma}
\newtheorem{claim}[theorem]{Claim}
\newtheorem{proposition}[theorem]{Proposition}
\newtheorem{observation}[theorem]{Observation}
\newtheorem{remark}[theorem]{Remark}
\newtheorem{example}[theorem]{Example}
\newenvironment{proof}{\noindent\bf{Proof.}\rm}{\hfill$\blacksquare$\bigskip} 
\newcommand{\cout}[1]{}
\newcommand{\mech}[0]{\mathcal{M}}
\newcommand{\mechx}{\mech^X}
\newcommand{\alx}{A^X}
\newcommand{\alxa}{\tilde{A}^X}
\newcommand{\epsap}{\frac{1}{1+\epsilon}}
\newcommand{\mechb}{\mech^B}
\newcommand{\meche}{\mech^L}
\newcommand{\alb}{A^B}
\newcommand{\alba}{\tilde{A}^B}
\newcommand{\groundset}{{\mathcal{U}}}
\newcommand{\indsets}{{\mathcal{I}}}
\newcommand{\meps}{m\epsilon}
\title{Fair and Truthful  Mechanisms for Dichotomous Valuations}
\author{
	Moshe Babaioff\thanks{
	Microsoft Research. \texttt{moshe@microsoft.com}.} \and 
	Tomer Ezra\thanks{Computer Science, Tel-Aviv University. \texttt{tomer.ezra@gmail.com}.}
	\and
	Uriel Feige\thanks{Weizmann Institute, Rehovot, Israel, \texttt{Uriel.Feige@weizmann.ac.il}. Supported in part by the Israel Science Foundation (grant No. 1388/16). Part of this work was done at Microsoft Research, Herzeliya.}
}
\begin{document}	



\maketitle
\begin{abstract}
We consider the problem of allocating a set on indivisible items to agents with private preferences in an efficient and fair way.
We focus on valuations that have \emph{dichotomous marginals}, in which the added value of any item to a set is either 0 or 1, and aim to design  truthful allocation mechanisms (without money) that maximize welfare and are fair.
For the case that agents have submodular valuations with dichotomous marginals, we design such a deterministic truthful allocation mechanism. The allocation output by our mechanism is Lorenz dominating, and consequently satisfies many desired fairness properties, such as being envy-free up to any item (EFX), and maximizing the Nash Social Welfare (NSW). 
We then show that our mechanism with random priorities is envy-free ex-ante, while having all the above properties ex-post. Furthermore, we present several impossibility results precluding similar results for the larger class of XOS valuations.  


To gauge the robustness of our positive results, we also study $\epsilon$-dichotomous valuations, in which the added value of any item to a set is either non-positive, or in the range $[1, 1 + \epsilon]$. We show several impossibility results in this setting, and also a positive result: for agents that have additive $\epsilon$-dichotomous valuations with sufficiently small $\epsilon$, we design a randomized truthful mechanism with strong ex-post guarantees. For $\rho = \frac{1}{1 + \epsilon}$, the allocations that it produces generate at least a $\rho$-fraction of the maximum welfare, and enjoy $\rho$-approximations for various fairness properties, such as being envy-free up to one item (EF1), and giving each agent at least her maximin share.
\end{abstract}





\section{Introduction}

A central problem in Algorithmic Game Theory is the problem of allocating indivisible goods among agents with private preferences. This problem is particularly challenging in settings in which utilities cannot be transferred between agents (no money). One consideration in allocating the items is the economic efficiency of the allocation, as we want the best for society as a whole. Another consideration is fairness of the allocation, because in the absence of money, there is no other way for the agents to evenly share the welfare generated by the efficient allocation.

Thus, in this work we design allocation mechanisms that enjoy desirable properties, related to their economic efficiency, to fairness of the allocation, and to incentive compatibility (truthfulness). Importantly, we consider only settings without money, so a mechanism defines an \emph{allocation rule}, but does not involve a payment rule, as there are no payments. 
With general valuations, even without any fairness properties, the VCG mechanism is the unique truthful welfare-maximizing mechanism, and it requires payments. Consequently, the focus of our work is on instances in which the valuation functions of the agents are restricted, and specifically, have the {\em dichotomous marginals} property.
We say that a valuation function $f$ has {\em dichotomous marginals} (or for brevity, we simply say that $f$ is {\em dichotomous}) if for every set $S$ of items and every additional item $a$, the marginal value of $a$ relative to $S$ is either~0 or~1. Namely, $f(S \cup a) - f(S) \in \{0,1\}$.

{The study of fairness with dichotomous preferences was initiated by Bogomolnaia and Moulin~\cite{BM2004}, with additional extensive research of such preferences in various settings (see e.g. ~\cite{BMS2005, RuthSU2005, Freitas2010, BouveretL08, KPS2018, Ortega2020}).}
{The above references provide multiple examples of situations that can be modeled using dichotomous preferences. Next we provide another example that involves constraints not captured by prior work. Consider a setting where the agents are arts students} seeking work as museum guides. The items are different shifts in which the students can work as guides in the local arts museum. Suppose that among the shifts (or combinations of shifts) that are feasible for a given student in a given month (for example, one student cannot work on weekends, another student can work at most two shifts a week, etc.), the student may wish to work for as many shifts as possible during the month, but other than that is indifferent to the exact choice of shifts (as long as the combination of shifts is feasible for the student).
{A model that first-order approximates this setting is one in which the valuation function of a student is modeled as being dichotomous.\footnote{{We later discuss relaxing the assumption that all desired shifts are equivalent for the student, allowing strict preference, while still assuming \emph{approximately} the same marginals.}}} 
The allocation problem is to assign students to shifts. Economic efficiency may correspond to filling as many shifts as possible. 
Fairness may correspond to trying to equalize the number of shifts that each student receives (subject to the feasibility constraints). Incentive compatibility means that it is a dominant strategy for a student to report her true valuation function 
to the museum  -- providing an incorrect report cannot lead to a situation in which she receives a bundle of shifts of higher value to her.

\subsection{Our Contribution and Techniques}
We now provide an overview of our main results. Some definitions and technicalities are omitted from this overview, but can be found in Section \ref{sec:model}. 

We consider settings with a finite set $M$ of $m$ indivisible and non-identical items.
There is a set of $n \ge 2$ agents (a.k.a. {\em agents}), denoted by  $V=[n]$, with each agent $v\in V$ having a \emph{valuation function} $f_v$ over sets of items.
The \emph{value} (or utility) of agent $v$ for a set $S\subseteq M$ is denoted by $f_v(S)$.
We always assume that any valuation $f$ is normalized ($f(\emptyset) = 0$) and non-decreasing ($f(S)\leq f(T)$ for $S\subseteq T \subseteq M$).
Given an allocation $A$, we use $A_v$ to denote the set of items allocated to agent $v$.

One question that we ask in this work is what is the largest class of dichotomous valuation functions for which one has a truthful deterministic allocation mechanism that enjoys good economic efficiency and fairness properties. Before presenting our results, let us briefly discuss its various ingredients.

{\bf Classes of valuation functions.} The dichotomous versions of some simple classes of valuation functions were considered in previous work (e.g., {\em unit demand} {(matching)~\cite{BM2004}, additive~\cite{Ortega2020} {and $0/1$ valued sets~\cite{BouveretL08}})}. 
We consider here the hierarchy of complement-free valuation functions introduced in~\cite{LLN}, whose four highest classes (in order of containment) are {\em gross substitutes} (GS), {\em submodular}, {\em XOS}, and {\em subadditive} (recall that both {\em unit demand} and {\em additive} are gross substitutes).
For valuations with dichotomous marginals, it can be shown that every submodular function is in fact a Matroid Rank Function (MRF), and hence also gross substitutes. 
We note that valuation functions may be used to express not only the preferences of the agents, but also constraints imposed by the allocator. In the museum example above, the museum may impose a restriction that no student can work in two shifts in the same day, and another restriction that no student can work in five shifts in the same week. If a student has an additive valuation function, then incorporating these constraints into her valuation function makes it submodular.

{\bf Economic efficiency.} We wish our allocations to maximize welfare, where the welfare of allocation $A$ is defined as $\sum_{v \in V} f_v(A_v)$. Restricting attention to non-redundant allocations (no item can be removed from {a set allocated to an agent without decreasing its value}), in the setting of dichotomous valuations, maximizing welfare is equivalent to allocating the maximum possible number of items.
Hence maximizing welfare can serve as a measure of economic efficiency not only from the point of view of the agents, but also from the point of view of the items (as in the museum guides example, where it is in the interest of the museum to fill as many shifts as possible). 

{\bf Fairness.} For allocation mechanisms without money it is customary to impose some fairness requirements. They come in many flavors. {\em Safety} guarantees (such as {\em proportionality}, {\em maximin share}\footnote{The maximin share of agent $p$ is the maximum value that could be given to the least happy agent if all agents had the same valuation function as that of $p$.}) promise the agent a certain minimum value, based only on the valuation function of the given agent and no matter what the valuation functions of other agents are. {\em Envy-freeness} guarantees ({\em envy free up to one good} (EF1), {\em envy free up to any good} (EFX)), ensure that every agent $v$ is at least as happy with her own bundle of goods as she would be with the bundle received by any other agent ({perhaps up to one good (EF1), or up to any good (EFX)}). {\em Egalitarian} guarantees ({\em lexicographically maximal} allocations, {\em Lorenz-dominating} allocations, maximizing Nash social welfare (NSW)) attempt to equalize the utilities of all agents (to the extent possible, given their valuation functions). Not all fairness notions are attainable in all settings, and in addition, there are settings in which two fairness notions that are attainable are not attainable simultaneously. For this reason, in our work we do not fix one particular fairness notion, but rather attempt to achieve a good mix of fairness properties.

{\bf Truthfulness.} We wish our mechanisms to have the property that reporting her true valuation function is a (weakly) dominant strategy for every agent. That is, for every agent $v$, whatever the reports of other agents are, if agent $v$ reports a valuation function different than $f_v$, the allocation she gets cannot have higher value to her, compared to the allocation when she reports $f_v$.

We now return to our question concerning the largest class of dichotomous valuation functions for which one has a truthful deterministic allocation mechanism that enjoys good economic efficiency and fairness properties.
We address this question in the framework of the hierarchy of complement-free valuation functions
defined in~\cite{LLN}. Our {first} main result shows that if the dichotomous valuation functions are submodular, then a deterministic mechanism that we refer to as {\em prioritized egalitarian} (PE) indeed satisfies the above requirements.

\begin{theorem}
	\label{thm:submodular-main-intro}
	The {\em prioritized egalitarian} (PE) mechanism has the following properties when agents have submodular dichotomous valuations:
	
	\begin{enumerate}
		\item \label{item:submodular-truthful} Being truthful is a dominant strategy.
		
		\item \label{item:submodular-welfare} When agents are truthful the allocation is welfare maximizing.
		
		\item \label{item:submodular-fair}  When agents are truthful, the allocation of the mechanism is a Lorenz dominating allocation, and consequently it enjoys additional fairness properties, including maximizing Nash social welfare, and being envy-free up to any item (EFX). If furthermore, the valuations are additive dichotomous, the allocation gives every agent at least her maximin {share.} 
		
		\item \label{item:submodular-poly}  If the valuations of agents have succinct representations\footnote{As a convention, throughout this paper we assume that a succinct representation allows the following: 1) computation of function values in polynomial time, and 2) verifying in polynomial time that the succinct representation indeed corresponds to a submodular dichotomous valuation.  }, then the mechanism can be implemented in polynomial time.\footnote{In the value queries model, our polynomial-time implementation is ex-post incentive compatible.}
		
		
	\end{enumerate}
	
\end{theorem}

In contrast, {we show that} if the valuation functions belong to the class XOS (one level higher than submodular in the hierarchy of~\cite{LLN}), then there is no truthful allocation mechanism (neither deterministic nor randomized) that maximizes welfare, {even if one disregards all fairness considerations}. For this and some other impossibility results see  Appendix~\ref{sec:XOS}.

The PE mechanism is based on first proving that in the setting of submodular dichotomous valuation functions there always is a {\em Lorenz dominating} allocation 
(exact definitions will follow, but at this point the reader may think a Lorenz dominating allocation as one that both maximizes welfare and equalizes as much as possible the number of items received by each agent). The PE mechanism imposes a priority order $\sigma$ among agents, and chooses a non-redundant\footnote{
In Appendix~\ref{app:all} we discuss the issue of non-redundancy, showing that the result of Theorem  \ref{thm:submodular-main-intro} is impossible to obtain when one insists on allocating all items (even undesired ones).
	Specifically, we show that there is no truthful deterministic allocation mechanism that always allocates all items, maximizes welfare and is EFX. This holds even for additive dichotomous valuations, and even for only two agents.}
 Lorenz dominating allocation (namely, it does not allocate items that give~0 marginal value to the agent receiving them), breaking ties among Lorenz dominating allocations in favor of higher priority agents. Proving economic efficiency and fairness properties for this mechanism is straightforward, given the fact that the output allocation is Lorenz dominating. The main technical content in the proof of Theorem~\ref{thm:submodular-main-intro} (beyond the proof that a Lorenz dominating allocation exists) is to show that the PE mechanism is truthful (for agents with submodular dichotomous valuations).
 
 In Section~\ref{sec:rpe} we consider a randomized variation of our PE allocation mechanism. This randomized mechanism first assigns the agents priorities uniformly at random, and then runs the PE allocation mechanism with the drawn priorities. We show that this mechanism achieves Envy-Freeness in expectation (ex-ante), is universally truthful and it obtains all the other good properties of the PE mechanism ex-post (a best-of-both-worlds result).


Armed with the above results for dichotomous valuations, we study whether our positive results are robust in face of slight violations of the dichotomous assumption. For simplicity of the presentation, consider the special case of additive dichotomous valuations. The dichotomous assumption models situations in which items of value~1 are ``desirable" whereas items of value~0 are not desirable, and an agent is indifferent among items that she finds desirable (and likewise, indifferent among items that she finds undesirable). A natural relaxation for the undesirable items is to allow them to have arbitrary non-positive value. It turns out that the PE mechanism (and other natural mechanisms that are not required to allocate all items) is robust to this relaxation of~0, because it only allocates an item to an agent if the agent reports a positive marginal value for the item. Consequently, we shall not bother with this relaxation (that only complicates terminology but has no effect on the results), and assume that undesirable items always have a value of~0.
A natural relaxation for the desirable items is to allow each of them to have an arbitrary value in the range $[1, 1 + \epsilon]$, for some small $\epsilon > 0$ (where the case $\epsilon = 0$ corresponds to dichotomous valuations). Consequently, an agent can attribute slightly different values to her desirable items, {and can \emph{strictly} prefer one desirable item over another}.
We require $\epsilon$ to be sufficiently small
so that the preference order that an agent has over sets of desirable items remains in favor of the larger set. We refer to this setting as that of {\em $\epsilon$-leveled} valuations\footnote{The term \emph{leveled} was introduced in~\cite{BabaioffNT2019} to refer to valuations in which a larger set is always preferred to a smaller one, and we adopt this term here.}. More generally, we have the notion of {\em $\epsilon$-dichotomous} valuations that can be applied also to submodular valuations (and not just additive ones), though we {defer the formal definition to Section~\ref{sec:model}.} 

Let $\rho = \frac{1}{1 + \epsilon}$. If the agents truthfully report their $\epsilon$-dichotomous valuations, one may round the value of each set down to the nearest integer, and by this obtain dichotomous valuations. Thereafter, one may allow the PE mechanism to choose an allocation, and obtain all the guarantees of the PE mechanism up to a multiplicative factor of $\rho$. Likewise, no agent can gain more than a $1 + \epsilon$ factor in her utility by providing an incorrect report to this mechanism (the agent cannot increase the number of desirable items that she receives, but she might possibly be able to manipulate the identity of these items), and hence the mechanism may be referred to as being {\em $\epsilon$-truthful}. However, strictly speaking, this mechanism is not truthful (for $\epsilon$-dichotomous valuations), because an agent who cares about small differences in her utility may indeed find it beneficial to misreport her valuation function. Moreover, once agents misreport their valuation functions, the welfare generated by the resulting allocation might be much smaller than the maximum welfare.

It turns out that there are fundamental limits on truthful mechanisms for $\epsilon$-dichotomous valuations. For the case of submodular (in fact, even just {\em unit demand}) $\epsilon$-dichotomous valuations we show that no truthful allocation mechanism (neither deterministic nor randomized) can approximate the maximum welfare with a ratio better than $\frac{1}{2}$. When restricting attention to $\epsilon$-leveled valuations, there are deterministic truthful mechanisms that generate at least a $\rho$-fraction of the maximum welfare (e.g., let every agent in order of priority select all items that she desires among remaining items). However, we show that for $\epsilon$-leveled valuations, there is no deterministic truthful mechanism that allocates all desirable items and satisfies the following weak fairness requirement: for settings with $n$ agents, if an agent reports $n$ items as desirable, the agent receives at least one of the reported items.  Moreover, there is no truthful allocation mechanism (neither deterministic nor randomized) that maximizes welfare when valuations are $\epsilon$-leveled.

The above impossibility results lead us to consider randomized allocation mechanisms for additive $\epsilon$-dichotomous valuations.
We require our randomized mechanisms to be {\em truthful in expectation} (TIE): misreporting a valuation function cannot increase the expected utility of a agent. TIE is a property that holds before the random allocation mechanism tosses its coins. In addition, our mechanisms preserve the $\epsilon$-truthfulness property mentioned above, and this property holds ex-post (even after the agent sees the coin tosses of the mechanism). As to economic efficiency and fairness properties, we relax them, being content with a $\frac{1}{1 + \epsilon}$ approximations of them. 
Importantly, we require these guarantees to hold in an {\em ex-post} manner, namely, for every realization of the randomness of the underlying randomized allocation mechanism.

Our {second} main result concerns a new randomized allocation mechanism that we refer to as $\meche$.

\begin{theorem}
	\label{thm:leveled-main-intro}
	Let $\epsilon<\frac{1}{nm^3}$ and $\rho = \frac{1}{1 + \epsilon}$. When all $n$ agents have $\epsilon$-leveled valuations,
	the randomized allocation mechanism $\meche$ has the following properties:
	\begin{enumerate}
		\item \label{leveled-truthful}
		{$\meche$ is truthful in expectation and ex-post $\epsilon$-truthful}.  
		\item \label{leveled-welfare}  If all agents are truthful, then the allocation output by $\meche$ provides at least a $\rho$-fraction of the maximum welfare (ex-post).
		\item \label{leveled-fair}  $\meche$ guarantees every truthful agent at least a $\rho$-fraction of her maximin share, and is $\rho$-EF1 (envy free up to one good, up to a multiplicative factor of $\rho$). These guarantees hold ex-post. Moreover, the expected utility received by a truthful agent is at least a $\frac{1}{n}$-fraction of her value for the grand bundle of all items (i.e., proportional in expectation). 
		\item \label{leveled-poly} The mechanism runs in polynomial time.	
	\end{enumerate}
\end{theorem}


Our randomized mechanism $\meche$ is based on the PE mechanism. It first rounds down all reported values to the nearest integer, thus obtaining dichotomous additive valuations.
As simply running the PE mechanism on the rounded valuation does not create a truthful mechanism, we modify the mechanism to obtain truthfulness using some randomization. This randomization involves two components. One, that is very natural from a fairness perspective, is to choose a priority order $\sigma$ uniformly at random (unlike the deterministic PE mechanism for which $\sigma$ is fixed in advance). The other component, a trick that we introduce and that may be of value also elsewhere, is to hold out at random either one or two of the items. For the items not held out, referred to here as the {\em main items}, $\meche$ allocates them using the PE mechanism with priority order $\sigma$. As to the items held out, $\meche$ allocates them using a priority based mechanism, but with a priority order $\sigma'$ that is the \emph{reverse} of $\sigma$. For the first item held out, among the agents that desire it (if there is any), the agent with highest priority (according to $\sigma'$) receives it, and her priority is reduced to being last. If there is also a second item that is held out, then it is allocated according to this new priority order.


Let us briefly explain the main argument why mechanism $\meche$  is truthful in expectation. Consider agent $v$ for which $D_v$ is the set of all {\em desirable} items of non-zero value (hence of value in the range $[1, 1 + \epsilon]$). Truthfulness of the deterministic mechanism PE (for dichotomous valuations) implies that for every outcome of the random coin tosses of $\meche$, reporting her true $D_v$ maximizes the number of desirable items that $i$ receives. Hence the most that $v$ can gain by a non-truthful report is an added value of $\epsilon|D_v| \le \epsilon m \le \frac{1}{nm^2}$. Hence even if there is probability of only $\frac{1}{nm^2}$ of losing a desirable item {by misreporting}, non-truthful reporting {becomes} inferior to truthful reporting. And indeed, the allocation rule for the held-out items is designed such that non-truthful reporting causes a loss of a desirable item with high enough probability, making such a report dominated.
We remark that the proof of truthfulness does not require $\sigma'$ to be the reverse of $\sigma$. The fact that one priority order is the reverse of the other is only used in establishing fairness properties of $\meche$ (item~\ref{leveled-fair} in Theorem~\ref{thm:leveled-main-intro}).

More details concerning our results appear in subsequent sections. Due to space limitations, most proofs (including the statements of some lemmas) are deferred to the appendix.

\subsection{Related Work}
\subsubsection{Previous Work}
{\bf Dichotomous preferences:}
The study of dichotomous preferences was initiated by 
Bogomolnaia and Moulin~\cite{BM2004}. They consider dichotomous matching problems (two-sided unit-demand preferences) and suggest the randomized Lorenz mechanism to get a probabilistic allocation that is fair in expectation.
Within the setting of one-sided markets, the paper of \cite{BM2004} addresses randomized mechanisms for unit-demand valuations. 
We consider the more general class of submodular valuations, and our main focus is on ex-post fairness. 
Dichotomous preferences have been further studied extensively in the literature
for mechanisms without money
\cite{BMS2005, Freitas2010, BouveretL08, KPS2018, Ortega2020}, auction design (with private value scaling) \cite{BLP2009,MishraR2013} and exchanges \cite{RuthSU2005, Aziz2020}.

Maybe the most closely related to our paper is the work of Ortega~\cite{Ortega2020} 
which studies the Multi-unit assignment problem (MAP) with dichotomous valuations.
MAP is a sub-class of the submodular class that slightly extends additive (but does not contain unit demand, for example).
The paper suggests picking a fractional "welfarist" solution (vector of fractional utilities) that is Lorenz dominating among those that maximize welfare. Being fractional, this corresponds to a randomized allocation mechanism rather than a deterministic one. Consequently, the notion of truthfulness used is that of being truthful in expectation. Moreover, the notion of truthfulness is further restricted there, and only allows to conceal desired items in the report, but not to report undesirable items as desired. Under this notion, the solution is strongly group strategyproof. In contrast, the larger class of submodular dichotomous valuations considered in our work contains unit-demand dichotomous valuations, for which no Pareto optimal deterministic allocation mechanism  is  
{strongly} group strategyproof~\cite{BM2004}. 
Being Lorenz dominating, the fractional solution enjoys multiple fairness properties. The work of~\cite{Ortega2020} does not explicitly address the question of to what extent these fairness properties are preserved ex-post, after the fractional solution is rounded to an integer solution.

{\bf Fairness:} The literature of fairness is too extensive to survey in this paper, so we only mention the most related papers. For a general introduction see \cite{BCELP2016,brams1996fair,moulin2004fair}.

Three types of fairness criteria are commonly studied: 

\textbf{Maximin:} 
Budish~\cite{Budish11} has introduced the notion of maximin fairness. 
\citet{KPW18} showed that even for the simple case of additive valuations, the maximin share cannot be given to everyone simultaneously.
The valuations used in the proof are $\epsilon$-leveled, implying  that we cannot aim for exact maximin fairness even for agents with $\epsilon$-leveled valuations.
Constant approximations to the maximin share were presented in \cite{Ghodsi+2018,GM2019}.

\textbf{EF1:} Envy-free up to one good (EF1) was defined by \citet{Budish11}.
EF1 allocations always exist and can be computed efficiently \cite{LiptonMMS04}. 
\citet{BLMS2019} studied the price of fairness for indivisible goods in terms and welfare, showing that welfare loss might be as large as $\Theta(n)$ if valuations are not restricted.
\citet{caragiannis2019unreasonable} proved that for positive additive utilities, a rule based on maximizing Nash social welfare finds an allocation that is both EF1 and Pareto optimal, yet, \citet{lee2017apx} showed that finding such  a maximal NSW allocation is APX-hard.
\citet{BKV18} developed a pseudo-polynomial time algorithm for finding allocations that are EF1 and Pareto efficient. When the valuations are bounded the algorithm runs in polynomial time.

\textbf{EFX:} Envy-free up to any good (EFX) was introduced by \citet{caragiannis2019unreasonable}.
\citet{plaut2018almost} showed existence of EFX allocations when there are two agents. They also exhibited an instance with two agents and items with zero marginals,  where no allocation is both Pareto optimal and EFX. 
Recent papers \cite{caragiannis2019envy,chaudhury2020little} showed that for additive agents there exists an EFX allocation that allocates almost all the items. \citet{amanatidis2020maximum} showed that when items have only two possible values then {NSW maximization} 
implies EFX, and therefore EFX allocations always exist.

{\bf Truthful Fair Allocation Mechanisms:} 
\citet{ABCM2017} characterized deterministic truthful allocation mechanisms for the case of two additive agents (with unrestricted values), implying strong fairness impossibilities.
\citet{ABM2017} studied deterministic truthful allocation mechanisms for approximating the maximin share for additive valuations.
Several papers \cite{mechanism,nguyen2012allocation} have presented  randomized truthful allocation mechanisms that are fair in expectation.
\citet{segal2019fair} studied truthful allocation mechanisms where items can be shared (fractionally allocated) between agents, and showed that the number of shared items can be made smaller than the number of agents.

\cout{
\textbf{More Truthful Fair Mechanisms without money:}\tecomment{I think these less fit, since they are basically on divisible items}
\begin{itemize}
\item \cite{MosselT10} give truthful in expectation randomized division mechanisms that give each player value that is at least $1/n$ of the total. they do it for divisible goods and then say for indivisible goods, when they are bounded and number of goods goes to infinity it converges to 1/k. \tecomment{no money but continues valuations and with every item worth at most a small $\epsilon$, not sure that this one is that interesting}
\item \cite{CGG2013} study mechanism design for proportional fairness with divisible goods and present a mechanism without payment that gives constant approximation. We have indivisible good (and for which constant approximation to proportional fairness is clearly impossible). (no money, divisible goods)
\end{itemize}
}

\textbf{Best-of-Both-Worlds:} 
\citet{freeman2020best}	presented
a recursive
probabilistic serial allocation mechanism for additive valuations.
They showed that ex-ante envy-freeness can be achieved in combination with EF1 ex-post. Moreover, 
they showed that achieving  EF ex-ante, and EF1 and PO ex-post is impossible. 
We, in contrast, are able to achieve all these properties (even for submodular valuations) as we consider dichotomous valuations.
\citet{aleksandrov2015online} considered allocation mechanism of additive dichotomous agents when items arrive online that is both EF ex-ante and EF1 ex-post. {We consider the offline setting, but for the more general submodular valuations case, and get stronger fairness guarantees (EFX, Lorentz domination). } 

\subsubsection{Independent and Concurrent Work}

We devise a deterministic allocation mechanism that is truthful, efficient, Lorenz-dominating and EFX fair.
We also show that a randomized variant of this allocation mechanism is universally truthful and  stochastically envy free, while being efficient and ex-post EFX fair for dichotomous submodular valuations. 
Recently there has been a surge in papers closely related to ours, and we next survey several recent works which present independent and concurrent research that is highly related to our paper.


Concurrent and independent of our work,
\citet{halpern2020fair}
devise an allocation mechanism for the class of \emph{additive} dichotomous valuations.
They show that their $\mbox{MNW}^{\mbox{tie}}$ deterministic allocation mechanism is EF1, PO, and weakly group strategyproof. 
The additive dichotomous valuations setting is a special case of our more general setting of submodular dichotomous valuations, and our PE mechanism and $\mbox{MNW}^{\mbox{tie}}$ are identical for this special case.
\citet{halpern2020fair} also obtain a ``best of both worlds" type result. They consider a randomized allocation mechanism based on rounding the fractional Nash Social Welfare maximizing allocation, and show that their mechanism is ex-ante weakly group strategyproof and ex-post PO and EF1. \citet{aziz2020simultaneously} reproves a similar result, using the same fractional allocation (and noting that it is in fact ex-ante strongly group strategy proof), but using a different rounding procedure. 
Our best-of-both-world result\footnote{The version of our paper posted in February 2020 did not contain Theorem~\ref{thm:RPE-main}, which was added to later versions in response to comments received on the previous version by Herve Moulin.  Likewise, it did not contain Appendix~\ref{sec:group-SP}, whose addition to later versions was motivated by the appearance of \cite{halpern2020fair} and \cite{aziz2020simultaneously}. The February 2020 version did contain the deterministic allocation mechanism for submodular dichotomous valuations, and the randomized allocation mechanism for $\epsilon$-dichotomous valuations (which can be viewed as a best-of-both-world result).}
 (Theorem~\ref{thm:RPE-main}) holds for submodular dichotomous valuations, whereas the results of \cite{halpern2020fair} and \cite{aziz2020simultaneously} hold only for dichotomous additive valuations. Moreover, our randomized mechanism is {\em universally} truthful (agents have no regret even {\em after} they see the realized allocation), whereas it is not known whether this 
property holds for the mechanisms of \cite{halpern2020fair} and \cite{aziz2020simultaneously}. For submodular dichotomous valuations, no mechanism can be simultaneously ex-ante strongly group strategy proof and ex-post EF1 (see Appendix \ref{sec:group-SP}), and in our work we do not attempt to achieve ex-ante weak strategy proofness. 





Another concurrent and independent work is of 
\citet{benabbou2020finding}, that shows how to find  
welfare-maximizing and EF1 allocations for dichotomous submodular valuations in a computational efficient way.
Their result is purely algorithmic and does not consider incentives, whereas  we solve the harder problem of designing a truthful mechanism that obtains all desired fairness and economic efficiency properties (while being computationally efficient).

\section{Model and Preliminaries} \label{sec:model}
We consider settings with a finite set $M$ of $m$ indivisible and non-identical items.
There is a set of $n \ge 2$ agents, denoted by  $V=[n]$, with each agent $v\in V$ having a \emph{valuation function} $f_v$ over sets of items.
The \emph{value} (or utility) of agent $v$ for a set $S\subseteq M$ is denoted by $f_v(S)$.
We always assume that a valuation function $f$ is normalized ($f(\emptyset) = 0$) and non-decreasing ($f(S)\leq f(T)$ for $S\subseteq T \subseteq M$).


\subsection{Valuations}
In this paper we consider several classes of valuation functions.
The \emph{marginal value} of item $a\in M$ given a set $S\subseteq M$ is defined to be $f(a|S)= f(S \cup \{a\}) - f(S) $.
Next we define some properties of valuation functions we will be using:
\begin{itemize}
	\item A valuation function $f$ is  \emph{dichotomous} if the marginal value of any item is either $0$ or $1$, that is, $f(a|S)= f(S \cup \{a\}) - f(S) \in \{0,1\}$ for every set $S\subseteq M$ and item $a\in M$.
	\item A valuation function is \emph{additive} if $f(S)+f(T)=f(S\cup T)$ for any disjoint sets $S,T\subseteq M$.	
	\item A valuation function $f$ is \emph{submodular} if $f(S \cup \{a\}) - f(S) \ge f(T \cup \{a\}) - f(T)$ for every pair of sets $S \subseteq T\subseteq  M$ and every item $a\in M$. 
	\item A valuation function $f$ is a \emph{Matroid Rank Function (MRF)} if there exists a matroid\footnote{{A \emph{matroid} $(\groundset,\indsets)$ is
		constructed from a non-empty ground set $\groundset$ and a nonempty family $\mathcal{I}$ of subsets of $\groundset$, called the \emph{independent}
		subsets of $\groundset$. $\indsets$ must be downward-closed (if $T \in \indsets$ and $S\subseteq T$, then $S \in \indsets$) and satisfy the	\emph{exchange property} (if $S,T\in \indsets$ and $|S|<|T|$, then there is some element $x \in T\setminus S$ such that  $S\cup \{x\}	\in \indsets$).
		The \emph{rank} of a set $S$ is the size of the largest independent set contained in $S$.
	} }
	 for which for every set $S$ it holds that  $f(S)$ is the rank of set $S$ in the matroid.
	\item For $\epsilon\geq 0$, a submodular valuation function $f$ is \emph{$\epsilon$-dichotomous} if the marginal value of any item is either $0$ or belongs to the set $[1,1+\epsilon]$, that is, $f(S \cup \{a\}) - f(S) \in \{0,[1,1+\epsilon]\}$ for every set $S\subseteq M$ and item $a\in M$.
	\item For $\epsilon\geq 0$, an additive valuation function $f$ that is $\epsilon$-dichotomous is called $\epsilon$-leveled.
\end{itemize}

A valuation function $f$ is {\em submodular dichotomous} if it is both submodular and dichotomous. It is easy to see that an MRF is submodular and dichotomous. For the converse direction, given a submodular dichotomous function $f$, consider the family $\mathcal{I}$ that contains those sets $S$ for which $f(S) = |S|$. This family is downward closed, because $f$ is dichotomous. Submodularity of $f$ implies that if $f(T) > f(S)$ there is an item $x \in (T \setminus S)$ for which $f(S \cup \{x\}) = f(S) + 1$. This in turn implies that $\mathcal{I}$ satisfies the exchange property. Hence $\mathcal{I}$ defines a matroid, and $f$ can be seen to be the rank function of this matroid.
Thus, for
brevity we will often refer to a submodular dichotomous valuation function as an MRF valuation.
An interesting special case of submodular dichotomous valuations are such valuations that are additive.
A valuation function $f$ is {\em additive dichotomous} if it is both additive and dichotomous.
Note that a $0$-leveled valuation is simply an additive dichotomous valuation.

For agent $v$ with a submodular valuation $f_v$, 
the set $D_v \subseteq M$ of \emph{items demanded by $v$} includes every item $a\in M$ such that $f_v(\{a\})>0$. 
Observe that for an additive valuation, the value $f_v(S)$ of the set $S$ is $f_v(S)= \sum_{a\in S} f_v(\{a\}) =\sum_{a\in S\cap D_v} f_v(\{a\})$.
If $f_v$ is both dichotomous and additive then $f_v(S)$ is simply $f_v(S) = |S \cap D_v|$ and we call the set $D_v$ \emph{the demand of $v$} (and due to additivity  the value of every item in the demand is independent of other items the agent receives). For item $a\in D_v$ we say that agent $v$ \emph{desires} (or demands, or wants) item $a$.

\subsection{Allocations}
\label{sec:allocations}

We consider mechanisms to allocate items in $M$ to the agents. As we assume that utilities cannot be transfered and there is no money, a mechanism will only specify the allocation function, mapping valuation functions to allocations.  We will mostly consider deterministic allocation functions.

An \emph{allocation} $(A_1,A_2,\ldots, A_n)$ with $A_v\subseteq M $ for every $v\in V$ and $\cup_v A_v\subseteq M$, is an  
assignment of items to agents, possibly leaving some items unallocated.
We denote by $A_v$ the set of items allocated to agent $v$ under allocation $A$. 
The \emph{value} (or \emph{utility}) of allocation $A$ for agent $v$ that has valuation function $f_v$ is $f_v(A_v)$. 

Fix some valuation functions $f=(f_1,f_2,\ldots, f_n)$.
The \emph{welfare of an allocation} $A$ given $f$ is $\sum_v f_v(A_v)$ and an allocation is \emph{welfare maximizing} if there is no other allocation with larger welfare. Note that a welfare maximizing allocation is Pareto optimal.
An allocation $A$ is called \emph{non-redundant} for $f$ if it does not give any agent an item for which she has no marginal value, that is,  for any agent $v$ and any item $a\in A_v$ it holds that $f_v(A_v)>f_v(A_v\setminus\{a\})$, {or equivalently, every strict subset of $A_v$ has strictly lower value for $v$. We note that for MRF valuation $f$, for any non-redundant set $S$ it holds that $f(S)=|S|$}.
{A non-redundant allocation has \emph{maximal size} with respect to $f$, if there is no other non-redundant allocation with respect to $f$ that allocates more items.
We say that an allocation is \emph{reasonable} for $f$ if it both non-redundant and has maximal size with respect to $f$.}
Note that if all agents have dichotomous additive valuations, 
any reasonable allocation is welfare maximizing. Additionally, if agents have $\epsilon$-dichotomous valuations, any reasonable allocation gets at least $\epsap$-fraction of the maximum welfare allocation (Observation \ref{obs:welfare_approx}).

\subsection{Mechanisms}
An allocation mechanism (without money) maps profiles of valuations to an allocation. That is, given valuation functions $f = (f_1, \ldots, f_n)$ an \emph{allocation mechanism $\mech$} outputs an allocation $A= \mech(f) = \mech(f_1, \ldots, f_n)$. We sometimes abbreviate and call an allocation mechanism simply a \emph{mechanism}.
A mechanism asks each agent to report a valuation function, getting a report $f'_v$ from each agent $v$, and allocates the items by running the mechanism on the reported valuations $(f'_1, \ldots, f'_n)$, that is, mechanism $\mech$ outputs   $A= \mech(f'_1, \ldots, f'_n)$. We are interested in mechanisms that are truthful, that is, give agents incentives to report their valuation function truthfully. A mechanism $\mech$ is \emph{truthful} if for every agent $v$, reporting $f_v$ is a weakly dominant strategy (maximizes her value given any reports of the other agents).

We say that a mechanism $\mech$ has property $P$ if for any input $f$, its output allocation $A=\mech(f)$ has property $P$.
For example, a mechanism is \emph{reasonable} if for any $f$ the allocation $A=\mech(f)$ is reasonable for $f$.

\subsection{Fairness}

The list below presents standard fairness conditions that one may desire.

\begin{enumerate}
	
	\item The {\em maximin share} of an agent $i$ with valuation $f_i$, denoted by $\mbox{maximin}(f_i)$,  is the maximum over all partitions of the items into $n$ disjoint bundles $S_1, \ldots, S_n$ of the minimum value according to $f_i$ of a bundle, $\min_{j\in [n]} f_i(S_j)$. The optimal partition depends on $f_i$. An allocation that gives each agent a bundle of value at least as high as his maximin share is called {\em maximin fair}.
	
	\item An allocation is {\em envy free} (EF) if every agent prefers the bundle that he himself received over every bundle that some other agent received. Formally,  for every $i \in [n]$, let $f_i$ denote the valuation function of agent $i$ and let $A_i$ denote the bundle received by agent $i$. Then an allocation is envy free if for all $i,j \in [n]$ it holds that $f_i(A_i) \ge f_i(A_j)$. In the context of allocation of indivisible goods, envy freeness is incompatible with economic efficiency. For example, if there is only one item and two agents who desire it, the only envy free solution is not to allocate the item at all. Consequently, the literature considers the following relaxed versions of envy freeness:
	
	\begin{enumerate}

		\item {\em Envy free up to one good} (EF1). The envy free condition is relaxed as follows: for all $i,j \in [n]$ either $f_i(A_i) \ge f_i(A_j)$, or \emph{there is an item} $e \in A_j$ such that $f_i(A_i) \ge f_i(A_j \setminus \{e\})$.
		
		\item {\em Envy free up to any good} (EFX). For all $i,j \in [n]$ either $f_i(A_i) \ge f_i(A_j)$, or \emph{for every item} $e \in A_j$ it holds that $f_i(A_i) \ge f_i(A_j \setminus \{e\})$. EFX is a stronger property than EF1.
		
	\end{enumerate}
	\item Given an allocation $A = (A_1, \ldots, A_n)$ and valuation functions $f = \{f_1, \ldots, f_n\}$, the {\em utility vector} is $u_{A,f} = (f_1(A_1), \ldots, f_n(A_n))$, and the sorted utility vector $s_{A,f}$ is a vector whose entries are those of $u_{A,f} $
	sorted from smallest to largest. We impose a {\em lex-min} order among sorted vectors, where $s_1 >_{lexmin} s_2$ if there is some $k \in [n]$ such that $s_1(k) > s_2(k)$ and for every $1 \le j < k$ we have that $s_1(j) = s_2(j)$. Given the valuation functions $f$, an allocation $A$ is maximal in the lex-min order if for every other allocation $A'$ we have that $s_{A,f} \ge_{lexmin} s_{A',f}$. We refer to such an allocation as a {\em lex-min allocation}. Given $f$, a lex-min allocation always exists (as the set of allocations is finite).
	
	\item Using notation as above, we also impose a {\em Lorenz domination} partial order over sorted vectors, where $s_1 \ge_{Lorenz} s_2$ if for every $k \in [n]$, the sum of first $k$ entries in $s_1$ is at least as large as the sum of first $k$ entries in $s_2$. A {\em Lorenz dominating allocation} is an allocation that Lorenz dominates every other allocation. Given the valuation functions $f$, a Lorenz dominating allocation need not exist, but if it does exist, then it is also a lex-min allocation.
	
	\item Given valuation functions $f$, an allocation $A$ will be referred to as {\em min-square} if it maximizes welfare ($\sum_i f_i(A_i)$), and conditioned on maximizing welfare, it minimizes $\sum_i (f_i(A_i))^2$. A min-square allocation always exists. It is an allocation that minimizes the variance of utilities among agents, conditioned on maximizing the welfare.
	
	\item Given valuation functions $f$, an allocation $A$ is said to maximize the {\em Nash Social Welfare} (NSW) if it maximizes the product $\prod_i f_i(A_i)$. (Formally, such an allocation maximizes NSW relative to the {\em disagreement point} of not allocating any item.) Given $f$, a maximum NSW allocation always exist, though it need not maximize welfare.
	
\end{enumerate}
Each of the notions of maximin share,  EF1 and EFX can be relaxed to hold only up to an multiplicative term of {$\alpha\in [0,1]$}, and we use the notation $\alpha$-maximin, $\alpha$-EF1 and $\alpha$-EFX to denote these approximate fairness notions in which an agent gets at least $\alpha$-fraction of the appropriate share (see Section \ref{app:approx-fair}  in the appendix for formal definitions).


\cout{
\subsection{OLD:}
================================== OLD:

We refer to a valuation function as a \emph{0/1 additive valuation} if it is additive, and each item has value either 0 or 1. We design a deterministic allocation mechanism that is truthful when players have 0/1 additive valuations (Note that utilities cannot be transfered and there is no money.)
The allocation of the mechanism is welfare maximizing and gives each truthful player at least his maximin share.

There is a set $V$ of $n \ge 2$ players and a set $M$ of $m$ items. Each player $v \in V$  has a \emph{demand set} $D_v \subseteq M$.
An \emph{allocation} $A : M \rightarrow V$ assigns items to players. We denote by $A_v$ the set of items allocated to player $v$ under allocation $A$.
The \emph{value of allocation} $A$ for player $v$ is $|A_v \cap D_v|$.
\mbedit{The \emph{welfare of an allocation} $A$ is $\sum_v |A_v \cap D_v|$ and it is \emph{welfare maximizing} if there is no other allocation of larger welfare.
	Note that for 0/1 additive valuations, any allocation that leaves no demanded item unallocated is welfare maximizing.}
\mbcomment{define "largest allocation" as the one that allocated the maximal number of desired items. Same as welfare maximizing when values are $0/1$ (but not for leveled). We want largest non-redundant allocations}
}






\section{Submodular Dichotomous Valuations}
\label{sec:submodular}
\cout{
Recall that there are $m$ items that need to be allocated to $n$ players.
In this section we consider a class of valuation functions that combines the dichotomous requirement with submodularity. A function $f$ is {\em submodular dichotomous} if the following properties hold:

\begin{itemize}

\item $f(\emptyset) = 0$.

\item $f(S \cup \{a\}) - f(S) \in \{0,1\}$ for every set $S$ and item $a$.

\item $f(S \cup \{a\}) - f(S) \ge f(T \cup \{a\}) - f(T)$ for every pair of sets $S \subset T$ and every item $a$.

\end{itemize}
}
In this section we consider valuation functions that are both dichotomous and submodular.
It is known (follows from the matroid exchange property)
that a function is submodular dichotomous if and only if it is a rank function of a matroid. Consequently, for brevity, we shall refer to submodular dichotomous valuations as MRFs (Matroid Rank Functions).


	In this section we prove our first main result:
\begin{theorem}
	\label{thm:MRF-main}
	There exists an allocation mechanism for agents with MRF valuations with the following properties:
	
	\begin{enumerate}
		\item \label{item:MRF-truthful} For agents with MRF valuations, being truthful is a dominant strategy.
		
		\item \label{item:MRF-welfare} When agents are truthful the allocation is welfare maximizing.
		
		\item \label{item:MRF-fair}  When agents are truthful, the allocation of the mechanism is a Lorenz dominating allocation, it is lex-min, it is min-square, and maximizes NSW. It is  also EFX (and hence also EF1). If furthermore, the valuations are additive dichotomous, the allocation is also maximin fair.
		
		\item \label{item:MRF-poly}  If the agents have MRF valuations with succinct representations, then the mechanism can be implemented in polynomial time.
		
		
	\end{enumerate}
	
\end{theorem}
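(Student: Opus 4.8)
\emph{Proof plan.} The mechanism I would analyze computes a \emph{Lorenz dominating allocation} and resolves ties by a fixed priority order $1 \prec 2 \prec \cdots \prec n$ on the agents (e.g., among all Lorenz dominating allocations output the one that is lexicographically smallest in the unsorted utility vector $(u_1,\dots,u_n)$, equivalently the one produced by the constructive procedure below with all choices pinned down by $\prec$). The proof then breaks into four largely independent pieces matching the four claims of Theorem~\ref{thm:MRF-main}, and essentially all the difficulty sits in claim~\ref{item:MRF-truthful}.

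First I would set up the structure of Lorenz dominating allocations for MRF valuations and prove claims~\ref{item:MRF-welfare} and~\ref{item:MRF-poly}. Since each $v_i$ is a matroid rank function, call an allocation \emph{clean} if $v_i(A_i)=|A_i|$ for every $i$, i.e.\ every bundle is independent in the corresponding matroid; one may restrict attention to clean allocations without loss (thin any bundle down to a maximum independent subset). A welfare-maximizing clean allocation is exactly a maximum common independent set in the union of the $n$ matroids, computable by standard matroid-union algorithms using independence oracles, which gives the polynomial-time claim for succinctly represented MRFs. To obtain Lorenz domination I would argue by a transfer/exchange argument: given any two clean allocations, the matroid exchange property lets one move items one at a time along an alternating path in the exchange graph from a higher-utility agent to a lower-utility agent, strictly shrinking their utility gap while keeping total welfare fixed; iterating shows that the set of achievable sorted utility vectors has a coordinatewise maximum, attained by a clean allocation admitting no such ``balancing augmenting path.'' Turning this into the algorithm (repeatedly find the poorest agent, search for an alternating augmenting/transfer path in the exchange graph, and freeze an agent once no path improves the sorted profile), with path choices dictated by $\prec$, simultaneously proves existence and gives the polynomial-time implementation. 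Claim~\ref{item:MRF-welfare} is then immediate, as the largest prefix sum of the sorted utility vector is the total welfare.

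For claim~\ref{item:MRF-fair}: Lorenz domination over \emph{all} feasible allocations immediately gives lex-min, min-square, and max NSW, since each of these objectives is optimized at the Lorenz-dominant point of the feasible set (a Lorenz-larger integer vector with the same sum is more spread out, hence has strictly larger product and strictly smaller sum of squares and is leximin-preferred, and Lorenz domination also forces maximum welfare, so we never trade welfare for balance). For EFX under dichotomous valuations I would prove the contrapositive: if agent $i$ EFX-envies $j$ via a relevant item $g\in A_j$ (one with $v_i(A_j)-v_i(A_j\setminus g)=1$ and $v_i(A_i)<v_i(A_j\setminus g)$), then cleanness together with the exchange property yields a (possibly chained) transfer of $g$ from $j$ toward $i$ that lexicographically improves the sorted utility vector, contradicting Lorenz domination; EF1 follows a fortiori. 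Finally, in the additive dichotomous special case, the maximin share of $i$ is essentially $\lfloor v_i(M)/n\rfloor$, and a maximally balanced welfare-optimal allocation hands every agent at least that threshold, which is a short self-contained computation.

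The main obstacle is claim~\ref{item:MRF-truthful}. Fix an agent $i$ with true valuation $v_i$ and an arbitrary misreport $v_i'$; let $A$ be the mechanism's output on the true profile and $B$ its output on $(v_i',v_{-i})$. I must show $v_i(B_i)\le v_i(A_i)$. The natural first move is ``cleaning'': let $S\subseteq B_i$ be a maximum $v_i$-independent subset, so $(S,B_{-i})$ is a clean allocation feasible under the true profile in which $i$ gets $v_i(B_i)$ and every other agent's utility is unchanged. The difficulty is that this allocation need not be Lorenz-comparable to $A$, because $B$ was balanced for the wrong profile, so one cannot immediately invoke optimality of $A$. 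The plan is instead to use a structural characterization of the priority-leximin output: for each level $k=1,2,\dots$ let $N_{\ge k}$ be the set of agents receiving utility at least $k$; one shows $N_{\ge 1}\supseteq N_{\ge 2}\supseteq\cdots$ and that $N_{\ge k}$ is obtained greedily from the top of the priority order by a matroid-union rank condition on the reported valuations, so that (i) the levels and membership decisions at levels above the one that ultimately involves $i$ are insensitive to $i$'s report in the direction that could benefit $i$, and (ii) agent $i$'s own reachable level, given $v_{-i}$, is monotone and is maximized by reporting $v_i$ truthfully. Combining this with a final exchange argument along an alternating path between $A$ and the cleaned allocation $(S,B_{-i})$ yields $v_i(B_i)\le v_i(A_i)$. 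Throughout, the exchange and augmenting-path operations must preserve independence in \emph{all} agents' matroids simultaneously, which is precisely where the matroid (submodular dichotomous) structure is essential rather than mere dichotomousness; I expect establishing the greedy level-set characterization together with its monotonicity/insensitivity in each agent's report to be the technically hardest part of the whole proof.
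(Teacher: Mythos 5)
Your overall architecture---restrict to clean (non-redundant) allocations, select a Lorenz dominating allocation, break ties by a fixed priority order, and derive welfare maximization and the fairness properties from Lorenz domination---matches the paper's Prioritized Egalitarian mechanism. The differences in parts 2--4 are mostly routes, not gaps: the paper proves existence via submodularity of the welfare function $W(S)=\max_A \sum_{i\in S} f_i(A_i)$ and the Dutta--Ray theorem rather than your exchange-graph balancing argument, and it encodes the priority order by adding auxiliary items of marginal value $i/n^2$ so that Lorenz domination with respect to the augmented valuations automatically performs the tie-breaking; your explicit lexicographic selection is in the same spirit. (Two small slips: a Lorenz-dominating integer vector with the same sum is \emph{less} spread out, not more, which is why NSW increases and the sum of squares decreases; and the welfare-maximizing clean allocation is a maximum independent set in the matroid \emph{union}, not a common independent set.)

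The genuine gap is in claim~\ref{item:MRF-truthful}, which you correctly identify as the crux but do not actually prove. Your argument rests on two unestablished structural claims: that the level sets $N_{\ge k}$ of the output are determined greedily down the priority order by a matroid-union rank condition on the \emph{reported} valuations, and that this structure is insensitive to agent $i$'s misreport above $i$'s own level while $i$'s reachable level is maximized by truth-telling. Neither is proved, and the second is precisely the kind of statement that fails for plausible-looking tie-breaking rules: Example~\ref{ex:Lorenz} in the paper exhibits a mechanism that always outputs a Lorenz dominating allocation with a fixed index-based priority yet is manipulable, because a misreport can change which agents are favored at a given level. So ``insensitivity in the direction that could benefit $i$'' cannot be taken for granted; it is a property of one particular selection rule and must be verified for it. The paper's proof instead isolates three concrete properties of its mechanism---faithfulness (reporting $f_{i|A_i}$ leaves $A_i$ unchanged), strong faithfulness (reporting $f_{i|A'}$ for $A'\subset A_i$ yields exactly $A'$), and monotonicity (enlarging the restriction set from $S$ to $T$ cannot decrease $f_i(A_{i|T})$)---and then chains a hypothetical sequence of reports $f'_v \to f'_{v|A'_v} \to f'_{v|B} \to f_{v|B} \to f_v$, where $B$ is a maximum $f_v$-independent subset of the deviation bundle $A'_v$, each link justified by one of the three properties, to conclude $f_v(A_v)\ge f_v(A'_v)$. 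To complete your proof you would need either to establish your level-set characterization together with its report-monotonicity for your specific tie-breaking rule, or to prove an equivalent decomposition; as written, the truthfulness claim is a plan rather than an argument.
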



One aspect of the proof of the theorem involves showing that a Lorenz dominating allocation exists. Lorenz domination can be shown to imply the desired welfare and fairness properties. However, simply picking an arbitrary Lorenz dominating allocation does not guarantee truthfulness (see Example~\ref{ex:Lorenz}). Hence a major part of the proof of the theorem is to show that a particular choice of a Lorenz dominating solution does ensure truthfulness.

\begin{example}
\label{ex:Lorenz}

Consider a setting with two items, three agents ($p_1$, $p_2$ and $p_3$), additive dichotomous valuations, and the following allocation mechanism $M$. Every agent is asked to report the set of items that she desires. Then, $M$ selects an allocation based on the following principles. One principle is that every reported item is allocated to some agent that reports it as desired. Another principle is that no agent gets both items, if there is another agent who wants at least one of the items. The combination of these two principles ensures that $M$ only picks Lorenz dominating allocations (if it knows the true valuation functions of the agents). In addition, $M$ has two tie breaking rules. The first rule is that agents who report just one item have priority over those who report two items. (Agents that report no item get no item and hence are ignored.) The second rule is that among those agents that report the same number of items, agents of lower index have higher priority than agents of higher index. Hence for example, if $p_1$ and $p_2$ report both items and $p_3$ reports only one item, then $p_3$ gets the item that she reports (by the first tie breaking rule), and $p_1$ gets the remaining item (by the second tie breaking rule). As another example, if all agents report both items, than $p_1$ and $p_2$ each get one item (by the second tie breaking rule). The combination of the two examples demonstrates that even though $M$ only picks Lorenz dominating allocations, it is not truthful. That is, if all agents desire all items and agents $p_1$ and $p_2$ report truthfully, then $p_3$ gains an item by falsely reporting that she desires only one item.
\end{example}

\subsection{Lorenz Dominating Allocations}\label{sec:Lorntz-exists}
The following proposition puts together several observations {regarding fairness properties of Lorenz dominating allocations}, most (if not all) of which are known.

\begin{restatable}{proposition}{LorenzIsGood}
\label{pro:LorenzIsGood}
Given any (normalized and monotone) valuation functions $f = (f_1, \ldots, f_n)$, a Lorenz dominating allocation, if it exists, also maximizes welfare, is lex-min, is min-square, and maximizes NSW. If moreover the valuation functions are MRFs, a Lorenz dominating allocation that is non-redundant (see definition of non-redundant in Section~\ref{sec:allocations}) is also EFX (and hence also EF1). If furthermore, the valuations are additive dichotomous, a Lorenz dominating allocation is also maximin fair.
\end{restatable}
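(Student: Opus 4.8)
The plan is to fix a Lorenz dominating allocation $A$, write $u=(u_1,\dots,u_n)$ for the vector whose entries are the realized utilities $f_1(A_1),\dots,f_n(A_n)$ sorted in non-decreasing order, and $S_k$ for its $k$-th prefix sum $\sum_{i=1}^k u_i$; by definition $S_k\ge S_k(B)$ for every allocation $B$ and every $k\in\{1,\dots,n\}$, where $S_k(B)$ is the analogous prefix sum of $B$. Welfare maximization is then immediate: it is the case $k=n$, since $S_n$ is the utilitarian welfare of $A$. For lex-min, suppose some allocation $B$ were strictly better in the lex-min order, and let $j$ be the smallest index at which $B$'s sorted utility vector has a larger entry than $u$ (the two vectors agree on coordinates $1,\dots,j-1$); then $S_j(B)>S_j$, contradicting Lorenz domination, so $A$ is lex-min.

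For min-square and NSW I would additionally use that $A$ maximizes welfare. Among the allocations that maximize welfare --- all of which share the total utility $S_n$ --- a Lorenz dominating allocation is simply a most balanced one: its sorted utility vector is majorized by that of every such allocation. As $t\mapsto t^2$ is convex, $A$ minimizes the sum of squared utilities among welfare-maximizing allocations, which is what min-square asks for. For NSW, a direct comparison of prefix sums (at the index equal to the number of agents receiving zero utility under $A$) shows that $A$ has at least as many agents with positive utility as any allocation $B$; and among all allocations attaining this maximum number of positive utilities, the prefix-sum inequalities say exactly that the vector of positive utilities of $A$ dominates that of the competitor in the generalized-Lorenz order, so by concavity of $\log$ the product of positive utilities is maximized at $A$. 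Hence $A$ maximizes Nash social welfare under the usual convention.

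The two substantive parts are EFX and, for additive dichotomous valuations, maximin fairness, and both follow from the same device: a violation is turned into a single-item transfer that strictly improves the Lorenz vector, contradicting Lorenz domination. Suppose $A$ is non-redundant but not EFX: some agent $i$ still envies $A_j$ after removing from it a good $g$ that matters to $i$, so (by dichotomousness) $f_i(A_j\setminus\{g\})=f_i(A_j)-1$ and $f_i(A_i)\le f_i(A_j)-2$. Since $f_i$ is a matroid rank function and $f_i\big(A_i\cup(A_j\setminus\{g\})\big)\ge f_i(A_j\setminus\{g\})>f_i(A_i)$, the augmentation property yields an item $h\in A_j\setminus\{g\}$ with $f_i(A_i\cup\{h\})=f_i(A_i)+1$. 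Transfer $h$ from $j$ to $i$: by non-redundancy $f_j(A_j\setminus\{h\})=|A_j|-1=f_j(A_j)-1$, so in the utility profile the pair $\big(f_i(A_i),f_j(A_j)\big)$ is replaced by $\big(f_i(A_i)+1,f_j(A_j)-1\big)$, and since $f_i(A_i)\le f_i(A_j)-2\le|A_j|-2=f_j(A_j)-2$ this is a Pigou--Dalton transfer between two entries at distance at least $2$, hence it strictly raises some prefix sum. The resulting allocation strictly Lorenz-dominates $A$, which is impossible; so $A$ is EFX, and a fortiori EF1.

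For maximin fairness with additive dichotomous valuations, let $D_i$ be the set of items desired by $i$ and $d_i=|D_i|$, so that $i$'s maximin share is $\mathrm{MMS}_i=\lfloor d_i/n\rfloor=:k$. Welfare maximization forces every item desired by someone to be held by one of its desirers, so the $d_i$ items of $D_i$ are partitioned among the agents with each holder desiring what it holds, giving $\sum_j|A_j\cap D_i|=d_i$. If $f_i(A_i)=|A_i\cap D_i|\le k-1$ then $\sum_{j\neq i}|A_j\cap D_i|\ge d_i-(k-1)\ge k(n-1)+1$, so some $j\neq i$ holds at least $k+1$ items of $D_i$; pick $g\in A_j\cap D_i$. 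Its holder must desire it, so $j$ desires $g$ and $f_j(A_j)=|A_j\cap D_j|\ge|A_j\cap D_i|\ge k+1$ (as $A_j\cap D_i\subseteq D_j$). Transferring $g$ from $j$ to $i$ raises $f_i(A_i)$ by $1$ and lowers $f_j(A_j)$ by $1$, and $f_i(A_i)\le k-1\le f_j(A_j)-2$, so exactly as above the result strictly Lorenz-dominates $A$, a contradiction; hence $f_i(A_i)\ge k=\mathrm{MMS}_i$ for all $i$. I expect the EFX and maximin steps to be the main obstacle: the crux is to certify that the one-item transfer is \emph{strictly} balancing, which needs both that a violation puts the envied/over-endowed agent ahead by at least $2$ (exactly where dichotomousness is used) and that the transfer lowers that agent's utility by exactly $1$ (non-redundancy for EFX, welfare maximization for maximin fairness); everything else reduces to the prefix-sum definition plus routine convexity.
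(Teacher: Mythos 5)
Your proof is correct and follows essentially the same route as the paper: welfare maximization, lex-min, min-square and NSW all come directly from the prefix-sum definition of Lorenz domination together with standard majorization/concavity facts, and EFX and maximin fairness are each reduced to a single-item Pigou--Dalton transfer between two utilities at distance at least two, which strictly improves some prefix sum of the sorted utility vector and so contradicts Lorenz domination. One small point to patch: the paper's EFX notion is ``envy-free up to \emph{any} item,'' whereas your case analysis only treats a removed good $g$ that has positive marginal value to the envious agent $i$ (which is exactly where you get the gap $f_i(A_i)\le f_i(A_j)-2$). If instead $g$ has zero marginal value to $i$ given $A_j\setminus\{g\}$, then $A_j$ is not independent in $i$'s matroid, so $f_i(A_j)\le |A_j|-1=f_j(A_j)-1$ by non-redundancy of $A_j$ for agent $j$, and the violation $f_i(A_i)<f_i(A_j\setminus\{g\})=f_i(A_j)$ again yields $f_i(A_i)\le f_j(A_j)-2$; the identical transfer then applies. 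With that one extra line the argument covers both cases, and everything else (including the pigeonhole counting for the maximin bound and the generalized-Lorenz/concavity step for NSW) is sound.
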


For MRF valuations Lorenz dominating allocations might not be maximin fair, but are approximately so.
\begin{restatable}{proposition}{LorenzIsBad}
\label{pro:LorenzIsBad}
There are MRF valuation functions with respect to which no Lorenz dominating allocation is maximin fair. For every collection of MRF valuation functions, in every Lorenz dominating allocation every agent gets at least half her maximin share.
\end{restatable}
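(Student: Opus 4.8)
\emph{The two parts call for different methods.} For the impossibility I plan to exhibit an explicit instance whose matroid valuations \emph{interlock}: I will pick matroids $M_1,\dots,M_n$ (with rank functions $f_1,\dots,f_n$), each of rank $v$ for some $v\ge 2$, so that (i) every agent $i$ can on her own partition the items into $n$ bundles each of $M_i$-rank $v$, so her maximin share equals $v$; while (ii) no allocation hands \emph{all} agents a bundle of rank $v$ at once (the items needed to do so are over-demanded), whereas some allocation gives $n-1$ agents a bundle of rank $v$ and the last agent a bundle of rank $v-1$. Given (i)--(ii), no allocation has a sorted value-profile strictly better (in the prefix-sum order) than $(v-1,v,\dots,v)$ and that profile is attained, so every Lorenz dominating allocation realizes exactly $(v-1,v,\dots,v)$ and hence hands some agent a bundle of rank $v-1<v$, below her maximin share. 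Verifying a candidate instance is then a finite case analysis; the real work is engineering obstruction (ii), and this is exactly where non-additivity matters: by Proposition~\ref{pro:LorenzIsGood} additive dichotomous valuations already admit a maximin-fair Lorenz allocation, and (relatedly) the example will need at least three agents.

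\emph{The $\tfrac12$-bound.} Let $A=(A_1,\dots,A_n)$ be an arbitrary Lorenz dominating allocation, fix an agent $i$, write $f=f_i$ and let $M$ be its matroid; let $t$ be agent $i$'s maximin share, witnessed by a partition $(T_1,\dots,T_n)$ of the item set $E$ with $f(T_\ell)\ge t$ for all $\ell$. If $f(A_i)\ge t$ we are done, so assume $s:=f(A_i)<t$. First I replace each $A_j$ by an $M_j$-independent subset of the same $f_j$-value; this changes no agent's value (hence preserves the sorted value-profile and Lorenz domination), keeps $f(A_i)=s$, and only enlarges the set $U$ of unallocated items, so henceforth $|A_j|=f_j(A_j)=:s_j$ for all $j$. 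The heart of the proof is two observations, each obtained by ruling out a Lorenz improvement. (a) If $g$ is unallocated, or lies in some $A_j$ with $s_j\ge s+2$, then $f(A_i\cup\{g\})=s$: otherwise $f(A_i\cup\{g\})=s+1$, and moving $g$ to agent $i$ (deleting it from $j$ in the second case, which costs $j$ exactly one unit since $A_j$ is independent) raises agent $i$'s value to $s+1$ and lowers no value below $s+1$ --- a value-equalizing (Pigou--Dalton) transfer, which strictly increases some prefix sum of the sorted value vector, contradicting that $A$ is Lorenz dominating. (b) Hence, with $Z:=A_i\cup U\cup\bigcup_{j:\,s_j\ge s+2}A_j$, we have $f(Z)=s$ (adjoining to $A_i$ items that individually do not raise its rank does not raise its rank), and every item outside $Z$ lies in $\bigcup_{j\in Q}A_j$ with $Q:=\{\,j\ne i:\ s_j\le s+1\,\}$, a set of at most $n-1$ agents.

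\emph{The estimate.} For each $\ell$, subadditivity and monotonicity of $f$ give $t\le f(T_\ell)\le f(T_\ell\cap Z)+f(T_\ell\setminus Z)\le f(Z)+|T_\ell\setminus Z|=s+|T_\ell\setminus Z|$, so $|T_\ell\setminus Z|\ge t-s$. Summing over $\ell$, using that the $T_\ell$ partition $E$ while $T_\ell\setminus Z\subseteq\bigcup_{j\in Q}A_j$,
\[
 n(t-s)\ \le\ \sum_{\ell=1}^{n}|T_\ell\setminus Z|\ =\ |E\setminus Z|\ \le\ \sum_{j\in Q}|A_j|\ =\ \sum_{j\in Q}s_j\ \le\ (n-1)(s+1).
\]
Rearranging yields $s\ge\frac{nt-n+1}{2n-1}$; since $2(nt-n+1)-(2n-1)(t-1)=t+1>0$, this bound strictly exceeds $\frac{t-1}{2}$, hence exceeds $\lceil t/2\rceil-1$, so --- $s$ being an integer --- $s\ge\lceil t/2\rceil\ge t/2$, i.e.\ agent $i$ gets at least half her maximin share. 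For this half of the proposition the only delicate step is observation (a): the trimming step is there precisely to make removing an item from a bundle cost its owner exactly one unit, so that the transfer is genuinely value-equalizing and \emph{strictly} Lorenz-improving; everything after (a)--(b) is mechanical. For the impossibility half, by contrast, the whole difficulty lies in constructing matroids that exhibit obstruction (ii).
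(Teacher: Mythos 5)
Your argument for the second claim (the $\tfrac12$ lower bound) is correct and complete. The two load-bearing steps both check out: (a) the exchange argument is a genuine strict Lorenz improvement in both cases (for an unallocated item the total strictly increases; for an item taken from an agent $j$ with $s_j\ge s+2$ the transfer is order-preserving, so the prefix sum at the position of agent $i$'s old value strictly increases), and the trimming to independent bundles is exactly what makes the donor lose precisely one unit; (b) the passage from ``each $g$ individually adds nothing'' to $f(Z)=s$ is the standard submodular/matroid-closure fact. The counting against the MMS partition then gives $n(t-s)\le(n-1)(s+1)$, and your integrality step correctly upgrades $s>\frac{t-1}{2}$ to $s\ge\lceil t/2\rceil$. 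This is essentially the same route as the paper's proof of that half (show that every item not spanned by $A_i$ must sit with an agent of value at most $s+1$, then count), and your explicit bound $s\ge\frac{n(t-1)+1}{2n-1}$ is a slightly sharper statement of it.

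The first claim, however, is not proved. You give a reduction --- \emph{if} there exist rank-$v$ matroids with common maximin share $v$ such that no allocation gives every agent a basis but some allocation realizes $(v-1,v,\dots,v)$, \emph{then} every Lorenz dominating allocation has exactly that profile and fails MMS --- and that reduction is fine. But the entire content of the claim ``there are MRF valuation functions with respect to which no Lorenz dominating allocation is maximin fair'' is the existence of such an instance, and you explicitly defer ``engineering obstruction (ii).'' This is not a routine verification one can wave at: the obvious small candidates fail (with two agents, or with rank-$1$ or purely additive valuations, a Lorenz dominating allocation that is MMS always exists, as you note via Proposition~\ref{pro:LorenzIsGood}), and it is not even evident that your specific template (all matroids of rank exactly $v$, all maximin shares equal to $v$) is realizable at all --- the counterexample need not have that symmetric form. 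The paper discharges this claim by exhibiting a concrete instance and checking it; without an explicit construction, your write-up establishes only the second sentence of the proposition.
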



In view of Propositions~\ref{pro:LorenzIsGood} and~\ref{pro:LorenzIsBad}, we choose Lorenz domination as our fairness requirement. As we shall see in Theorem~\ref{thm:LorenzDuttaRay}, in our setting of MRF valuations, a Lorenz dominating allocation always exists, and often, more than one such allocation exists. For example, if there is only one item and all agents desire it, then allocating the item to any of the agents is a Lorenz dominating allocation. We now wish to address truthfulness of the allocation mechanism. This will be achieved by implementing a particular choice among Lorenz dominating allocations. This choice will be guided by two principles.

The first principle is that the allocation will be {\em non-redundant}. Namely, for every agent, the allocation is such that the set of items given to the agent does not contain redundant items that give the agent no marginal value. In our setting, this is equivalent to requiring that the set of items received by an agent forms an independent set in the matroid underlying the MRF of the agent.

The second principle is that of imposing some arbitrary {\em priority} order among the agents, fixed independently of their valuations. Among the possibly many Lorenz-dominating allocations that may exist, we choose one that favors the higher priority agents as much as possible. Still, there may be several different allocations that satisfy this condition, but any two of them will be equivalent in terms of the utilities that the agents (who have MRF valuation functions) derive from them.

W.l.o.g., let the priority order be such that agent $i$ has priority $i$ (agent~1 has highest priority, agent~$n$ has lowest priority). A convenient mathematical way to reason about the priority order is as follows. Add to the instance $n$ auxiliary items $a_1, \ldots ,a_n$. For every agent $i \in [n]$, pretend that the marginal value of item $a_i$ is $\frac{i}{n^2}$ to agent $i$ (regardless of any other items that agent $i$ may hold), and the marginal value of $a_j$ with $j \not= i$ is~0. With the auxiliary items, the new valuations $f' = (f'_1, \ldots, f'_n)$ of agents satisfy $f'_i(S) = f_i(S \cap M) + \frac{i}{n^2}|S \cap \{a_i\}|$. They are not MRFs (because the marginals of auxiliary items are not in $\{0,1\}$), but they are still gross substitutes (because each $f'_i$ is a sum of a gross substitute function $f_i$ on the original items and a gross substitute function on the auxiliary items). In every welfare maximizing allocation, for every $i\in [n]$, item $a_i$ is given to agent $i$. Given the auxiliary item, when allocating the original items, a Lorenz dominating allocation will break ties in favor of higher priority agents, as they derive less value from the auxiliary items.

\begin{restatable}{theorem}{LorenzDuttaRay}
\label{thm:LorenzDuttaRay}
Given MRF valuations $f = (f_1, \ldots, f_n)$ and the auxiliary items (giving rise to new valuations $f' = (f'_1, \ldots, f'_n)$), there is a Lorenz dominating allocation $A'$. Moreover there is a unique vector of utilities (the vector $u_{A',f'} = (f'_1(A'_1), \ldots, f'_n(A'_n))$ shared by all Lorenz dominating allocations.
Removing the auxiliary items from the Lorenz dominating allocation results in an allocation $A$ that is Lorenz dominating with respect to the original MRF valuations.
\end{restatable}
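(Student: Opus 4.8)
The plan is to identify the set of feasible utility profiles with the integer points of an integral polymatroid, and then invoke the fact that the base polytope of such a polymatroid has a unique Lorenz‑dominating integral point — the polymatroid analogue of the Dutta--Ray egalitarian solution, equivalently Fujishige's lexicographically optimal base. The auxiliary items will play exactly the role of the tie‑breaking weight vector in that theorem, so that the particular lexicographically optimal point is the utility vector whose uniqueness the theorem asserts.

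First, since the total utility is the last prefix sum in the definition of Lorenz domination, any Lorenz‑dominating allocation is welfare maximizing (also recorded in Proposition~\ref{pro:LorenzIsGood}), so it suffices to restrict attention to welfare‑maximizing allocations. For $f'$ the auxiliary contribution to welfare, $\sum_i \tfrac{i}{n^2}[a_i\in A_i]$, is maximized exactly when every $a_i$ goes to agent $i$, and is otherwise irrelevant to how the original items are split; hence the welfare‑maximizing $f'$‑allocations are precisely ``give $a_i$ to $i$ for all $i$, and allocate $M$ by a welfare‑maximizing $f$‑allocation''. Writing $c=(\tfrac{1}{n^2},\dots,\tfrac{n}{n^2})$, the set of utility profiles of welfare‑maximizing $f'$‑allocations is therefore $\{x+c : x\in\mathcal U\}$, where $\mathcal U$ is the set of utility profiles of welfare‑maximizing $f$‑allocations.

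The structural core is the claim that $\mathcal U = B(g)\cap\mathbb Z^n$, where, writing $r_i=f_i$ for the rank function of $\mathcal M_i$, the function $g(T)$ is the maximum welfare the agents of $T$ can extract from all of $M$ — equivalently, by the matroid union theorem, the rank of $\bigvee_{i\in T}\mathcal M_i$, so $g(T)=\min_{U\subseteq M}\bigl(|M\setminus U|+\sum_{i\in T}r_i(U)\bigr)$ — and $P(g)=\{y\ge 0:\sum_{i\in T}y_i\le g(T)\ \forall T\}$ is a polymatroid with base polytope $B(g)=\{y\in P(g):\sum_i y_i=g([n])\}$. That every welfare‑maximizing $f$‑profile lies in $B(g)\cap\mathbb Z^n$ is immediate from the definition of $g$ (disjoint bundles of a sub‑coalition form a feasible split for that sub‑coalition) together with integrality of the $r_i$. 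The converse needs two ingredients: (i) submodularity of $g$, which follows by an uncrossing argument on the displayed min‑formula using submodularity of each $r_i$, so that $B(g)$ really is the base polytope of an integral polymatroid; and (ii) that every integral point of $B(g)$ is realized by an actual allocation, which is the standard exchange argument underlying matroid union. Point (ii) (equivalently, recognizing $\mathcal U$ as the integer points of a polymatroid base polytope) is the step I expect to require the most care.

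Granting this, let $x^\ast$ be the minimizer over $B(g)$ of $\sum_i\psi(x_i+c_i)$ for a fixed strictly convex $\psi$; it is unique (strict convexity over a polytope), it is integral and lies in $\mathcal U$ (this is the Dutta--Ray/Fujishige lexicographically optimal base, with $c$ acting as the tie‑breaking weight), and $x^\ast+c$ Lorenz‑dominates $y+c$ for every $y\in P(g)$ — in particular for every welfare‑maximizing $f'$‑profile. Hence any welfare‑maximizing $f'$‑allocation realizing $x^\ast$ (with $a_i\to i$) is Lorenz dominating; call it $A'$. Uniqueness of the utility vector is then automatic: two Lorenz‑dominating profiles are Lorenz‑equivalent, i.e.\ have the same sorted coordinate multiset, but the coordinates of $y+c$ and $y'+c$ (with $y,y'$ integral) have the $n$ distinct fractional parts $c_1,\dots,c_n$, so matching sorted coordinates forces $y=y'$. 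Finally, removing the auxiliary items from $A'$ yields an allocation $A$ with $f$‑profile $x^\ast$; for any feasible $f$‑profile $y\in P(g)$ we know $x^\ast+c$ Lorenz‑dominates $y+c$, and since $x^\ast,y$ are integral while $\|c\|_1<1$, a strict prefix‑sum advantage of $y$ over $x^\ast$ would entail one of $y+c$ over $x^\ast+c$, a contradiction. Thus $x^\ast$ Lorenz‑dominates every feasible $f$‑profile, $A$ is Lorenz dominating with respect to the original MRF valuations, and the remaining fairness/welfare consequences follow from Proposition~\ref{pro:LorenzIsGood}.
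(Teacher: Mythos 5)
Your overall architecture is the same as the paper's: reduce to the submodularity of the coalition-wise maximum-welfare function (your $g$ is exactly the paper's welfare function $W$, which the paper shows is submodular via a gross-substitutes lemma and yours via matroid union), obtain a Lorenz-dominating point of the resulting polymatroid structure in the spirit of Dutta--Ray, and then handle uniqueness via the distinct fractional parts $c_1,\dots,c_n$ and the removal of auxiliary items via the integrality-versus-$\|c\|_1<1$ gap --- those last two steps are essentially verbatim the paper's arguments. However, there is a genuine error in your existence step. You take $x^\ast$ to be the minimizer of $\sum_i\psi(x_i+c_i)$ over the \emph{continuous} base polytope $B(g)$ and assert that it ``is integral and lies in $\mathcal U$.'' This is false, and not just in degenerate cases: with $n=2$, one item desired by both agents, we have $g(\{1\})=g(\{2\})=g(\{1,2\})=1$, $B(g)$ is the segment from $(1,0)$ to $(0,1)$, and for any strictly convex $\psi$ the minimizer satisfies $\psi'(x_1+c_1)=\psi'(x_2+c_2)$, i.e.\ $x_1=\tfrac12+\tfrac{c_2-c_1}{2}\in(0,1)$. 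So $x^\ast$ is never integral here, and hence is realized by no allocation; the Fujishige lexicographically optimal base is in general a fractional point of the polytope, and your appeal to ``strict convexity over a polytope'' for uniqueness only makes sense for that fractional point.

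The crux you actually need --- and the step your proof currently skips --- is that among the \emph{integral} bases of the integral polymatroid (equivalently, among the utility profiles of welfare-maximizing allocations) there exists one that Lorenz-dominates all the others. This does not follow from convex minimization over the polytope; it relies on the exchange property of integral polymatroid bases (M-convexity): if an integral base $x$ is not Lorenz-dominating, one finds coordinates $i,j$ with $x_i+c_i$ sufficiently larger than $x_j+c_j$ such that moving one unit from $i$ to $j$ stays in the base set and strictly improves the sorted prefix sums. That discrete egalitarian/dec-min result (or the paper's direct citation of Dutta--Ray for the allocation setting, after establishing submodularity of $W$) is what should replace your $x^\ast$. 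Once you have a Lorenz-dominating \emph{integral} base, your remaining steps --- domination of all of $P(g)$ via coordinatewise domination by some base, uniqueness from the distinct fractional parts, and the transfer back to $f$ --- go through as written. Also note that your claim that every integral point of $B(g)$ is achieved by an allocation (which you correctly flag as delicate and which does follow from matroid union) is needed only for the converse inclusion $B(g)\cap\mathbb Z^n\subseteq\mathcal U$; the direction you really use, once the existence step is repaired inside $\mathcal U$ itself, is the easy one.
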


\begin{proof}
Consider the following  function $W$ that we shall refer to as a {\em welfare function}. Given a set $M$ of indivisible items, a set $V$ of $n$ agents, and valuation functions $f_1, \ldots, f_n$, the function $W$ is a set function defined over the agents. Given a set $S \subseteq V$, $W(S)$ is the maximum welfare attainable by the set $S$. Namely, $W(S) = \max_{A = (A_1, \ldots, A_n)}[\sum_{i\in S} f_i(A_i)]$. Lemma~\ref{cor:GStoSubmodular} shows that given our valuation functions $f' = (f'_1, \ldots, f'_n)$ (the MRFs, augmented with the auxiliary items), the respective welfare function $W$ is submodular.
Dutta and Ray~\cite{DR89} prove that if $W$ is submodular, then a Lorenz dominating allocation exists. Consequently, with our valuation functions $f'$, a Lorenz dominating allocation $A'$ exists. (At this point we are only concerned with the existence of a Lorenz dominating allocation. Algorithmic aspects are deferred to the proof of Theorem~\ref{thm:MRF-poly-time}.)
Uniqueness of the vector of utilities is a consequence of the fact that with the auxiliary items, the utility of an agent uniquely identifies the agent. Consequently, any two different vectors of utilities give two different sorted vectors. The sorted vector of a Lorenz dominating allocation is unique (by definition of the Lorenz domination partial order), and hence the (unsorted) vector is also unique.

Removing the auxiliary items from a Lorenz dominating allocation $A'$ gives an allocation $A$ that is Lorenz dominating with respect to the original MRF valuations $f$. For the sake of contradiction, suppose otherwise, that there is some allocation $B$ such that $A$ does not Lorenz dominate $B$. Then there is some $k \le n$ such that the sum of the first $k$ terms of the sorted vector of $B$ is larger than the sum of the first $k$ terms of the sorted vector of $A$. As the values of both sums are integer, the difference between the two sums is at least~1. Consequently, even $A'$ does not Lorenz dominate $B$, because the total contribution of auxiliary items is at most $\sum _{i=1}^n \frac{i}{n^2} < 1$, contradicting the assumption that $A'$ was Lorenz dominating with respect to $f'$.
\end{proof}

\subsection{The Prioritized Egalitarian (PE) Mechanism}\label{sec:PE-mech}
We can now present our allocation mechanism for MRF valuations, that we refer to as the {\em prioritized egalitarian (PE) mechanism}. We assume for this purpose that each MRF $f_i$ has a succinct representation (of size polynomial in the number $m$ of items) such that given this representation, for every $S$ one may compute $f_i(S)$ (answer {\em value queries}) in time polynomial in $m$.

\begin{enumerate}

\item The mechanism imposes an arbitrary priority order $\sigma$ among the agents. For simplicity and without loss of generality, we assume that the order is from~1 to $n$, where agent~1 has highest priority.

\item Every agent is requested to report his MRF to the mechanism. A report that is not an MRF (or failure to provide a report at all) is considered illegal, and is replaced by the MRF that is identically 0 (and consequently, the non-redundant allocation will not give such an agent any item).

\item Given the reported MRF valuation functions $r_1, \ldots, r_n$, the mechanism computes a non-redundant Lorenz dominating allocation $A = (A_1, \ldots, A_n)$ with respect to these reports and $\sigma$ (as implied by Theorem~\ref{thm:LorenzDuttaRay}), and gives each agent $i$ the respective set $A_i$.

\end{enumerate}

We now show that the PE mechanism is truthful. For this we introduce some notation. Given a valuation function $f_i$ and a set $D$ of items, we use $f_{i|D}$ to denote the function $f_i$ restricted to the items of $D$. Namely, for every set $S$, $f_{i|D}(S) = f_i(S \cap D)$. We note that if $f_i$ is an MRF, then so is $f_{i|D}$. Truthfulness will be a consequence of the following properties of the allocation mechanism.

\begin{itemize}

\item We say that an allocation mechanism is {\em faithful} if the following holds for every collection $f = (f_1, \ldots, f_n)$ of valuation functions and for every agent $i$. Let $A_i$ denote the allocation of the mechanism to agent $i$ when the reported valuation functions are $f$. Then if instead agent $i$ reports valuation function $f_{i|A_i}$ (and {the reports of the other agents} 
remain unchanged), then the allocation to agent $i$ remains $A_i$. We say that an allocation mechanism is {\em strongly faithful} if it is faithful, and in addition, for every set $A' \subset A_i$, if agent $i$ reports valuation function $f_{i|A'}$ (and {the reports of the other agents} 
remain unchanged), then the allocation to agent $i$ becomes $A'$.

\item 
We say that an allocation mechanism is {\em monotone} if the following holds for every collection $f = (f_1, \ldots, f_n)$ of valuation functions, every agent $i$, and every two sets of items $S$ and $T$ with $S \subset T$. Let $A_{i|S}$ denote the allocation of the mechanism to agent $i$ when the reported valuation function for agent $i$ is $f_{i|S}$ and the remaining reports are as in $f$. Then if instead agent $i$ reports a legal (see remark that follows) valuation function $f_{i|T}$ (and the remaining reports remain unchanged), then the allocation $A_{i|T}$ to agent $i$ satisfies $f_i(A_{i|T}) \ge f_i(A_{i|S})$. ({\em Remark.} It may happen that $f_i$ is not an MRF, $f_{i|T}$ is not an MRF, but $f_{i|S}$ happens to be an MRF. In this case the PE mechanism might produce a nonempty $A_{i|S}$ and an empty $A_{i|T}$, violating the inequality $f_i(A_{i|T}) \ge f_i(A_{i|S})$. For this reason we do not impose the monotonicity condition if the valuation function $f_{i|T}$ is illegal with respect to the underlying allocation mechanism.)

\end{itemize}

We next prove that the PE mechanism is truthful. Some of the claims used in the proof are presented and proved in the appendix.
\begin{theorem}
The PE mechanism is truthful for agents with MRF valuations. Namely, for every agent with an MRF valuation, reporting her true valuation function maximizes her utility, for any reports of the other agents.
\end{theorem}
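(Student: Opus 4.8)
The plan is to derive truthfulness from two structural properties of the PE mechanism, together with the elementary fact (noted above) that the restriction $f_{i|D}$ of an MRF to any set $D$ is again an MRF and hence a legal report. The two properties are that the PE mechanism is \emph{strongly faithful} and that it is \emph{monotone}, both in the senses defined above. These are the real content of the argument; I would first establish them as lemmas (using, in part, the auxiliary claims deferred to the appendix) and then combine them. I expect monotonicity --- and, to a lesser extent, strong faithfulness --- to be the main obstacle, since each requires controlling how the Lorenz dominating allocation (whose utility vector is unique by Theorem~\ref{thm:LorenzDuttaRay}) reacts when a single agent's reported matroid is shrunk to, or enlarged from, a prescribed independent set.

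Granting the two properties, truthfulness would follow by the following short argument. Fix an agent $i$ with true MRF $f_i$ and arbitrary fixed reports of the other agents. If $i$'s deviating report $r_i$ is illegal she receives nothing, so her truthful utility $f_i(A_i) \ge 0$ already dominates; hence I may assume $r_i$ is an MRF, and let $B_i$ be the bundle she then gets. Since the mechanism's allocation is non-redundant with respect to $r_i$, the set $B_i$ --- and therefore every subset of it --- is independent in the matroid of $r_i$. I then choose an $f_i$-basis $B_i' \subseteq B_i$ of $B_i$, so that $|B_i'| = f_i(B_i)$ and $B_i'$ is independent in the matroid of $f_i$. For every $S$ this gives
\[
(r_i)_{|B_i'}(S) = r_i(S \cap B_i') = |S \cap B_i'| = f_i(S \cap B_i') = f_{i|B_i'}(S),
\]
so the reports $(r_i)_{|B_i'}$ and $f_{i|B_i'}$ are the same function. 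Applying strong faithfulness to the collection $(r_i;\text{others})$ with $B_i' \subseteq B_i$, I conclude that if $i$ instead reports $f_{i|B_i'}$ she receives exactly $B_i'$, of true value $|B_i'| = f_i(B_i)$ (the case $B_i' = B_i$ only needs plain faithfulness, and still yields value $f_i(B_i)$). Finally I apply monotonicity with $S = B_i'$ and $T = M$: the report $f_{i|M}$ is the truthful report $f_i$ and is legal, and $B_i' \subseteq M$, so monotonicity yields $f_i(A_i) = f_i(A_{i|M}) \ge f_i(A_{i|B_i'}) = f_i(B_i') = f_i(B_i)$. Hence truthful reporting is at least as good for agent $i$ as any deviation $r_i$, which is the theorem.

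It remains to prove the two lemmas, and this is where I expect the work to concentrate. For both I would reason through the welfare function $W(S) = \max_A \sum_{i \in S} f'_i(A_i)$ of the priority-augmented valuations $f'$, which is submodular by Lemma~\ref{cor:GStoSubmodular}; this pins down the Lorenz dominating allocation and its unique utility vector in terms of $W$. Strong faithfulness then reduces to showing that replacing $f_i$ by its restriction to (a subset of) the bundle that $i$ currently receives perturbs $W$ in a way that forces $i$'s new bundle to be exactly that (sub)set --- intuitively, lowering the marginals of items outside a bundle that was already non-redundant does not change which allocation the prioritized egalitarian rule selects for the remaining agents, while zeroing out the value of items of the bundle itself makes them unwanted. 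Monotonicity reduces to showing that enlarging $i$'s reported matroid can only weakly increase the $f_i$-value of $i$'s bundle; here I would compare the two Lorenz dominating allocations by a matroid-exchange/augmenting-path manipulation (the bundle under the larger report being obtainable from the bundle under the smaller one by swaps that never lose $f_i$-rank), again appealing to the uniqueness of the utility vector to rule out that the change in other agents' allocations reverberates back onto $i$ unfavorably. These exchange-type manipulations of Lorenz dominating allocations are the auxiliary claims I would isolate in the appendix and cite at the relevant points.
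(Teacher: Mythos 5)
Your proposal is correct and takes essentially the same approach as the paper: the paper likewise reduces truthfulness to faithfulness, strong faithfulness, and monotonicity (all proved in its appendix), and chains the same sequence of report changes --- deviation, restriction via (strong) faithfulness to an $f_i$-independent basis $B$ of the received bundle, the observation that the restricted deviating report coincides with $f_{i|B}$, and a final application of monotonicity back to the full truthful report. No substantive difference.
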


\begin{proof}
Consider an arbitrary agent $v$. Fix the reported valuation functions of all other agents. All these reported valuation functions {can be assumed to be} 
MRFs, because the PE mechanism replaces every non-MRF reported function by the all~0 MRF. Let $f_v$ be the MRF valuation of $v$. Let $A_v$ denote the set of items that $v$ receives when reporting $f_v$. Suppose now that instead $v$ reports a different valuation function $f'_v \not= f_v$, and receives an allocation $A'_v$. We need to show that $f_v(A_v) \ge f_v(A'_v)$.

We may assume that $f'_v$ is an MRF, as otherwise $v$ gets no item and $f_v(A_v) \ge 0 = f_v(\emptyset)$. Change the report $f'_v$ to $f'_{v|A'_v}$. By faithfulness (Proposition~\ref{pro:faithful}) the allocation to $v$ remains $A'_v$. Let $B \subseteq A'_v$ be a subset of smallest cardinality for which $f_v(B) = f_v(A'_v)$.
Necessarily, $|B| = f_v(A'_v) = f_v(B)$, and $B$ is a maximum size subset of $A'_v$ that is independent with respect to the matroid underlying $f_v$.
Change the report $f'_{v|A'_v}$ to $f'_{v|B}$. By strong faithfulness (Lemma~\ref{lem:strongfaith}) the allocation to $v$ becomes $B$. Now change the report $f'_{v|B}$ to $f_{v|B}$. This changes nothing because as functions $f'_{v|B} = f_{v|B}$ (both $f'_v$ and $f_v$ give value~1 to items of $B$, value~0 to other items, and are additive over $B$ -- for $f'_v$ additivity follows because the allocation is non-redundant, and for $f_v$ because $B$ was chosen to be an independent set of the matroid), and hence the allocation to $v$ remains $B$. Finally, change the report $f_{v|B}$ to $f_v$. By monotonicity (Lemma~\ref{lem:monotone}), the resulting allocation to $v$ (which is now simply $A_v$) has value to $v$ at least as high as $B$ does. We conclude that $f_v(A_v) \ge f_v(B) = f_v(A'_v)$, as desired.
\end{proof}

Finally, we prove that the PE mechanism can be computed efficiently.

\begin{restatable}{theorem}{MRFpolytime}\label{thm:MRF-poly-time}
If the agents have MRF valuations with succinct representations, then the PE mechanism can be implemented in polynomial time.
\end{restatable}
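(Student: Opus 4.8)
\emph{Proof plan.} By Theorem~\ref{thm:LorenzDuttaRay} it suffices to compute, in polynomial time, a non-redundant Lorenz dominating allocation $A'$ for the augmented valuations $f'=(f'_1,\dots,f'_n)$ and then delete the auxiliary items. First I would rescale, multiplying every $f_i$ by $n^2$, so that every marginal of an original item is $0$ or $n^2$ and the marginal of $a_i$ to agent $i$ is the integer $i$; then all attainable utilities are integers in $\{0,\dots,n^2m+n\}$, the utility of agent $i$ in a non-redundant allocation equals $n^2 t_i+i$ with $t_i=f_i(A_i)\in\{0,\dots,m\}$, and (since $i<n^2$) the residue of a utility modulo $n^2$ identifies the agent while the quotient records $t_i$. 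A value oracle for each $f'_i$ is immediate from one for $f_i$, and making a bundle non-redundant with respect to an MRF just means shrinking it to a maximal independent subset, which costs $O(m)$ value queries. Finally, evaluating the welfare function $W(S)=\max_A\sum_{i\in S}f'_i(A_i)$ (defined in the proof of Theorem~\ref{thm:LorenzDuttaRay}) is a welfare-maximization instance for gross-substitutes valuations and hence polynomial; for pure MRFs it is matroid union (a maximum common independent set in the union of the agents' matroids), and the auxiliary part contributes the fixed amount $\sum_{i\in S}i$. So $W$ admits a polynomial value oracle, and these computations also return optimal allocations.

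Because the auxiliary items make the attainable utilities of distinct agents distinct, the sorted-increasing utility vector of any allocation has pairwise distinct entries, so the lex-maximal such vector is unique and --- given that a Lorenz dominating allocation exists (Theorem~\ref{thm:LorenzDuttaRay}) --- coincides with the Lorenz dominating one; thus it suffices to compute the \emph{leximin} allocation. The plan is to do this by a polynomial number of calls to a single subroutine: given integers $k_1,\dots,k_n\ge0$, decide whether some allocation gives every agent $i$ utility at least $k_i$, and if so produce it. For MRFs this is a matroid-union computation: letting $\mathcal{M}_i^{(k_i)}$ be the $k_i$-truncation of agent $i$'s matroid (independent sets $=$ independent sets of size $\le k_i$; membership testable with one value query), the desired allocation exists iff the union matroid $\bigvee_i\mathcal{M}_i^{(k_i)}$ has rank at least $\sum_ik_i$, in which case any maximum independent set of the union decomposes into the required disjoint independent sets $I_i$ with $|I_i|=k_i$; matroid union runs in polynomial time given independence oracles, hence given value oracles. (The fixed auxiliary offsets $+i$ only shift the demands and are absorbed by the residue bookkeeping above.) Using this subroutine I would run the standard leximin decomposition: maintain the set of already-frozen agents together with their determined values; in each round determine the next smallest value $v^{(k)}$ by binary search over the polynomially many candidate integers, freeze the agent that must receive this value, and recurse. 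After at most $n$ rounds the whole utility vector $u'$ is known, one further call to the subroutine realizes it as an allocation of the original items in which agent $i$ gets rank-value exactly $t_i$, adjoining $a_i$ to agent $i$ yields the Lorenz dominating allocation $A'$ for $f'$, and (by Theorem~\ref{thm:LorenzDuttaRay}) deleting the auxiliary items gives the required non-redundant Lorenz dominating allocation for $f$.

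I expect the main obstacle to be pinning down the leximin decomposition precisely: the rule for which agent to freeze in each round must be stated with care (a naive ``freeze every agent whose utility cannot be pushed above $v^{(k)}$'' rule can misbehave, since another agent may be forced down to $v^{(k)}$), and one must prove that the decomposition reconstructs exactly the unique Lorenz dominating utility vector --- this is where submodularity of $W$ (Lemma~\ref{cor:GStoSubmodular}) and the distinctness of utilities provided by the auxiliary items are used, through the defining property from Theorem~\ref{thm:LorenzDuttaRay} that the $k$-th prefix sum of the Lorenz dominating sorted vector equals $\max_B(\text{sum of }k\text{ smallest utilities of }B)$. A clean alternative worth exploring is to bypass the decomposition entirely by directly minimizing a separable strictly convex potential $\sum_i\phi(f'_i(A'_i))$ (for $\phi$ convex enough that its minimizer is leximin) over allocations, which for matroid rank valuations is a convex-cost matroid-union / submodular-flow problem solvable in polynomial time; either way, the remaining ingredients --- polynomiality of matroid union and of gross-substitutes welfare maximization under value oracles, and the truncation reduction --- are routine, so the PE mechanism runs in time polynomial in $n$, $m$, and the size of the succinct representations.
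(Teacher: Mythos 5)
Your plan is correct and, as far as I can tell, would compile into a complete proof; it reaches the Lorenz dominating allocation by a top-down route (first pin down the unique utility vector of Theorem~\ref{thm:LorenzDuttaRay} by a leximin decomposition with a feasibility oracle, then realize it), whereas the paper's argument is organized around constructing the allocation directly from matroid union/intersection computations. The two routes share the same engine --- the observation that ``can every agent $i$ receive an independent set of size $k_i$?'' is a matroid-union question on the $k_i$-truncations, answerable (with a witness) in polynomially many independence queries, each of which is one value query --- so the difference is mostly in how the target vector is identified. One point worth stressing: the obstacle you flag as the ``main'' one is in fact already dissolved by your own residue bookkeeping. Because every attainable utility has the form $n^2t_j+j$ with $1\le j\le n<n^2$, the value $v^{(k)}$ returned by the binary search determines \emph{which} agent attains it ($j=v^{(k)}\bmod n^2$), so exactly one agent is frozen per round and the ambiguity of the naive freezing rule never arises. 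The induction then goes through cleanly: if the largest feasible $v$ were not attained by any agent, $v+1$ would also be feasible; and relaxing the frozen agents' constraints from ``exactly $v^{(\ell)}$'' to ``at least $v^{(\ell)}$'' cannot change the optimum, since an allocation exceeding some frozen value $v^{(\ell)}$ (take the smallest such $\ell$) would have had all then-unfrozen agents strictly above $v^{(\ell)}$, contradicting the maximality established in round $\ell$. Writing out these two observations is all that is needed to turn your plan into a proof; the convex-potential alternative you mention is not needed.

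Two small loose ends. First, make the ``$\ge$ versus $=$'' point above explicit, since your subroutine naturally certifies ``each agent can get \emph{at least} $k_i$'' by producing parts of size exactly $k_i$ and leaving surplus items unallocated, and the recursion is stated over allocations in which frozen agents get \emph{exactly} their values. Second, a full implementation of the PE mechanism also has to decide whether a reported succinct representation is a legal MRF (illegal reports are replaced by the all-zero function); this is a property of the representation format rather than of your algorithm, but it should at least be acknowledged, since submodularity of a black-box value oracle is not efficiently verifiable in general. Neither issue affects the substance of your argument.
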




\subsection{Best-of-Both-Worlds via Random Priorities}
\label{sec:rpe}
We have presented the PE mechanism that is truthful, welfare maximizing and EFX (among other fairness properties). Yet, as this is a deterministic mechanism, it cannot be envy-free (deterministic envy freeness is clearly impossible: consider a single desired item and two agents). 
In this section we show that envy can be eliminated (ex-ante) by running the PE mechanism with uniformly random priorities, and that this holds even if agents are not risk neutral.  

The \emph{random priority egalitarian (RPE) mechanism} is the mechanism that first assigns the agents priorities uniformly at random, and then runs the PE allocation mechanism with the drawn priorities. 
This mechanism is universally truthful (truthful for any realization of the random priorities), welfare maximizing and obtains all the fairness properties of the PE mechanism ex-post. 
Moreover, as we next show, it is also \emph{stochastically envy free}. 
This establishes a best-of-both-worlds result: 
both stochastic envy freeness of the randomized allocation, and EFX (among other fairness properties) ex-post.


For given valuation functions $(v_1,v_2,\ldots,v_n)$, 
a distribution over allocations is {\em stochastically envy free} if for every two agents $i$ and $j$, and for every value $t$,
$$Pr[v_i(A(i)) \ge t] \ge Pr[v_i(A(j)) \ge t],$$
where the probability is taken over the choice of random allocation according to the given distribution.

We note that this notion of stochastic envy-freeness 
implies {\em ex-ante envy-freeness}, that is, it implies $E[v_i(A_i)] \ge E[v_i(A_j)]$, but it is stronger (see Example~\ref{ex:ex-ante-EV-vs-stoc}), and it 
{implies that for any risk attitude, and not only when an agent is risk neutral, he prefers his own lottery over the lottery of any other agent (e.g., he can be risk seeking or risk averse).} 
Additionally, note that any ex-ante envy-free mechanism  that is not wasteful (i.e., for each agent, the marginal value of the set on unallocated items is always zero) is also ex-ante proportional for subadditive valuations (Observation \ref{obs:ex-ante-EF-prop}).


We next present our  result for the RPE mechanism, showing that it obtains the "best-of-both-worlds'': 
it is universally truthful, welfare maximizing and  stochastically envy-free as well as EFX ex-post.
This result does not require agents to be neutral to risk.  
\begin{restatable}{theorem}{RPE}
	\label{thm:RPE-main}
	The \emph{random priority egalitarian (RPE) mechanism} has the following properties when agents have submodular dichotomous valuations:
	
	\begin{enumerate}
		\item Being truthful is a dominant strategy for any realization of the priorities (universally truthful).
		
		\item When agents are truthful the realized allocation is welfare maximizing.
		
		\item  When agents are truthful, the realized allocation of the mechanism is a Lorenz dominating allocation, and consequently it enjoys additional fairness properties, including maximizing Nash social welfare, and being envy-free up to any item (EFX). If furthermore, the valuations are additive dichotomous, the allocation gives every agent at least her maximin {share.} 
	
	 \item The mechanism is stochastically envy-free, and thus is ex-ante envy free as well as ex-ante proportional.
	 
		
		
	\end{enumerate}
	
\end{restatable}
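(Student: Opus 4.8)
The plan is to obtain items~1--3 for free from the PE mechanism and to concentrate all the real work on item~4. For every realization of the random priorities the RPE mechanism simply runs the PE mechanism with the drawn order, so Theorem~\ref{thm:MRF-main} applied to PE immediately gives universal truthfulness, welfare maximality under truthful play, and the fact that the realized allocation is a non-redundant Lorenz dominating allocation; Proposition~\ref{pro:LorenzIsGood} then supplies all the listed ex-post guarantees (EFX, NSW-maximality, and maximin fairness when valuations are additive dichotomous). I would also observe that, once stochastic envy freeness has been established, ex-ante envy freeness is immediate by taking expectations, and ex-ante proportionality follows from Observation~\ref{obs:ex-ante-EF-prop}, since the PE allocation is non-wasteful (welfare maximizing and non-redundant) and MRF valuations are subadditive. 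So the whole theorem reduces to proving that RPE is stochastically envy free.

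For stochastic envy freeness I would fix an ordered pair of agents $(i,j)$ and, using integrality of $v_i$, aim to show $\Pr_\sigma[v_i(A^\sigma_i)\ge t]\ge\Pr_\sigma[v_i(A^\sigma_j)\ge t]$ for every positive integer $t$, where $\sigma$ is the uniformly random priority order, $A^\sigma$ is the PE allocation of the true instance under $\sigma$, and the reports of the other agents are fixed but arbitrary. The main device is the involution $\phi$ on priority orders that transposes the slots of $i$ and $j$: it is measure preserving, so $\Pr_\sigma[v_i(A^\sigma_j)\ge t]=\Pr_\sigma[v_i(A^{\phi(\sigma)}_j)\ge t]$, and it then suffices to prove the pointwise inequality $v_i(A^\sigma_i)\ge v_i(A^{\phi(\sigma)}_j)$ for every $\sigma$. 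Summing the indicators of the two sides over all $\sigma$ gives stochastic envy freeness of $i$ towards $j$, and ranging over all pairs gives item~4.

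To prove the pointwise inequality I would pass to the auxiliary instance $I^{+}$ obtained from the true instance by replacing agent $j$'s reported MRF with agent $i$'s, so that in $I^{+}$ agents $i$ and $j$ report the same valuation $v_i$; write $\hat A^\sigma$ for the PE allocation in $I^{+}$ under $\sigma$. The argument has three links. First, $(I^{+},\phi(\sigma))$ is the relabeling of $(I^{+},\sigma)$ that swaps the names $i$ and $j$ (identical reports, and $\phi$ swaps their slots, hence also their priority-bonuses, which turn out to coincide), so by the uniqueness of the augmented utility vector in Theorem~\ref{thm:LorenzDuttaRay} the utility agent $j$ derives in $(I^{+},\phi(\sigma))$ equals the utility agent $i$ derives in $(I^{+},\sigma)$; that is, $v_i(\hat A^{\phi(\sigma)}_j)=v_i(\hat A^\sigma_i)$. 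Second, let $B$ be a maximum-size $v_i$-independent subset of $A^{\phi(\sigma)}_j$, so $v_i(B)=|B|=v_i(A^{\phi(\sigma)}_j)$; since $B$ lies in the non-redundant bundle $A^{\phi(\sigma)}_j$ it is independent in agent $j$'s matroid as well, hence $v_{j|B}=v_{i|B}$ as set functions, so by strong faithfulness (Lemma~\ref{lem:strongfaith}) agent $j$ receives exactly $B$ when reporting $v_{j|B}=v_{i|B}$, and by truthfulness an agent whose true valuation is $v_i$ does at least as well reporting $v_i$ (instance $I^{+}$) as reporting $v_{i|B}$; therefore $v_i(\hat A^{\phi(\sigma)}_j)\ge|B|=v_i(A^{\phi(\sigma)}_j)$. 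Combining the first two links yields $v_i(A^{\phi(\sigma)}_j)\le v_i(\hat A^\sigma_i)$.

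The third link, which I expect to be the hard part, is the \emph{cloning monotonicity} $v_i(\hat A^\sigma_i)\le v_i(A^\sigma_i)$: inserting into the instance a competitor with exactly agent $i$'s valuation can only weaken $i$, at a fixed priority order. This is not a consequence of truthfulness (which constrains only an agent's own misreports, not the effect of changing another agent's report), so it must be extracted from the structure of Lorenz dominating allocations for MRFs. The plan there is to use submodularity of the welfare function $W$ (Lemma~\ref{cor:GStoSubmodular}) together with the Dutta--Ray / leximin description of Theorem~\ref{thm:LorenzDuttaRay}: the welfare functions of the true instance and of $I^{+}$ agree on every set of agents that avoids $j$, and the new ``copy of $i$'' competes with $i$ only over items that $i$ already values, so the prioritized leximin solution cannot hand $i$ a bundle of larger $v_i$-value in $I^{+}$ than in the true instance. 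Granting this, chaining the three links gives $v_i(A^\sigma_i)\ge v_i(A^{\phi(\sigma)}_j)$, hence stochastic envy freeness, and, with the first paragraph, the full theorem.
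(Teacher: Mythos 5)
Your reduction of items~1--3 to Theorem~\ref{thm:MRF-main} and Proposition~\ref{pro:LorenzIsGood}, and of ex-ante envy-freeness and proportionality to stochastic envy-freeness plus Observation~\ref{obs:ex-ante-EF-prop}, is fine, as is the coupling via the transposition $\phi$: reducing item~4 to the pointwise inequality $v_i(A^\sigma_i)\ge v_i(A^{\phi(\sigma)}_j)$ is the right move. Link~1 (relabeling symmetry of $I^+$ plus uniqueness of the augmented utility vector from Theorem~\ref{thm:LorenzDuttaRay}) is correct, and Link~2 is correct but overcomplicated: you do not need $B$ or strong faithfulness there, since truthfulness applied to a hypothetical agent $j$ whose true valuation is $v_i$ directly compares the report $v_i$ (yielding $\hat A^{\phi(\sigma)}_j$) with the report $v_j$ (yielding $A^{\phi(\sigma)}_j$) and gives $v_i(\hat A^{\phi(\sigma)}_j)\ge v_i(A^{\phi(\sigma)}_j)$ in one line.

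The problem is Link~3, which you state but do not prove, and which is where the entire difficulty of the theorem is concentrated. The claim $v_i(\hat A^\sigma_i)\le v_i(A^\sigma_i)$ is a \emph{cross-agent} comparative statics statement: it asserts something about how agent $i$'s utility under the PE mechanism responds to a change in agent $j$'s report. None of the tools you invoke elsewhere (truthfulness, faithfulness, strong faithfulness, monotonicity) says anything about this --- they all constrain the effect of an agent's own report on her own bundle. Your sketch (``the welfare functions agree on sets avoiding $j$'' and ``the clone competes with $i$ only over items $i$ already values, so the leximin solution cannot hand $i$ more'') is a heuristic, not an argument: replacing $f_j$ by $f_i$ simultaneously \emph{adds} competition on $i$'s items and \emph{removes} $j$'s competition elsewhere, and the latter can cascade through the Lorenz/leximin rebalancing in ways that are not controlled by pointing out that $W$ and $W^+$ agree on subsets of $[n]\setminus\{j\}$. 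Any honest proof of Link~3 would need a genuine exchange or uncrossing argument on the matroid union / base polytope of $W$, and such an argument is of essentially the same difficulty as proving the swap inequality $v_i(A^\sigma_i)\ge v_i(A^{\phi(\sigma)}_j)$ directly; so as it stands your chain reduces the theorem to an unproved lemma that is no easier than the target. Until Link~3 is actually established, item~4 --- and hence the theorem --- is not proved.
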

The proof of Theorem~\ref{thm:RPE-main} is deferred to Appendix~\ref{app:RPE}.

\section{$\epsilon$-Leveled Valuations}
\cout{
	List of properties we need of $\mechb$
	\begin{enumerate}
		\item players with higher priority (with respect to $\mechb$)
		do not envy players with lower priority.
	\end{enumerate}
}
	
We have shown (Theorem \ref{thm:MRF-main}) that for submodular dichotomous (MRF) valuations there is a deterministic truthful allocation mechanism that always outputs welfare maximizing allocations that is Lorentz dominating and thus satisfies multiple fairness properties.
In this section we want to explore the robustness of this result, by examining  the extent to which we can relax the assumption that the marginal value of any item with respect to any set is either \emph{exactly} 0 or \emph{exactly} 1.
A natural relaxation of this assumption is to allow marginals to be \emph{almost} those values.
It turns out the result for MRF valuations is not sensitive to undesired items having negative utility instead of 0, so we focus on relaxing the assumption that positive marginals must be $1$, and instead allow the positive marginals to be in $[1,1+\epsilon]$, allowing an agent not to be indifferent between desired items.
This is the case of $\epsilon$-dichotomous valuations in which the marginal value of every item is either 0 or in $[1,1+\epsilon]$.

While our prioritized egalitarian mechanism presented in Section \ref{sec:submodular} for MRF valuations maximizes welfare, it is easy to see that once we allow desired items to have different values, no truthful allocation mechanism will be welfare maximizing. (If the mechanism always maximizes welfare, it is easy to construct examples where agents have incentives to report a value of $1 + \epsilon$ for items that they value at~1.) 
Nevertheless, as valuations are $\epsilon$-dichotomous, reasonable allocations have maximal size and almost maximize welfare (getting at least $\epsap$ fraction), {see Observation \ref{obs:welfare_approx}}, so we aim for a truthful mechanism that always outputs an allocation with welfare at least $\epsap$ fraction of the maximum welfare. We call such an allocation \emph{approximately welfare maximizing}.

Unfortunately,  for  $\epsilon$-dichotomous agents, truthfulness and the requirement to output approximately welfare maximizing 
allocations are at odds, even when allowing randomized mechanisms. 
Indeed, the following example shows that for every $\epsilon >0 $  there is no truthful mechanism (deterministic or randomized\footnote{For randomized mechanism the example is such that a lie stochastically dominates the truth.}) that always returns an approximately welfare maximizing 
allocation for $\epsilon$-dichotomous submodular valuations (that are unit demand, \emph{not} additive).

\begin{example}\label{example:eps-dic-not-truthful}
	Let the set of items be $M =\{L,H\}$, and there are two agents with $\epsilon$-dichotomous valuations.
	Both agents $1,2$ are unit demand with identical valuation function $f$ satisfying $f(L)=1,f(H)=1+\epsilon$ (both agents slightly prefer the high value item $H$ over the low value item $L$).
	When both agents report their true valuation, at least one of them has positive probability of not getting item $H$. W.l.o.g., let agent $1$ be that agent. If agent $1$ reports a valuation $f_1(L)=0,f_1(H)=1+\epsilon$ (that is, that he does not want the low value item at all), and agent $2$ reports truthfully, the only approximately welfare maximizing 
	allocation is the one in which agent $1$ receives item $H$ (and agent $2$ gets $L$) and therefore agent $1$ gets higher utility by lying.
\end{example}

Given Example \ref{example:eps-dic-not-truthful} above it is clear we could not achieve the desired properties for $\epsilon$-dichotomous valuations,
so we consider the problem of designing a truthful and approximately welfare maximizing 
mechanism only for the more restricted class of $\epsilon$-dichotomous \emph{additive} valuations ($\epsilon$-leveled valuations). 
A truthful mechanism that always outputs approximately welfare maximizing 
allocations for $\epsilon$-leveled valuations indeed exists: order the agents arbitrarily  and allow each agent in turn to pick all items he desires out of the remaining items.
Yet, this mechanism is clearly very unfair, as if the first agent desires all items, he will take all items and other agents will get nothing.
Thus, we ask whether there exists a deterministic truthful allocation mechanism for $\epsilon$-leveled agents that always outputs approximately welfare maximizing 
allocations and is fair. 
Unfortunately, we show that even with two agents and three items, there does not exist such a deterministic mechanism that satisfies the following  minimal fairness property: if an agent desires two items, he gets at least one of them. This precludes a mechanism that satisfies any of our fairness properties (e.g., maximin fair, EF1).

\begin{restatable}{proposition}{nodetleveled}\label{prop:no-det-leveled}
	There is no deterministic truthful allocation mechanism for allocating three items to two $\epsilon$-leveled agents
	that always outputs approximately welfare maximizing 
	allocations, and in case that an agent desires two items, he gets at least one of them.
\end{restatable}

We remark that~\citet{ABCM2017} present impossibility results concerning fair allocation mechanisms for two agents with additive valuations, but the setting of Proposition~\ref{prop:no-det-leveled} is not captured by these results, as it restricts agents to be $\epsilon$-leveled, and not just additive.



Given this negative result for deterministic mechanisms, we will consider randomized mechanisms. We aim for a mechanism that is truthful in expectation (over the randomization of the mechanism), is approximately welfare maximizing, 
and is as fair as possible (ex-post, for any realization). Clearly such a mechanism that always outputs a Lorentz dominating allocation does not exist (as such an allocation might not exist\footnote{Consider two items and two agents. The first agent has value $1+\epsilon$ for each item, while the second has value $1$ for each item. A Lorentz dominating allocation, if exists, must be welfare maximizing and must also maximize the NSW (by Proposition \ref{pro:LorenzIsGood}), yet clearly the first requires both items to got to the first agent, while the second require dividing the items between the two.}), yet, maybe a Lorentz dominating allocation exists when rounding down the value of every set to the nearest integer (essentially only counting the number of desired items in the set)?

We first formalize the notation of rounding down a valuation function.
Given a function $f$, we define $\hat{f}$ to be the \textit{floor} of $f$ as follows:
\begin{equation}
\hat{f}(S) =\lfloor f(S) \rfloor\ \ \ \forall S\subseteq M. \label{eq:dichotomous_lower}
\end{equation}

We observe that such a rounding does not change the value of any set by much:
\begin{restatable}{observation}{dichotomous}
	For non-negative $\epsilon< \frac{1}{m}$, and any $\epsilon$-dichotomous valuation $f$, the floor function $\hat{f}$ is dichotomous and for  every set $S \subseteq M$,
	\begin{equation}
	f(S) \leq (1+\epsilon)\hat{f}(S). \label{eq:dichotomous_upper}
	\end{equation}
	Additionally, if $f$ is $\epsilon$-leveled then the function $\hat{f}$ is dichotomous and additive.
	\label{obs:dichotmous}
\end{restatable}

As moving from $\epsilon$-dichotomous valuations to their floors only results with very small changes to the valuations (up to a {multiplicative term of $(1+\epsilon)$,} by Equation (\ref{eq:dichotomous_upper})), one may hope that mechanisms that achieve some fairness properties for dichotomous valuations, also have  these properties holding approximately when they run on the floor valuations. As Lorentz dominating allocations have many desired fairness properties (recall Proposition \ref{pro:LorenzIsGood}) it would be most attractive to show that there exist randomized truthful in expectation allocation mechanism for $\epsilon$-leveled agents that for any valuations $f$ always outputs a Lorentz dominating allocation with respect to $\hat{f}$. Unfortunately, this is not achievable as the next proposition shows.

\begin{restatable}{proposition}{norandLorentzleveled}\label{prop:no-rand-Lorentz-leveled}
	Consider any of the following fairness properties:  being Lorentz dominating, being min-square, maximizing NSW, being lex-min, being EFX. Every randomized truthful in expectation allocation mechanism for allocating two items to two $\epsilon$-leveled agents either sometimes fails to satisfy this property ex-post, even with respect to $\hat{f}$, or sometimes fails to be reasonable (fails to allocate a demanded item to some agent who demands it). 
\end{restatable}

Given this negative result, the fairness properties that we can hope the mechanism could achieve are EF1 and maximin fairness.
Moreover, as the mechanisms we consider are randomized, we can hope to also get proportionality in expectation (clearly it cannot be obtained ex-post).
Recall that $S$ gives agent $v$ his \emph{proportional share} if  $f_v(S) \ge f_v(M)/n$. An allocation $A = (A_1, \ldots, A_n)$  is \emph{proportional} if
	$f_v(A_v)\geq f_v(M)/n$ for every agent $v$, and a randomized  mechanism is \emph{proportional in expectation} if for every agent, the expected value of the allocation to the agent is at least $f_v(M)/n$ when truthful.

In light of the above, we aim to design a randomized mechanism that is truthful in expectation for $\epsilon$-leveled valuations, always outputs approximately welfare maximizing 
allocations, is EF1 and maximin fair with respect to the floor valuations $\hat{f}$, and is proportional in expectation (for $f$).
Note that EF1 and maximin fair with respect to the floor valuations $\hat{f}$
implies that the corresponding inequality holds, up to a small multiplicative loss, with respect to the actual valuations $f$ (e.g., for maximin, the agent is getting at least $\epsap$-fraction of the maximin share).

We emphasize that we are aiming for \emph{exact} truthfulness and we are not satisfied with mechanisms for which truth telling is only approximately best\footnote{By Observation \ref{obs:guarantees} we get that if $\mech$ is truthful for dichotomous agents then $\hat{\mech}$ is $\epsilon$-truthful for $\epsilon$-dichotomous valuations.}. We next present a mechanism that achieves all these properties. 	

The mechanism we design will work on the floor valuations, or equivalently, ask agents for the \emph{set} of items they desire (have value in $[1,1+\epsilon]$),
and not try to elicit the  exact value of each item. This approach allows using mechanisms designed for dichotomous valuations to work for $\epsilon$-dichotomous valuations.
Given a mechanism $\mech$ for dichotomous valuations we denote by $\hat{\mech}$ the mechanism for $\epsilon$-dichotomous valuations such that $\hat{\mech}(f) = \mech(\hat{f}) $ for $\epsilon <\frac{1}{m}$.
It is easy to see that if $\mech$ gives each truthful agent with dichotomous valuations his maximin share, then $\hat{\mech}$ gives each truthful agent with  $\epsilon$-dichotomous valuation at least $\epsap$-fraction of his maximin share 
(and similarly, the inequalities defining  EF1 and EFX approximately hold, see Observation  \ref{obs:guarantees}).

\cout{
\mbc{OLD: ==============}

We observe that the result of \citet{KPW18} is actually proven for $\epsilon$-leveled valuations, and it shows that for three agents there might not exist an allocation that gives all agents their maximin share\tecomment{change to a maximin fair allocation?}.
Given this negative result, we aim to give each agent his share up to a small multiplicative loss (getting at least $\epsap$-fraction of the share).

Given a mechanism $\mech$ for dichotomous valuations we denote by $\hat{\mech}$ the mechanism for $\epsilon$-dichotomous valuations such that $\hat{\mech}(f) = \mech(\hat{f}) $ for $\epsilon <\frac{1}{m}$.
Indeed, it is easy to see that if $\mech$ gives each truthful player with dichotomous valuations his maximin share, then $\hat{\mech}$ gives each truthful player with  $\epsilon$-dichotomous valuation at least $\epsap$-fraction of his maximin share 
(and same for EF1 and EFX, see Observation  \ref{obs:guarantees}).
\tecomment{discuss other fairness notions} \tecomment{remove truthfulness from the sentence?}

\mbc{
	\begin{itemize}
		\item no Lorntz, replace with epsilon lorntz.
		\item No RAND, IC, that always output epsilon Lorenz Dominating allocation (after rounding) thus no NSW, no lex-min, no min-sq. We need to aim for EF1 or EFX and maximin.
		\item Now we can also aim for ex-ante proportional.
		\item WE do it!
	\end{itemize}
}
\\


A natural approach to achieve our goal is to treat the $\epsilon$-leveled valuations as if they were actually additive and dichotomous (by rounding values down to the nearest integers), hoping such a mechanism will give the desired properties (at least up to a small loss).
We first formalize the notation of rounding down a valuation function.
Given a function $f$, we define $\hat{f}$ to be the \textit{floor} of $f$ as follows:
\begin{equation}
\hat{f}(S) =\lfloor f(S) \rfloor\ \ \ \forall S\subseteq M. \label{eq:dichotomous_lower}
\end{equation}

We observe (see proof in the appendix)\tecomment{delete this proof?} that such a rounding does not change the value of any set by much:
\begin{restatable}{observation}{dichotmous}
	For non-negative $\epsilon< \frac{1}{m}$, and any $\epsilon$-dichotomous valuation $f$, the floor function $\hat{f}$ is dichotomous and for  every set $S \subseteq M$,
	\begin{equation}
	f(S) \leq (1+\epsilon)\hat{f}(S). \label{eq:dichotomous_upper}
	\end{equation}
	Additionally, if $f$ is $\epsilon$-leveled then the function $\hat{f}$ is dichotomous and additive.
	\label{obs:dichotmous}
\end{restatable}

As moving from $\epsilon$-dichotomous valuations to their floors only results with very small changes to the valuations (up to a {multiplicative term of $(1+\epsilon)$,} by Equation (\ref{eq:dichotomous_upper})), it implies that mechanisms that achieve some fairness properties for dichotomous valuations, also have these properties holding approximately when they run on the floor valuations.
Given a mechanism $\mech$ for dichotomous valuations we denote by $\hat{\mech}$ the mechanism for $\epsilon$-dichotomous valuations such that $\hat{\mech}(f) = \mech(\hat{f}) $ for $\epsilon <\frac{1}{m}$.
Indeed, it is easy to see that if $\mech$ gives each truthful player with dichotomous valuations his maximin share, then $\hat{\mech}$ gives each truthful player with  $\epsilon$-dichotomous valuation at least $\epsap$-fraction of his maximin share 
(and same for EF1 and EFX, see Observation  \ref{obs:guarantees}).
\tecomment{discuss other fairness notions} \tecomment{remove truthfulness from the sentence?}

With these observations in mind, one may think that it is straightforward to take any mechanism for dichotomous valuations and use it to create a mechanism for $\epsilon$-dichotomous valuations that maintains the desired fairness properties (up to a tiny loss). Unfortunately, a crucial  property that is not maintained is truthfulness, and the guarantees in Observation \ref{obs:guarantees} is with respect to the \emph{reported} valuations and may be bad (ex post) with respect to the \emph{true} valuations, due to non-truthful reports that are in the interest of the players \tecomment{we use players and agents, we need to decide on a term}, as demonstrated by the following example:


\begin{example}\label{example:not-truthful}
	Let $\mech^B$ be the prioritized egalitarian mechanism presented in Section \ref{sec:submodular}, which is truthful for additive dichotomous valuations, and consider the mechanism $\hat{\mech}^B$.
	Let the set of items be $M =\{L,H\}$, and assume there are two players with $\epsilon$-leveled valuations for small enough $\epsilon>0$.
	Assuming w.l.o.g. that agent $1$ with the higher priority and that when both players reports that they want both of the items, the allocation chosen by $\mech^B$ is that player $2$ receives item $L$ and player $1$ receives item $H$.
	Assume the valuation of player $2$  is $f_2(L)=1$, $f_2(H)=1+\epsilon$,
	and the valuation of player $1$ is $f_1(L)=f_1(H)=1$. If player $2$ reports his true valuation, he will get item $L$ and will have a utility of $1$.
	If on the other hand player $2$ lies and reports that he only wants item $H$ (but not item $L$), he will get it and will have a utility of $1+\epsilon > 1$, therefore for player $2$ to report truthfully is not a weakly dominant strategy.
\end{example}
\cout{OLD:
\begin{example}
	Let $\mech^B$ be the prioritized egalitarian mechanism presented in Section \ref{sec:submodular}, which is truthful for additive dichotomous valuations, and consider the mechanism $\hat{\mech}^B$.
	Let the set of items be $M =\{1,2\}$, and assume there are two players with $\epsilon$-leveled valuations for small enough $\epsilon>0$.
	Assuming w.l.o.g. that agent $1$ with the higher priority and that when both players reports that they want both of the items, the allocation chosen by $\mech^B$ is that player $2$ receives item $1$ and player $1$ receives item $2$. Player $2$ valuation is $f_2(1)=1$, $f_2(2)=1+\epsilon$,
	Player $2$ belief of player $1$ valuation is that $f_1(1)=f_1(2)=1$.
	Given this belief, if player $2$ reports his true valuation, he will get item $1$ and will have a utility of $1$.
	If player $2$ reports that he only wants item $2$ (but not item 1), he will get it and will have a utility of $1+\epsilon > 1$, therefore for player $2$ to report truthfully is not a weakly dominant strategy.
	This example also shows that players might not receive their maximin share ex-post since that in the event that player $1$ wants only the item $2$, player $2$ does not receive his maximin share (and even not up to $\meps$ term).
\end{example}
}

Given this negative result we aim to design a mechanism that always outputs reasonable allocation, is fair and is truthful not only for additive dichotomous  valuations but for the more general $\epsilon$-leveled valuations. We emphasize that we are aiming for \emph{exact} truthfulness and we are not satisfied with mechanisms for which truth telling is only approximately best\footnote{By Observation \ref{obs:guarantees} we get that if $\mech$ is truthful for dichotomous players then in $\hat{\mech}$ is $\epsilon$-truthful for $\epsilon$-dichotomous valuations.}.
} 
\cout{

\subsection{Preliminaries}
\mbc{I still need to edit this subsection.}

\mbedit{We assume that players are risk neutral.} \mbcomment{Here we need an explanation. the problem is that on one hand we assume that agents are risk neutral (when considering truthfulness), yet we are interested in the worse case fairness over realizations (so they care not only about the expectations and we are not happy with fairness in expectation) - this seems somewhat inconsistent. Alternatively, maybe we can assume the players are very risk averse - this will justify the need for ex-post fairness. Will the truthfulness work in that case? I suspect it might but we need to think about it. The main issue is that there is a risk of losing an entire item for a slight chance of getting minor improvement of $\epsilon$, and a risk averse agent mainly cares about the minimum he will get in the worse case.   }

\mbcomment{This is the old list of properties. I am trying to make this into a theorem later:
	
We want a mechanism with the following properties:
\begin{enumerate}
	\item Almost deterministic ???
	\item Truthful in expectation
	\item Gives each agent his maximin share $-O(m\epsilon)$ (giving the maximin share is not possible).
	\item Gives in expectation each agent his maximin share.
	\item Computed in polynomial time.
	\item Allocate all desired items to agents that wants them.
\end{enumerate}
}
}

\subsection{A Truthful Mechanism for Leveled Valuations}
In this section, we present a truthful\footnote{We are slightly abusing the notion of truthfulness here, as the agents are not asked to directly report their $\epsilon$-leveled valuations. Rather, each agent is only being asked to report the set of items he demands (which describes an additive dichotomous valuations), and we say that an agent is ``truthful" if he reports that set truthfully.    }
in expectation allocation mechanism for $\epsilon$-leveled valuations (for small enough $\epsilon>0$). Recall that an $\epsilon$-leveled valuation is an additive valuation in which the marginal value of every item is either in $[1,1+\epsilon]$, or is zero\footnote{Actually, the truthfulness result we present can be extended to the case that instead of zero marginals we allow any non-positive marginals (allowing for non-monotone valuations).
}. 
We prove that when agents are truthful, the allocation is reasonable (and thus approximately welfare maximizing) and satisfies several notions of fairness. It is $\epsap$-maximin fair and $\epsap$-EF1
	(both are ex-post notions, giving a fairness guarantee for every realization of the randomness of the mechanism). Also, every agent gets his proportional share in expectation. 

The problem in using the prioritized egalitarian mechanism when agents have $\epsilon$-leveled valuations is that it is no longer truthful
(as implied by Proposition \ref{prop:no-det-leveled}).
Yet, the mechanism is $\epsilon$-truthful,
meaning that a truthful agent loses only a small fraction 
	of his value due to being truthful. 
We suggest a (rather general) randomized method, based on combining an almost-truthful reasonable mechanism with another simple deterministic mechanism that
allocates either one or two random items to agents with random priorities, 
and  using the original mechanism only on the remaining items, to obtain a truthful in expectation mechanism.
Crucially, the second mechanism will create a \emph{strict} incentive
(that is large enough) to be truthful, to overcome the loss of a truthful agent in the prioritized egalitarian mechanism.

More specifically, we pick one or two random items to leave out and run the prioritized egalitarian mechanism with random priorities over agents. We then run the additional mechanism with \emph{reverse} priority order, on the set of items left aside, only allocating items to agents demanding them, and not allocating the second item to the agent that got the first, if possible. The fact that we only allocate to agents demanded items creates an incentive to not hide demanded items, while the possibility of getting the first item in the expense of the second, creates incentive not to add undesired items.
We show that when $\epsilon$ is small enough, the combined mechanism will be truthful in expectation.
Additionally, it will satisfy several important fairness properties (the reversal of the priority order in the second mechanism is crucial for that).
With this high-level description in mind, we move to formally describe the mechanism.


We first introduce a deterministic truthful and reasonable allocation mechanism $\mechx$ that given a list of items $X$ (consisting of $1$ or $2$ items), priority order $\sigma$ over the agents, and a reported demanded set $R_v$ for every agent $v$, returns an allocation $\alx$ as follows:
\begin{enumerate}
	\item If there is agent $v$ with $X[1] \in R_v$, then allocate $X[1]$ to the highest priority agent $v$ such that $X[1] \in R_{v}$, and move $v$ to have the lowest priority.
	\item 	If $|X|>1$ and $X[2] \in R_v$ for some $v$, then allocate $X[2]$ to highest priority agent $v$ such that $X[2] \in R_{v}$.
\end{enumerate}

\begin{restatable}{proposition}{claimmechx}
\label{cl:mechx}
	For agents that have additive dichotomous valuations, $\mechx$ is reasonable, truthful, EF1, and moreover, an agent never envies an agent that has lower priority than him.
\end{restatable}


We now present a truthful in expectation randomized allocation mechanism $\meche$ for settings where all agent valuations are $\epsilon$-leveled for $\epsilon < \frac{1}{nm^3}$. 
Let $\mechb$ be the  prioritized egalitarian (PE)  mechanism described in Section \ref{sec:submodular} for MRF valuations (we will use it for the special case of additive dichotomous valuations.)
Let $\meche$ be the following mechanism:

\begin{enumerate}
	\item Each agent $v$ is asked to report a set $R_v$, of the items he demands.
	\item Pick an item $x$ uniformly at random from $M$, and let $X[1]=x$.
	\item With probability $1-\frac{1}{m}$ pick an item $y\neq x$ uniformly at random and set $X[2]=y$.
	\item Let $\sigma$ be a random order of priorities over the agents.
	\item Let $\alb$ be the allocation of $\mechb$ and agents ordered by $\sigma$, and the reported sets $R_v \setminus X$ for each agent $v$.
\footnote{Since $\mechb$ is reasonable, the items in $X$ will not be allocated by $\mechb$.}
	\item Let $\alx$ be the allocation of $\mechx$ with the set of items $X$ and agents ordered by $reverse(\sigma)$, and the reported sets $R_v \cap X$ for each agent $v$.
	\item  Return $A$ where $A_v =\alb_v \cup\alx_v$ for every  agent $v$. 		
\end{enumerate}

We use the properties of $\mechb$ and of $\mechx$ to prove that  mechanism  $\meche$ satisfies multiple desired properties
(for the proof see Appendix \ref{appendix:theorem}).
\begin{restatable}{theorem}{truthfulAmost}
	\label{thm:truthfulAmost01}
	When all agents are $\epsilon$-leveled for $\epsilon<\frac{1}{nm^3}$
	 mechanism $\meche$ has the following properties:
	\begin{enumerate}
		\item \label{item:truthful2}
		$\meche$ is a truthful in expectation mechanism. I.e.,
		each agent maximizes his expected value (over the randomization of the mechanism) by reporting the  set of items he demands truthfully, for any reports of the others.\footnote{Moreover, if it is known to the agent that at least one of his demanded items is also demanded by some other agent, then being truthful is the \emph{unique} strategy that maximizes his expected utility. } Additionally, the mechanism is ex-post $\epsilon$-truthful.
		\item \label{item:welfare2}  $\meche$ is reasonable. Thus the welfare ex-post is at least $\epsap$-fraction of the maximum welfare, when agents are truthful.
		\item \label{item:fair2}  $\meche$ guarantees every truthful agent his $\epsap$-maximin share. Additionally, $\meche$ is $\epsap$-EF1 for truthful agents. 
		\item \label{item:fair2-in-exp} Each truthful agent receives his proportional  share in expectation.\footnote{For additive valuations, the proportional share is at least the maximin share.} 
		\item \label{item:poly2} $\meche$ can be implemented in  polynomial time in the number of items and agents.	
	\end{enumerate}
\end{restatable}




\cout{

	It is however min-square up to $4m$. This is true since the allocation returned by $\mechb$ is min-square of items in $M\setminus X$, which its min-square is less than the min-square of all items. Adding 2 items to the allocation can have an influence of at most $4m$ \tecomment{might be able to improve to 2m-2}

	\begin{example}[$\mech$ is not lex-min up to two \mbc{one??} items]
		Consider setting with $m\geq 6$ items and three players with additive dichotomous valuations. Player 1 wants only items $\{1,2\}$. Player $2,3$ wants all items. If the priority is according to the numbering of the players and $X=\{1,2\}$, the allocation is where player 1 gets zero items. Yet, there is an allocation where all players get at least 2 items.
	\end{example}
	\mbc{this is unclear:} It is however lex-min up to two items. It holds by the Lipschitz property of $\mechb$, if we remove one item it can only decrease the number of items allocated to each player by at most 1.
	It is also Lorenz dominating up to two items, since when we remove an item, it can only decrease the number of items allocated to each player by at most 1, and if player $v$ number of items is decreased by 1, no other player is harmed.
	\begin{example}[$\mech$ is not Lorenz-dominating]
		Consider settings with 4 player with additive dichotomous valuations over 4 items.
		Player 1 wants items 1,2 and player 2 wants item 2,
		Player 3 wants items 3,4 and player 4 wants item 4.
		If the order of priorities is according to the players numbers, and $X=(1,3)$ then the allocation is $(\{1,2\},\emptyset,\{3,4\},\emptyset)$ which is not the Lorenz dominating allocation.
		Nevertheless, because of the structure of $\mech$ it holds that for every $k$ the sum of the $k$ least utilities is at least the sum of the $k$ least utilities in the Lorenz dominating allocation up to an additive term of $2$.
	\end{example} \tecomment{the former example also prove it}
	
	\begin{example}[NSW]
		Consider $k(n-2)+2$ items and $n$ players.
		Player $1$ wants items $1,2$, Player $2$ wants items $2$, all the other players want all of the items.
		It might be the case where player $1$ gets both of items $1,2$, and players $3,\ldots,n$ get $k$ items each with NSW of zero.
		The optimal NSW is when players $1,2$ each get an item, NSW$=k^{n-2}$.
	\end{example} \tecomment{need to rephrase the example}

\subsection{Impossibilities}
\tecomment{delete this section? the proposition assume truthfulness with respect to additive valuations and not with respect to leveled}

\tecomment{Delete definition? it is already defined in the model section}
A valuation function over $n$ items is {\em leveled} if every item has value in the range $[1, 1 + \frac{1}{n}]$.
Hence for every $1 \le k < n$, every set of $k+1$ items is worth more than every set of $k$ items.

When there are $n$ players and the number of items is $m = kn - 1$ for some integer $k$, then there is a truthful allocation mechanism for leveled players that achieves the maximin share. The first player, chooses his most preferred $k-1$ items, whereas each of the remaining players chooses $k$ arbitrary items. Likewise, when $m < 2n$ there is a truthful allocation mechanism for leveled players that achieves the maximin share. Each player in his turn takes his most preferred remaining item, until the number of items remaining is twice the number of remaining players. Each remaining player chooses two arbitrary items. The obtained solutions (in both cases) are also Pareto-optimal (for leveled valuations). \tecomment{It is not pareto optimal}

In other cases we have impossibility results.

\mbcomment{Is this proposition for leveled valuations (almost 0/1 additive valuations) or general? I think we need it only for leveled. }
\tecomment{The proof uses truthfulness with respect to any additive valuation, so the claim about the reduction from even m to 4 is not interesting because we can set the value to be zero}

\begin{proposition}
	\label{prop:level}
	When $n=2$ and $m \ge 4$ is even then there is no deterministic truthful mechanism giving each agent his maximin share.
\end{proposition}

\begin{proof}
	It suffices to consider the case that $m = 4$, because for even $m \ge 6$ we make all items but four have value~1 for both players.
	
	When $m=4$, each player must receive two items (to achieve the maximin share). The preference over bundles of size two is not affected by shifting the value of every item by $c$. \mbcomment{there is a bug here: the maximin share does change by this shift: with value $(1,1/3,1/3,1/3)$ the share is either item 1 or items 2 to 4, but after the shift we can give items 3 and 4. Actually we think it is ok as the characterization also holds if we impose partition of the items two to each player, but need to rewrite the proof } Hence conditioned on receiving two items, any valuation function can be made leveled, by adding to it a large enough constant and scaling. As such, the characterization of truthful mechanisms (that allocate all items) holds. [Add reference.] It says that items can be partitioned into four disjoint sets:
	
	\begin{enumerate}
		
		\item $C_1$. In $C_1$ there is a list of bundles and $A_1$ can choose one of them and give the remaining items to $A_2$.
		
		\item $C_2$. In $C_2$ there is a list of bundles and $A_2$ can choose one of them and give the remaining items to $A_1$.
		
		\item $E_1$. Items initially received by $A_1$, but can potentially be exchanged.
		
		\item $E_2$. Items initially received by $A_2$, but can potentially be exchanged.
		
	\end{enumerate}
	
	In addition, there is a set of disjoint exchanges between items of $E_1$ and $E_2$, and the players do each of the exchanges independently, based on mutial consent.
	
	A case analysis shows that no mechanism of the above type gives the maximin share. For example, if $|C_1| = |C_2| = 2$ and each agent can choose a single item from his set, then when both agents have the same priority $a > b > c > d$ over items, with $C_1 = \{a,d\}$ and $C_2 = \{b,c\}$, then $A_2$ gets $\{b,d\}$, while his maximin share is $\min[\{a,d\}, \{b,c\}]$.
\end{proof}

When there are $n> 2$ players then there need not exist an allocation that gives each player his maximin share, as shown in {\em Fair Enough: Guaranteeing Approximate maximin Shares}, Kurokawa, Procaccia and Wang. Their proof applies also to leveled valuation functions.

}
\section{Acknowledgment}
We are grateful to Herve Moulin for suggesting that a randomized version of our deterministic mechanism might be ex-ante envy-free, a suggestion that motivated us to add Theorem~\ref{thm:RPE-main} to our paper.

\bibliographystyle{abbrvnat}

\bibliography{bib}

\begin{thebibliography}{40}
\providecommand{\natexlab}[1]{#1}
\providecommand{\url}[1]{\texttt{#1}}
\expandafter\ifx\csname urlstyle\endcsname\relax
  \providecommand{\doi}[1]{doi: #1}\else
  \providecommand{\doi}{doi: \begingroup \urlstyle{rm}\Url}\fi

\bibitem[Aleksandrov et~al.(2015)Aleksandrov, Aziz, Gaspers, and
  Walsh]{aleksandrov2015online}
M.~D. Aleksandrov, H.~Aziz, S.~Gaspers, and T.~Walsh.
\newblock Online fair division: Analysing a food bank problem.
\newblock In \emph{Twenty-Fourth International Joint Conference on Artificial
  Intelligence}, 2015.

\bibitem[Amanatidis et~al.(2016)Amanatidis, Birmpas, and Markakis]{ABM2017}
G.~Amanatidis, G.~Birmpas, and E.~Markakis.
\newblock On truthful mechanisms for maximin share allocations.
\newblock In \emph{Proceedings of the Twenty-Fifth International Joint
  Conference on Artificial Intelligence}, IJCAI'16, pages 31--37, 2016.

\bibitem[Amanatidis et~al.(2017)Amanatidis, Birmpas, Christodoulou, and
  Markakis]{ABCM2017}
G.~Amanatidis, G.~Birmpas, G.~Christodoulou, and E.~Markakis.
\newblock Truthful allocation mechanisms without payments: Characterization and
  implications on fairness.
\newblock In \emph{Proceedings of the 2017 ACM Conference on Economics and
  Computation}, EC'17, pages 545--562, 2017.

\bibitem[Amanatidis et~al.(2020)Amanatidis, Birmpas, Filos{-}Ratsikas,
  Hollender, and Voudouris]{amanatidis2020maximum}
G.~Amanatidis, G.~Birmpas, A.~Filos{-}Ratsikas, A.~Hollender, and A.~A.
  Voudouris.
\newblock Maximum nash welfare and other stories about {EFX}.
\newblock In \emph{Proceedings of the Twenty-Ninth International Joint
  Conference on Artificial Intelligence, {IJCAI} 2020}, 2020.

\bibitem[Aziz(2020{\natexlab{a}})]{Aziz2020}
H.~Aziz.
\newblock Strategyproof multi-item exchange under single-minded dichotomous
  preferences.
\newblock \emph{Autonomous Agents and Multi-Agent Systems}, 34, 04
  2020{\natexlab{a}}.

\bibitem[Aziz(2020{\natexlab{b}})]{aziz2020simultaneously}
H.~Aziz.
\newblock Simultaneously achieving ex-ante and ex-post fairness.
\newblock In \emph{Web and Internet Economics - 16th International Conference,
  {WINE} 2020, Beijing, China, December 7-11, 2020, Proceedings},
  2020{\natexlab{b}}.

\bibitem[Babaioff et~al.(2009)Babaioff, Lavi, and Pavlov]{BLP2009}
M.~Babaioff, R.~Lavi, and E.~Pavlov.
\newblock Single-value combinatorial auctions and algorithmic implementation in
  undominated strategies.
\newblock \emph{J. ACM}, 56\penalty0 (1), Feb. 2009.

\bibitem[Babaioff et~al.(2019)Babaioff, Nisan, and
  Talgam{-}Cohen]{BabaioffNT2019}
M.~Babaioff, N.~Nisan, and I.~Talgam{-}Cohen.
\newblock Competitive equilibrium with generic budgets: Beyond additive.
\newblock \emph{arXiv preprint arXiv:1911.09992}, 2019.

\bibitem[Barman et~al.(2018)Barman, Krishnamurthy, and Vaish]{BKV18}
S.~Barman, S.~K. Krishnamurthy, and R.~Vaish.
\newblock Finding fair and efficient allocations.
\newblock In \emph{Proceedings of the 2018 ACM Conference on Economics and
  Computation}, EC'18, pages 557--574, 2018.

\bibitem[Bei et~al.(2019)Bei, Lu, Manurangsi, and Suksompong]{BLMS2019}
X.~Bei, X.~Lu, P.~Manurangsi, and W.~Suksompong.
\newblock The price of fairness for indivisible goods.
\newblock In \emph{Proceedings of the Twenty-Eighth International Joint
  Conference on Artificial Intelligence, {IJCAI'19}}, pages 81--87, 7 2019.

\bibitem[Benabbou et~al.(2020)Benabbou, Chakraborty, Igarashi, and
  Zick]{benabbou2020finding}
N.~Benabbou, M.~Chakraborty, A.~Igarashi, and Y.~Zick.
\newblock Finding fair and efficient allocations when valuations don't add up.
\newblock In \emph{Algorithmic Game Theory - 13th International Symposium,
  {SAGT} 2020, Augsburg, Germany, September 16-18, 2020, Proceedings}, 2020.

\bibitem[Bogomolnaia and Moulin(2004)]{BM2004}
A.~Bogomolnaia and H.~Moulin.
\newblock Random matching under dichotomous preferences.
\newblock \emph{Econometrica}, 72:\penalty0 257 -- 279, Feb. 2004.

\bibitem[Bogomolnaia et~al.(2005)Bogomolnaia, Moulin, and Stong]{BMS2005}
A.~Bogomolnaia, H.~Moulin, and R.~Stong.
\newblock Collective choice under dichotomous preferences.
\newblock \emph{Journal of Economic Theory}, 122\penalty0 (2):\penalty0 165 --
  184, 2005.
\newblock ISSN 0022-0531.

\bibitem[Bouveret and Lang(2008)]{BouveretL08}
S.~Bouveret and J.~Lang.
\newblock Efficiency and envy-freeness in fair division of indivisible goods:
  Logical representation and complexity.
\newblock \emph{J. Artif. Intell. Res.}, 32:\penalty0 525--564, 2008.

\bibitem[Brams and Taylor(1996)]{brams1996fair}
S.~J. Brams and A.~D. Taylor.
\newblock \emph{Fair Division: From cake-cutting to dispute resolution}.
\newblock Cambridge University Press, 1996.

\bibitem[Brandt et~al.(2016)Brandt, Conitzer, Endriss, Lang, and
  Procaccia]{BCELP2016}
F.~Brandt, V.~Conitzer, U.~Endriss, J.~Lang, and A.~D. Procaccia.
\newblock \emph{Handbook of Computational Social Choice}.
\newblock Cambridge University Press, USA, 1st edition, 2016.
\newblock ISBN 1107060435.

\bibitem[Budish(2011)]{Budish11}
E.~Budish.
\newblock The combinatorial assignment problem: Approximate competitive
  equilibrium from equal incomes.
\newblock \emph{Journal of Political Economy}, 119\penalty0 (6):\penalty0
  1061--1103, 2011.

\bibitem[Caragiannis et~al.(2019{\natexlab{a}})Caragiannis, Gravin, and
  Huang]{caragiannis2019envy}
I.~Caragiannis, N.~Gravin, and X.~Huang.
\newblock Envy-freeness up to any item with high {Nash} welfare: The virtue of
  donating items.
\newblock In \emph{Proceedings of the 2019 ACM Conference on Economics and
  Computation}, pages 527--545, 2019{\natexlab{a}}.

\bibitem[Caragiannis et~al.(2019{\natexlab{b}})Caragiannis, Kurokawa, Moulin,
  Procaccia, Shah, and Wang]{caragiannis2019unreasonable}
I.~Caragiannis, D.~Kurokawa, H.~Moulin, A.~D. Procaccia, N.~Shah, and J.~Wang.
\newblock The unreasonable fairness of maximum {Nash} welfare.
\newblock \emph{ACM Transactions on Economics and Computation (TEAC)},
  7\penalty0 (3):\penalty0 1--32, 2019{\natexlab{b}}.

\bibitem[Chaudhury et~al.(2020)Chaudhury, Kavitha, Mehlhorn, and
  Sgouritsa]{chaudhury2020little}
B.~R. Chaudhury, T.~Kavitha, K.~Mehlhorn, and A.~Sgouritsa.
\newblock A little charity guarantees almost envy-freeness.
\newblock In \emph{Proceedings of the Fourteenth Annual ACM-SIAM Symposium on
  Discrete Algorithms}, pages 2658--2672. SIAM, 2020.

\bibitem[Christoforou et~al.(2016)Christoforou, Fernández~Anta, and
  Santos]{mechanism}
E.~Christoforou, A.~Fernández~Anta, and A.~Santos.
\newblock A mechanism for fair distribution of resources without payments.
\newblock \emph{PloS one}, 11:\penalty0 e0155962, 05 2016.
\newblock \doi{10.1371/journal.pone.0155962}.

\bibitem[Dutta and Ray(1989)]{DR89}
B.~Dutta and D.~Ray.
\newblock A concept of egalitarianism under participation constraints.
\newblock \emph{Econometrica}, 57\penalty0 (3):\penalty0 615--635, 1989.

\bibitem[Edmonds(2003)]{edmonds2003submodular}
J.~Edmonds.
\newblock Submodular functions, matroids, and certain polyhedra.
\newblock In \emph{Combinatorial Optimization-Eureka, You Shrink!}, pages
  11--26. Springer, 2003.

\bibitem[Freeman et~al.(2020)Freeman, Shah, and Vaish]{freeman2020best}
R.~Freeman, N.~Shah, and R.~Vaish.
\newblock Best of both worlds: Ex-ante and ex-post fairness in resource
  allocation.
\newblock In \emph{{EC} '20: The 21st {ACM} Conference on Economics and
  Computation, Virtual Event, Hungary, July 13-17, 2020}, 2020.

\bibitem[Freitas(2010)]{Freitas2010}
G.~Freitas.
\newblock Combinatorial assignment under dichotomous preferences.
\newblock Manuscript, 2010.

\bibitem[Ghodsi et~al.(2018)Ghodsi, Hajiaghayi, Seddighin, Seddighin, and
  Yami]{Ghodsi+2018}
M.~Ghodsi, M.~Hajiaghayi, M.~Seddighin, S.~Seddighin, and H.~Yami.
\newblock Fair allocation of indivisible goods: Improvements and
  generalizations.
\newblock In \emph{Proceedings of the 2018 ACM Conference on Economics and
  Computation}, EC'18, pages 539--556, 2018.

\bibitem[Gourves and Monnot(2019)]{GM2019}
L.~Gourves and J.~Monnot.
\newblock On maximin share allocations in matroids.
\newblock \emph{Theoretical Computer Science}, 754:\penalty0 50 -- 64, 2019.
\newblock Algorithms and Complexity.

\bibitem[Halpern et~al.(2020)Halpern, Procaccia, Psomas, and
  Shah]{halpern2020fair}
D.~Halpern, A.~D. Procaccia, A.~Psomas, and N.~Shah.
\newblock Fair division with binary valuations: One rule to rule them all.
\newblock In \emph{Web and Internet Economics - 16th International Conference},
  2020.

\bibitem[Kurokawa et~al.(2018{\natexlab{a}})Kurokawa, Procaccia, and
  Shah]{KPS2018}
D.~Kurokawa, A.~D. Procaccia, and N.~Shah.
\newblock Leximin allocations in the real world.
\newblock \emph{ACM Trans. Econ. Comput.}, 6\penalty0 (3--4), Oct.
  2018{\natexlab{a}}.

\bibitem[Kurokawa et~al.(2018{\natexlab{b}})Kurokawa, Procaccia, and
  Wang]{KPW18}
D.~Kurokawa, A.~D. Procaccia, and J.~Wang.
\newblock Fair enough: Guaranteeing approximate maximin shares.
\newblock \emph{J. ACM}, 65\penalty0 (2), Feb. 2018{\natexlab{b}}.

\bibitem[Lee(2017)]{lee2017apx}
E.~Lee.
\newblock Apx-hardness of maximizing nash social welfare with indivisible
  items.
\newblock \emph{Information Processing Letters}, 122:\penalty0 17--20, 2017.

\bibitem[Lehmann et~al.(2006)Lehmann, Lehmann, and Nisan]{LLN}
B.~Lehmann, D.~J. Lehmann, and N.~Nisan.
\newblock Combinatorial auctions with decreasing marginal utilities.
\newblock \emph{Games and Economic Behavior}, 55\penalty0 (2):\penalty0
  270--296, 2006.

\bibitem[Lipton et~al.(2004)Lipton, Markakis, Mossel, and Saberi]{LiptonMMS04}
R.~J. Lipton, E.~Markakis, E.~Mossel, and A.~Saberi.
\newblock On approximately fair allocations of indivisible goods.
\newblock In \emph{Proceedings of the 5th {ACM} Conference on Electronic
  Commerce}, EC'04, pages 125--131, 2004.

\bibitem[Mishra and Roy(2013)]{MishraR2013}
D.~Mishra and S.~Roy.
\newblock Implementation in multidimensional dichotomous domains.
\newblock \emph{Theoretical Economics}, 8\penalty0 (2):\penalty0 431--466,
  2013.

\bibitem[Moulin(2004)]{moulin2004fair}
H.~Moulin.
\newblock \emph{Fair division and collective welfare}.
\newblock MIT press, 2004.

\bibitem[Nguyen and Vohra(2012)]{nguyen2012allocation}
T.~Nguyen and R.~Vohra.
\newblock The allocation of indivisible objects via rounding.
\newblock Technical report, Working Paper, 2012.

\bibitem[Ortega(2020)]{Ortega2020}
J.~Ortega.
\newblock Multi-unit assignment under dichotomous preferences.
\newblock \emph{Mathematical Social Sciences}, 103:\penalty0 15 --24, 2020.

\bibitem[Plaut and Roughgarden(2018)]{plaut2018almost}
B.~Plaut and T.~Roughgarden.
\newblock Almost envy-freeness with general valuations.
\newblock In \emph{Proceedings of the Twenty-Ninth Annual ACM-SIAM Symposium on
  Discrete Algorithms}, pages 2584--2603. SIAM, 2018.

\bibitem[Roth et~al.(2005)Roth, Sonmez, and Utku~Unver]{RuthSU2005}
A.~E. Roth, T.~Sonmez, and M.~Utku~Unver.
\newblock {Pairwise kidney exchange}.
\newblock \emph{Journal of Economic Theory}, 125\penalty0 (2):\penalty0
  151--188, December 2005.

\bibitem[Segal-Halevi(2019)]{segal2019fair}
E.~Segal-Halevi.
\newblock Fair division with bounded sharing.
\newblock \emph{arXiv preprint arXiv:1912.00459}, 2019.

\end{thebibliography}


\begin{thebibliography}{21}
	
	\bibitem{AMNS17} Georgios Amanatidis, Evangelos Markakis, Afshin Nikzad, Amin Saberi:
	Approximation Algorithms for Computing Maximin Share Allocations. ACM Trans. Algorithms 13(4): 52:1--52:28 (2017).
	
	
	\bibitem{BDO18} Moshe Babaioff, Shahar Dobzinski, Sigal Oren:
	Combinatorial Auctions with Endowment Effect. CoRR abs/1805.10913 (2018)
	
	\bibitem{BNT17} Moshe Babaioff, Noam Nisan, Inbal Talgam-Cohen:
	Competitive Equilibria with Indivisible Goods and Generic Budgets. CoRR abs/1703.08150 (2017).
	
	\bibitem{Farhadi+2019}	Alireza Farhadi, Mohammad Ghodsi, Mohammad Taghi Hajiaghayi, Sébastien Lahaie, David M. Pennock, Masoud Seddighin, Saeed Seddighin, Hadi Yami:
	Fair Allocation of Indivisible Goods to Asymmetric Agents. J. Artif. Intell. Res. 64: 1--20 (2019).
	
	\bibitem{Feige09} Uriel Feige:
	On Maximizing Welfare When Utility Functions Are Subadditive. SIAM J. Comput. 39(1): 122--142 (2009).
	
	
	\bibitem{FMT16} Uriel Feige, Michal Feldman, Inbal Talgam-Cohen:
	Oblivious Rounding and the Integrality Gap. APPROX-RANDOM 2016: 8:1--8:23.
	
	\bibitem{FT14} Uriel Feige, Moshe Tennenholtz:
	On fair division of a homogeneous good. Games and Economic Behavior 87: 305--321 (2014).
	
	\bibitem{FV10} 	Uriel Feige, Jan Vondrák:
	The Submodular Welfare Problem with Demand Queries. Theory of Computing 6(1): 247--290 (2010).
	
	\bibitem{Ghodsi+2018} Mohammad Ghodsi, Mohammad Taghi Hajiaghayi, Masoud Seddighin, Saeed Seddighin, Hadi Yami: Fair Allocation of Indivisible Goods: Improvements and Generalizations. EC 2018: 539--556.
	
	\bibitem{Moulin90} Herve Moulin: Uniform externalities: two axioms for fair allocation
	J. Public Econ., 43 (1990), pp. 305--326.
	
	\bibitem{Segal18} Erel Segal-Halevi:
	Competitive Equilibrium For almost All Incomes. AAMAS 2018: 1267--1275.
	
%
\end{thebibliography}
\pagebreak

\appendix
\section{Approximate truthfulness and approximate fairness notions}\label{app:approx-fair}
Each of the notions of maximin share, EF, EF1 and EFX can be relaxed to hold only up to an multiplicative term of $\alpha$.
We next formally defined those notions.

For a given valuations functions $f=(f_1,f_2,\ldots,f_n)$ and $\alpha\in [0,1]$: 
\begin{itemize}
	\item An allocation $A$ is {\em $\alpha$-maximin fair} if $f_v(A_v)$ is at least $\alpha$ fraction of his maximin share for every agent $v\in V$.
	We use $\alpha$-{maximin}$(f_v)$ to denote this share.
	\item An allocation $A$ is \emph{$\alpha$-EF}  if for every $v_1,v_2 \in V$ $f_{v_1}(A_{v_1}) \geq \alpha\cdot f_{v_1}(A_{v_2})$.
	\item An allocation $A$ is \emph{$\alpha$-EF1}  if for every $v_1,v_2 \in V$ such that $A_{v_2}$ is not empty there exist an item $a \in A_{v_2}$ such that $f_{v_1}(A_{v_1}) \geq \alpha\cdot f_{v_1}(A_{v_2} \setminus \{a\})$.
 	\item An allocation $A$ is \emph{$\alpha$-EFX}  if for every $v_1,v_2 \in V$ and for every $a \in A_{v_2}$ it holds that $f_{v_1}(A_{v_1}) \geq \alpha\cdot f_{v_1}(A_{v_2} \setminus \{a\})$.
 	\item A mechanism $\mech$ is $\epsilon$-truthful if for every agent $v$ with valuation $f_v$ and any reports of the other agents $f_{-v}$, and any report $f_v'$ of $v$, it holds that $$(1+\epsilon) f_v(\mech(f_v,f_{-v})_v) \geq  f_v(\mech(f_v',f_{-v})_v). $$
 	I.e., agent $v$ can only increase his value by at most a multiplicative factor of $1+\epsilon$ by being non-truthful.
\end{itemize}

\section{Missing proofs for MRF Valuations}
\label{appendix:MRF}

\subsection{Fairness Properties of Lorenz Dominating Allocations}

We next restate and prove Proposition \ref{pro:LorenzIsGood}.

\LorenzIsGood*

\begin{proof}
	Let $f=(f_1,\ldots, f_n)$ denote the valuation functions of the agents, and suppose that $A = (A_1, \ldots, A_n)$ is a Lorenz dominating allocation.
	
	\begin{enumerate}
		
		\item Let $A'$ be an allocation that maximizes welfare. Then for the sorted utility vectors, $s_{A,f} \ge_{Lorenz} s_{A',f}$ implies (by taking $k=n$) that $\sum_{i=1}^n f_i(A_i) \ge \sum_{i=1}^n f_i(A'_i)$. Consequently, $A$ maximizes welfare.
		
		\item Allocation $A$ is also a lexmin allocation, because for every allocation $A'$, the inequality $s_{A,f} \ge_{lexmin} s_{A',f}$ is implied by $s_{A,f} \ge_{Lorenz} s_{A',f}$.
		
		\item Allocation $A$ can be shown to maximize NSW by considering the entries of $s_{A,f}$ in a forward order, and using concavity of the product function $\prod_i x_i$ (for non-negative variables).
		
		\item Allocation $A$ maximizes welfare (as required by min-square allocations), and can be shown to be min-square by considering the entries of $s_{A,f}$ in a backward order, and using convexity of the min-square function $\sum_i (x_i)^2$.
		
	\end{enumerate}
	
	Suppose now that the valuations are MRFs and that the Lorenz dominating allocation is non-redundant. In this case, we prove that the allocation is EFX. Let $i,j \in [n]$ be such that {$i$ envies $j$, that is} $f_i(A_j) > f_i(A_i)$. By the non-redundancy property we have that $f_i(A_i) = |A_i|$. If $|A_j| \le |A_i| + 1$, then the EFX condition holds. Hence suppose for the sake of contradiction that $|A_j| \ge |A_i| + 2$. By the non-redundancy property we have that $f_j(A_j) = |A_j| \ge f_i(A_i) + 2$. By the matroid exchange property of MRFs it follows that there is an item $e \in A_j$ such that $f_i(A_i \cup \{e\}) = f_i(A_i) + 1$ (and $f_j(A_j \setminus \{e\}) = f_j(A_j) - 1$). Moving item $e$ from agent $j$ to agent $i$ gives an allocation that Lorenz dominates $A$, thus contradicting the assumption that $|A_j| \ge |A_i| + 2$.

	Finally, suppose that the valuations are additive dichotomous. In this case, we prove that any Lorenz dominating allocation $A$ is maximin fair. Let $D_v$ be the set of demand items for agent $v$. The maximin share of $v$ is $t = \left\lfloor \frac{|D_v|}{n} \right\rfloor$. Assume for the sake of contradiction that $v$ receives at most $t-1$ items from $D_v$.
	As $A$ is welfare maximizing, all of $D_v$ is allocated, and items in $D_v$ are only allocated to agents that demand these items.
	Hence some other agent, say $u$, receives at least $t+1$ items from $D_v$, and has utility at least $t+1$. Moving one such item from $u$ to $v$ results in an allocation that Lorenz dominates $A$. As no allocation can Lorenz dominate the Lorenz dominating allocation $A$, it must be that $v$ receives at least his maximin share.
\end{proof}

We next restate and prove Proposition \ref{pro:LorenzIsBad}.

\LorenzIsBad*

\begin{proof}
	Consider a set $M$ of $m = (2n-1)n$ items that is partitioned into two sets: a set $G$ with $(n-1)n$ items and a set $B$ with $n^2$ items. Define the MRF $f_1$ as $f_1(S) = |S \cap G| + \min[|S \cap B|,n]$. For $2 \le j \le n$, define the MRF $f_j$ as $f_j(S) = |S \cap G|$. The maximin share of agent~1 is $2n-1$ (partition $M$ into $n$ bundles, each containing $n-1$ items from $G$ and $n$ items from $B$). However, every Lorenz dominating allocation gives each agent $n$ items (agent~1 gets $n$ items from $B$, and the other agents each get $n$ items from $G$). Hence Lorenz domination does not imply maximin fairness.
	
	Now let $f = \{f_1, \ldots, f_n\}$ be a collection of arbitrary MRF valuation functions and let $A = \{A_1, \ldots, A_n \}$ be an arbitrary Lorenz dominating allocation.
	Suppose that the maximin share of agent~1 is $t$. This means that there is a partition of the items into $n$ bundles $(S_1, \ldots , S_n)$, with $f_1(S_j) \ge t$ for all $1 \le j \le n$. 
	Denote $f_1(A_1)$ by $t'$, and suppose for the sake of contradiction that $t' < \frac{t}{2}$. Then the matroid exchange property implies that for every $1 \le j \le n$, the set $S_j$ contains at least $t - t' \ge t' + 1$ distinct items such that each one of them has marginal value~1 to $f_1$ with respect to $A_1$. Call these items {\em valuable} and note that the total number of valuable items is at least $n(t'+1)$. No agent $i \not= 1$ can hold more than $t' + 1$ valuable items, as the Lorenz dominating allocation would transfer such an item from agent $i$ to agent~1. Hence there are at most $(n-1)(t'+1)$ valuable items allocated to other agents, which implies that some valuable item remains unallocated. This contradicts the assumption that $A$ is Lorenz dominating.  
\end{proof}

\subsection{Missing proofs from Section \ref{sec:Lorntz-exists}}
\label{sec:PE-mech-appendix}

In the proof of Theorem~\ref{thm:LorenzDuttaRay} we made use of Lemma~\ref{cor:GStoSubmodular}. Before proving that Lemma, we introduce a key lemma that will serve us in several of our proofs. The lemma basically shows that we can move from one allocation closer (with respect to utilities) to the other, by ``moving an item" from an agent that got too much, to one that got too little. The item that one agent lost  might not be the same as the item another agent gains, and in  the process we might need to exchange some items between the agents.

\begin{lemma}
	\label{lem:GraphArgument}
	Let $f = (f_1, \ldots, f_n)$ be MRF valuations, and let $A = (A_0, A_1, \ldots, A_n)$ and $B = (B_0, B_1, \ldots, B_n)$ be two non-redundant allocations, where $A_0$ and $B_0$ specify the sets of items that remain unallocated.
	Let $S^+ \subseteq \{0, 1, \ldots, n\}$ be the set of those indices $j$ for which $|A_j| > |B_j|$, let $S^-$ be the set of indices $j$ for which $|A_j| < |B_j|$,  and let $S^=$ be the set of indices $j$ for which $|A_j| = |B_j|$. Suppose that $S^- \setminus \{0\}$ is not empty (consequently also $S^+$ is not empty), and let $i$ be an agent with $i\in S^-$. 
	Then there is a non-redundant allocation $C = (C_0, C_1, \ldots, C_n)$ with the following properties:
	
	\begin{enumerate}
		
		\item $|C_i| = |A_i| + 1$.
		
		\item There is precisely one index $k \in S^+$ for which $|C_k| = |A_k| - 1$.
		
		\item For every $j \not\in \{i,k\}$ it holds that $|C_j| = |A_j|$.
		
		\item For every $j \in S^+$ it holds that $C_j \subseteq A_j$ (with set equality unless $j = k$).
		
	\end{enumerate}
	
\end{lemma}

Before proving the lemma, let us provide some intuition. Consider three agents $p_1, p_2, p_3$ and two items $e_1,e_2$, where agents have additive dichotomous valuations, $p_1$ desires only $e_1$, and $p_2$ and $p_3$ both desire both items. {We consider allocations that leave no item unallocated.
	Let $A = (A_1,A_2,A_3)=(\phi, \{e_1\}, \{e_2\})$ and $B = (B_1,B_2,B_3)= (\{e_1\}, \{e_2\}, \phi)$.} 
Then $S^+ = \{p_3\}$, $S^= = \{p_2\}$ and $S^- = \{p_1\}$. Hence only $p_1$ can serve as $i$ in Lemma~\ref{lem:GraphArgument}, only $p_3$ can serve as $k$, and only $B$ can serve as $C$ (by simple case analysis.) Observe that transforming $A$ to $C = B$ involves not only $p_3$ giving up an item and $p_1$ receiving an item, but also $p_2$ exchanging an item.

In general, when valuation functions are additive dichotomous, it can be shown that transforming from $A$ to $C$ as required by Lemma~\ref{lem:GraphArgument} can be achieved by agent $i$ gaining an item, agent $k$ losing an item, and any other agent in $S^{-} \cup S^=$ exchanging at most one of his items. However, when valuations are submodular dichotomous (MRF), this is no longer true.

Consider four agents $p_1, p_2, p_3, p_4$ and four items $e_1,e_2,e_3,e_4$. Agents $p_1$, $p_3$ and $p_4$ have additive dichotomous valuations, $p_1$ desires only $e_1$, $p_4$ desires only $e_4$, and $p_3$ desires items $e_2$ and $e_3$. The valuation function $f_2$ of $p_2$ is MRF, with $f_2(S) = |S \cap \{1,3\}| + \min[1, |S \cap \{2,4\}|$. 
{We consider allocations that leave no item unallocated.}
Let $A = (\phi, \{e_1, e_2\}, \{e_3\}, \{e_4\})$ and $B = (\{e_1\}, \{e_3, e_4\}, \{e_2\}, \phi)$. Then $S^+ = \{p_4\}$, $S^= = \{p_2, p_3\}$ and $S^- = \{p_1\}$. Hence only $p_1$ can serve as $i$ in Lemma~\ref{lem:GraphArgument}, only $p_4$ can serve as $k$, and only $B$ can serve as $C$ ({by simple case analysis, when observing} that $f_2(\{e_2,e_4\}) = 1 < 2$). Transforming $A$ to $C = B$ involves changing the allocation of all items. In particular, $p_2$ exchanges both of his items.

We now proceed with the proof of Lemma~\ref{lem:GraphArgument}.

\begin{proof}(Lemma~\ref{lem:GraphArgument}) 
	For an item $e$, let $A(e)$ denote the agent that $e$ is allocated to under $A$ (and 0 if $e$ is not allocated), and let $B(e)$ denote the agent that $e$ is allocated to under $B$.
	
	Consider the following labeled directed (multi-) graph $G_{A \rightarrow B}$ (it may have parallel edges) with nodes $v_0, v_1, \ldots v_{n}$.  For every item $e$, if $B(e) \not = A(e)$ then place a directed edge $(v_{A(e)},v_{B(e)})$ and label it by $e$. Nodes in $S^-$ have higher in-degree than out-degree, nodes in $S^=$ have the same in-degree as out-degree, and nodes in $S^+$ have higher out-degree than in-degree. A directed (not necessarily simple) path in the graph will simply be referred to as a path. A path will be called {\em legal} if starting at allocation $A$ and transfer those items that label the edges of the path (each such item is transfered from the agent who holds it under $A$ to the agent who holds it under $B$) results in a non-redundant allocation. Recall that $i \in S^-$ and that $S^+$ is nonempty. A legal path will be called {\em useful} if it starts at $S^+$ (let $v_k$ denote its starting vertex), then never visits $S^+$ again, and ends at $v_i$. A useful path must exist, by the following inductive argument that constructs a useful path by starting at the end of the useful path (at an edge entering $v_i$) and working backwards towards the beginning of the path (to an edge leaving $v_k$). 
	
	Start at $v_{i}$. As $v_{i}\in S^-$, we have that $|B_i| > |A_i|$. The matroid exchange property implies that there must be at least $|B_i| - |A_i|$ different items such that if we transfer any of them from the agent holding it under $A$ to agent $i$, the allocation remains non-redundant. Choose one such item $e$. ({\bf Remark.} For the purpose of proving the lemma, $e$ can be chosen arbitrarily. However, when we use this Lemma in the proof of Lemma~\ref{lem:strongfaith}, we shall choose the item $e$ in a more careful way.) Item $e$ necessarily labels an incoming edge into $v_i$, say from vertex $v_j$. Transfer $e$ from $v_j$ to $v_i$ (and include the edge labeled by $e$ in the useful path). This changes $A$ into a new allocation $A'$. Now consider $v_j$. If $v_j \in S^+$ we are done. Hence it remains to address the case that $v_j \not\in S^+$. 
	For this we consider the labeled directed graph $G_{A' \rightarrow B}$, which is obtained from $G_{A \rightarrow B}$ by removing the edge labeled by $e$.
	Analogously to the definition with respect to $A$, we now have new sets $\hat{S}^{+}$, $\hat{S}^{-}$, $\hat{S}^{=}$ with respect to $A'$. Observe that $v_j \in \hat{S}^{-}$  (because $v_j \in S^= \cup S^-$ and it lost an item), that $v_i \in \hat{S}^{=} \cup \hat{S}^{-}$ (because $v_i \in S^-$ and it gained an item), and all other agents remain in their original sets (in particular, $\hat{S}^{+} = S^+$). Hence now the argument can be repeated from $v_j \in \hat{S}^{-}$. 
	Eventually, we must reach a vertex in $S^+$, as the number of edges decreases in each iteration, and there always is at least one edge incident with $S^+$.
	
	Doing all the transfers implied by the edges of the useful path gives the desired allocation $C$.
\end{proof}

Given MRF valuation functions, consider the following {\em greedy algorithm} for generating an allocation. Fix an arbitrary priority order among agents, say from~1 to $n$, and for all $1 \le i \le n$, let $S_i$ denote the set of the first $i$ agents. 
Each agent in his turn is allocated the largest possible number $n_i$ of items subject to the constraint that there is a non-redundant allocation $A^i = (A_1^i, \ldots, A_i^i)$ such that $|A_j^i| = n_j$ for every $1 \le j \le i$. 
That is, agent $i$ gets the maximum possible number of (non-redundant) items, subject to preserving the utilities of all agents that precede $i$.

\begin{lemma}
	\label{lem:greedyMRF}
	If agents have MRF valuations, then for every $i \in [n]$, the above greedy algorithm gives an allocation that attains $W(S_i)$ (maximizes welfare for the set of first $i$ agents in the priority order).
\end{lemma}

\begin{proof}
	Given the priority order over agent, let $B$ be an allocation produced by the greedy algorithm. Among all allocations that maximize welfare, let $A$ be an allocation whose utility vector (sorted according to the priority order) is lexicographically largest. We claim that $A$ and $B$ have the same utility vector. For the sake of contradiction, suppose otherwise. Then by the greedy choice of $B$, there must be an index $\ell$ such that for all $i < \ell$ we have $|A_i| = |B_i|$, and $|A_{\ell}| < |B_{\ell}|$. Apply Lemma~\ref{lem:GraphArgument} with $i = \ell$. This causes $|A_{\ell}|$ to increase by one, and the agent $k$ who loses an item must have index larger than $\ell$ (as $\{1, \ldots, k-1\} \in S^=$). This contradicts the choice of $A$ as lexicographically largest.
\end{proof}

We can now prove Lemma~\ref{cor:GStoSubmodular} that was used in the proof of Theorem~\ref{thm:LorenzDuttaRay}.

\begin{lemma}
	\label{cor:GStoSubmodular}
	For the valuation function $f' = (f'_1, \ldots, f'_n)$ defined above (MRFs augmented with the auxiliary items), the respective welfare function $W$ (defined in the proof of Theorem~\ref{thm:LorenzDuttaRay}) is submodular.
\end{lemma}

\begin{proof}
	Let $V$ denote the set of agents. To show that $W$ is submodular, we need to show that for every set $S \subset V$ and every two agents $u,v \subset V \setminus S$ it holds that $W(S \cup \{u\}) + W(S \cup \{v\}) \ge W(S) + W(S \cup \{u,v\})$. For the purpose of proving this inequality, we may ignore the auxiliary items, as their contribution to the utility is additive, and hence they contribute equally to both sides of the inequality.
	
	Fix a priority order over $S \cup \{u,v\}$ such that agents in $S$ appear first, {then $u$ followed by $v$.} 
	Apply the greedy algorithm on this order to find an allocation $A = (A_S, A_u, A_v)$ that attains $W(S \cup \{u,v\})$. By Lemma~\ref{lem:greedyMRF}, the allocation $A_S$ attains $W(S)$. As $(A_S, A_u)$ is a non-redundant allocation for $S \cup \{u\}$ and $(A_S, A_v)$ is a non-redundant allocation for $S \cup \{v\}$ we get:
	
	\begin{eqnarray*}
		W(S \cup \{u\}) + W(S \cup \{v\}) &\ge& (|A_S| + |A_u|) + (|A_S| + |A_v|) \\
		&=& |A_S| + (|A_S| + |A_u| + |A_v|) \\
		&=& W(S) + W(S \cup \{u,v\})
	\end{eqnarray*}
	proving the corollary.
\end{proof}

\subsection{Missing proofs from Section \ref{sec:PE-mech}}

\begin{proposition}
	\label{pro:faithful}
	The PE mechanism is faithful.
\end{proposition}

\begin{proof}
	Let $A = (A_1, \ldots, A_n)$ be a non-redundant Lorenz dominating allocation under $f = (f_1, \ldots, f_n)$. Replacing $f_i$ by $f_{i|A_i}$ does not enlarge the set of non-redundant allocation, and so $A$ remains Lorenz dominating. By Theorem~\ref{thm:LorenzDuttaRay}, all Lorenz dominating allocations have the same value vector, and hence agent $i$ must receive $A_i$ in the new non-redundant Lorenz dominating allocation.
\end{proof}


\begin{lemma}
	\label{lem:monotone}
	The PE mechanism is monotone.
\end{lemma}

\begin{proof}
	Fix the valuation functions $f = (f_1, \ldots, f_n)$, an agent $p$, and two sets of items $S$ and $T$ with $S \subset T$. Every agent $j \not= p$ reports $f_j$. Let $A$ be the allocation output by the PE mechanism when agent~$p$ reports $f_{p|S}$, and let $B$ be the allocation when agent~$p$ reports $f_{p|T}$. To prove monotonicity in general, we may assume that $|T| = |S|+1$ (and use induction if $|T| > |S|+1$). Hence $T$ differs from $S$ by one item, and let us call this item $a$. We need to show that $f_{p|T}(B_p) \ge f_{p|T}(A_p)$. Given that the allocations produced by PE are non-redundant, this translates to proving that $|B_p| \ge |A_p|$. Assume for the sake of contradiction that $|B_p| < |A_p|$. In this case, necessarily $a \in B_p$ (because otherwise $B$ could be output by PE instead of $A$ when agent $p$ reported $f_{p|S}$, and as $A$ and $B$ have different utility vectors, at least one of them is not Lorenz dominating).

	Let $S^+$ be the set of those indices $j$ for which $|A_j| > |B_j|$, let $S^-$ be the set of indices $j$ for which $|A_j| < |B_j|$,  and let $S^=$ be the set of indices $j$ for which $|A_j| = |B_j|$. Observe that $p \in S^+$ and hence $S^+$ is non-empty. As $B$ allocates at least as many items as $A$ (since it is welfare maximizing), then $S^-$ is non-empty as well. Among all agents in $S^+$, let $q$ denote the unique agent that minimizes $|B_q|$, breaking ties in favor of higher priority agents.  Likewise, among all agents in $S^-$, let $r$ denote the unique agent that minimizes $|A_r|$, breaking ties in favor of higher priority agents. Now there are two cases to consider, and each of them leads to a contradiction.
	
	\begin{itemize}
		
		\item Either $|A_r| < |B_q|$, or $|A_r| = |B_q|$ and $r$ has higher priority than $q$. In this case,  Lemma~\ref{lem:GraphArgument} (with $i=r$) implies that we can transform $A$ into an allocation in which $r$ gains an item, and an agent $k$ in $S^+$ loses an item. Note that even after losing an item, $k$ has at least $|B_k|$ items, implying (together with the condition defining the case) that the new allocation Lorenz dominates $A$ (with respect to the priority order of the PE mechanism). This contradicts the assumption that $A$ was chosen by the PE mechanism.
		
		\item Either $|A_r| > |B_q|$, or $|A_r| = |B_q|$ and $r$ has lower priority than $q$. In this case switch the roles of $A$ and $B$ (and of $S^+$ and $S^-$) and apply Lemma~\ref{lem:GraphArgument} with $i = q$. The lemma transforms $B$ into an allocation in which $q$ gains an item, and some agent $k$ (in the original $S^-$) loses an item, and this new allocation Lorenz-dominates $B$.
		
	\end{itemize}
	
\end{proof}

\begin{lemma}
	\label{lem:strongfaith}
	The PE mechanism is strongly faithful.
\end{lemma}

\begin{proof}
	Fix the MRF valuation functions of all agents, and let $B_v$ be the allocation to agent $v$ who reported $f_v$. By faithfulness (Proposition~\ref{pro:faithful}), if $v$'s report changes to $f_{v|B_v}$ then the allocation to $v$ remains $B_v$. Let $B = (B_1, \ldots, B_n)$ be the full allocation at this point. Now consider a set $S \subset B_v$ with $|S| = |B_v|-1$. To prove strong faithfulness it suffices to show that the allocation to $v$ when $v$ reports $f_{v|S}$ is $S$ (and then induction implies the same when $|S| < |B_v|-1$). 
	
	Let $A = (A_1, \ldots, A_n)$ be the full allocation when $v$ reports $f_{v|S}$.  Assume for the sake of contradiction that $A_v \not= S$. Then by non-redundancy, $|A_v| \le |B_v| - 2$. Let $e$ denote the unique item in $B_v \setminus S$. Consider the sets $S^+$, $S^-$ and $S^=$ as in Lemma~\ref{lem:GraphArgument}, and observe that $v \in S^-$. As the number of items allocated under $B$ is at most one more than under $A$ (and recall that $|A_v| \le |B_v| - 2$), $S^+$ includes at least one agent. Now we complete the proof via an argument copied almost verbatim from the proof of Lemma~\ref{lem:monotone}, except for a small change in the first of the two cases below.
	
	Among all agents in $S^+$, let $q$ denote the unique agent that minimizes $|B_q|$, breaking ties in favor of higher priority agents.  Likewise, among all agents in $S^-$, let $r$ denote the unique agent that minimizes $|A_r|$, breaking ties in favor of higher priority agents. Now there are two cases to consider, and each of them leads to a contradiction.
	
	\begin{itemize}
		
		\item Either $|A_r| < |B_q|$, or $|A_r| = |B_q|$ and $r$ has higher priority than $q$. In this case,  Lemma~\ref{lem:GraphArgument} (with $i=r$) implies that we can transform $A$ into an allocation in which $r$ gains an item, and an agent $k$ in $S^+$ loses an item. Importantly (see the remark in the proof of Lemma~\ref{lem:GraphArgument}), we can do so without transferring item $e$ to agent $v$ (recall that in the setting of $A$, agent $v$ does not desire item $e$), because $|B_v| \ge |A_v| + 2$.  Note that even after losing an item, $k$ has at least $|B_k|$ items, implying (together with the condition defining the case) that the new allocation Lorenz dominates $A$ (with respect to the priority order of the PE mechanism). This contradicts the assumption that $A$ was chosen by the PE mechanism.
		
		\item Either $|A_r| > |B_q|$, or $|A_r| = |B_q|$ and $r$ has lower priority than $q$. In this case switch the roles of $A$ and $B$ (and of $S^+$ and $S^-$) and apply Lemma~\ref{lem:GraphArgument} with $i = q$. The lemma transforms $B$ into an allocation in which $q$ gains an item, and some agent $k$ (in the original $S^-$) loses an item, and this new allocation Lorenz-dominates $B$.
		
	\end{itemize}
\end{proof}

\begin{remark}
	The combination of strong faithfulness and monotonicity implies a Lipschitz property for PE. That is, if agent $i$ changes its report from $f_{i|S}$ to $f_{i|(S \cup \{a\})}$ (for an MRF $f_i$, a set $S$, and an item $a \not\in S$), then the number of items allocated to $i$ increases by at most~1 (and does not decrease, by monotonicity). To see this, suppose that for $f_{i|S}$ the allocation to agent $i$ is $A_i$, for $f_{i|(S \cup \{a\})}$ the allocation is $A'_v$, and $|A'_v| \ge |A_v| + 2$. By faithfulness, reporting $f_{i|A'_v}$ the allocation remains $A'_v$. By strong faithfulness, reporting $f_{i|(A'_v \setminus \{a\})}$ gives the allocation $A'_v \setminus \{a\}$, and this allocation contains more items than $A_v$. As $(A'_v \setminus \{a\}) \subset S$, this contradicts monotonicity.
\end{remark}

\subsection{Polynomial Time Algorithm for the PE Mechanism}
We next restate and prove Theorem \ref{thm:MRF-poly-time}.

\MRFpolytime*

\begin{proof}
	Given a priority order $\sigma$ (w.l.o.g., from~1 to $n$) over the agents and MRF valuation functions $f_1, \ldots, f_n$, we reduce the problem of finding a Lorenz dominating allocation (with respect to $\sigma$) to a polynomial sequence of matroid intersection problems.

	The matroid intersection problem has a set $S$ of $mn$ items, arranged in $n$ groups, $S_1, \ldots S_n$, each with $m$ items. For every item $e\in M$, make $n$ copies $e_1, \ldots, e_n$ of the item, placing $e_i$ in $S_i$ for every $i$. We now define two matroids over $S$. A set $T$ is independent in matroid $M_1$ if and only if for every $i\in [n]$ the set $T \cap S_i$ is independent with respect to the MRF $f_i$. A set is independent in matroid $M_2$ if for each of the original items $e\in M$, the set $T$ contains at most one of its copies. There is a natural bijection between the set of non-redundant allocations and the independent sets of $M_1 \cap M_2$. In this bijection, an allocation $A$ is mapped to the set $T$ that is independent in $M_1 \cap M_2$, where $e_j \in T$ iff item $e$ is allocated to agent $j$ under $A$. Hence allocations that maximize welfare correspond to independent sets of maximum rank. These can be found by a matroid intersection algorithm~\cite{edmonds2003submodular}. Matroid intersection algorithms can be run in polynomial time if one is given oracle access to {\em independence queries} for each of the two matroids $M_1$ and $M_2$. Having a succinct representation for the MRFs (that allows answering value queries) suffices for this purpose.
	
	Let $A = (A_1, \ldots, A_n)$ be an allocation that maximizes welfare, found by the matroid intersection algorithm. As $A$ need not be Lorenz dominating, we are not done yet. To proceed,
	we use the correspondence between Lorenz dominating allocations and min-square allocations (see Proposition~\ref{pro:LorenzIsGood}). Specifically, given the priority order, it can be seen that the Lorenz dominating allocation is the one that minimizes the {\em potential} $\sum_{i\in [n]} (|A_i| + \frac{i}{n})^2$.
	
	We now present a polynomial time algorithm that tests whether $A$ is an allocation of smallest potential (among allocations that maximize welfare), and if the outcome is negative, it returns a maximum welfare allocation $C$ with smaller potential. We shall use Lemma~\ref{lem:GraphArgument}. Suppose that $A$ does not minimize potential, and let $B$ denote an arbitrary Lorenz dominating allocation (and hence it has smallest potential). Let $i \in S^-$ be the agent with smallest $|A_i|$ (breaking ties in favor of agents of higher priority). Let $j \in S^+$ be the agent with smallest $|B_i|$ (breaking ties in favor of agents of higher priority). Necessarily $|A_i| \le |B_j|$, with equality only if $i$ has higher priority than $j$, as otherwise $B$ cannot be Lorenz dominating. Applying Lemma~\ref{lem:GraphArgument}, there is an allocation $C$ whose vector of utilities differs from that of $A$ in two entries: $i$ gains one item, and an agent $k \in S^+$ loses one item. Necessarily, $C$ has smaller potential than $A$. To find such an allocation $C$ (note that neither $B$ nor the sets $S^-$ and $S^+$ are known to the algorithm), try all $O(n^2)$ choices for $(i,k)$ that make sense (e.g., no need to consider pairs for which $|A_i| > |A_k|$). The feasibility of each such choice can be checked as follows.
	Observe that for every nonnegative integer $t$ and every MRF $f_i$ the function $f_{i|\le t}$ defined as $f_{i|\le t}(T) = \min[t, f_i(T)]$ (for every set $T \subset M$), is still an MRF. Solve the matroid intersection problem with the following modified MRF valuation functions $f_{1 | \le t_1}, \ldots, f_{n | \le t_n}$, where $t_i = |A_i|+1$, $t_k = |A_k| - 1$, and $t_j = |A_j|$ for every $j \in [n] \setminus \{i,k\}$. 
	
	After at most $O(m^2n^2)$ iterations an allocation of smallest potential is found. This is because the potential is a multiple of $\frac{1}{n^2}$, and the highest possible potential is at most $(m+1)^2$. 
\end{proof}

\section{Missing proofs for $\epsilon$-Leveled Valuations} 
\label{appendix:leveled}
\subsection{Basic Observations}
We next restate and prove Observation \ref{obs:dichotmous}:
\dichotomous*
\begin{proof}
	Given a set $S=\{a_1,\ldots,a_k\}$
	\begin{eqnarray*}
	\hat{f}(S) &= &  \lfloor f(S) \rfloor  = \lfloor \sum_{i=1}^k f(a_i ~ | ~ \{a_1,\ldots a_{i-1}\}) \rfloor  \\ & = &  \lfloor \sum_{i=1}^k \lfloor f(a_i ~ | ~ \{a_1,\ldots a_{i-1}\})\rfloor + \sum_{i=1}^k  \left(f(a_i ~ | ~ \{a_1,\ldots a_{i-1}\})-\lfloor f(a_i ~ | ~ \{a_1,\ldots a_{i-1}\})\rfloor \right)  \rfloor\\ 
	&=& \sum_{i=1}^k \lfloor f(a_i ~ | ~ \{a_1,\ldots a_{i-1}\})\rfloor,
	\end{eqnarray*} 
where the last equality is since the second sum is strictly less than $1$ since we sum $k\leq m$ terms, and each is at most $\epsilon< 1/m$.
Therefore: $$\hat{f}(a~|~S) = \lfloor f(a ~ | ~ S)\rfloor \in \{0,1\}$$
and thus $\hat{f}$ is dichotomous. 

If $f$ is also additive, then so is $\hat{f}(a)$.
\end{proof}
\cout{
We observe by the proof of Observation \ref{obs:dichotmous} that if $f$ is also additive (i.e. $\epsilon$-leveled) then $\hat{f}$ remains additive.
\begin{observation}\label{obs:rounding-additive}
	For $\epsilon< \frac{1}{m}$, and any $\epsilon$-leveled valuation $f$, the function $\hat{f}$ is dichotomous and additive. 
\end{observation}

We also observe by the proof of Observation \ref{obs:dichotmous} that:
\begin{observation}\label{obs:rounding-additive-loss}
	For $\epsilon< \frac{1}{m}$, and any $\epsilon$-dichotomous valuation $f$, and every set $S \subseteq M$,
	\begin{equation}
	f(S) \leq \hat{f}(S) + m\epsilon. \label{eq:dichotomous_upper}
	\end{equation} 
\end{observation}
}

\begin{observation} \label{obs:guarantees}
	Given a mechanism $\mech$ for dichotomous valuations, let $\hat{\mech}$ be the allocation mechanism for  $\epsilon$-dichotomous valuations such that for every valuations $f$ the allocation $\hat{\mech}(f)$ is defined to be the allocation $\mech(\hat{f})$. It holds that:
	\begin{enumerate}
		\item if	$\mech$ is maximin fair, then $\hat{\mech}$ is $\epsap$-maximin fair.
		\item If $\mech$ is EF1 then $\hat{\mech}$ is $\epsap$-EF1. 
		\item If $\mech$ is EFX then $\hat{\mech}$ is $\epsap$-EFX. 
		\item If $\mech$ is truthful then $\hat{\mech}$ is $\epsilon$-truthful. 
	\end{enumerate}
\end{observation}
\begin{proof}
	 Let $A$ be the allocation $\hat{\mech}(f)$.
	\begin{enumerate}
		\item We first observe that for $\epsilon <\frac{1}{m}$ and every $\epsilon$-dichotomous valuation $f_v$, it holds that:
		\begin{equation}
		\mbox{maximin}(f_v) \leq (1+\epsilon )\mbox{maximin}(\hat{f}_v). \label{eq:maximin_dichotomous}
		\end{equation}
		This follows by considering the values of $f_v$ and $\hat{f}_v$ in the maximin partition according to $f_v$ and Equation \eqref{eq:dichotomous_upper}.
		
		We have that for each agent $v$ and any possible allocation $A_v$ (when reporting $f_v$):
		\begin{equation*}
		f_v(A_v) \stackrel{\eqref{eq:dichotomous_lower}}{\geq } \hat{f}_v(A_v) \geq \mbox{maximin}(\hat{f}) \stackrel{\eqref{eq:maximin_dichotomous}}{\geq} \epsap\mbox{maximin}(f).
		\end{equation*}

		
		\item By EF1 of $\mech$ with respect to $\hat{f}$ for every agents $v_1,v_2$, if $A_{v_2}$ is not empty there exists an item $a \in A_{v_2} $ such that: 
		$$ f_{v_1}(A_{v_1}) \stackrel{\eqref{eq:dichotomous_lower}}{\geq } \hat{f}_{v_1}(A_{v_1}) \geq \hat{f}_{v_1}(A_{v_2}\setminus \{a\}) \stackrel{ \eqref{eq:dichotomous_upper}}{\geq}\epsap f_{v_1}(A_{v_2}\setminus \{a\}).$$
		\item By EFX of $\mech$ with respect to $\hat{f}$ for every agents $v_1,v_2$ and for every item $a \in A_{v_2} $ it holds that:
		$$ f_{v_1}(A_{v_1}) \stackrel{\eqref{eq:dichotomous_lower}}{\geq } \hat{f}_{v_1}(A_{v_1}) \geq \hat{f}_{v_1}(A_{v_2}\setminus \{a\}) \stackrel{ \eqref{eq:dichotomous_upper}}{\geq}\epsap f_{v_1}(A_{v_2}\setminus \{a\}) .$$
		\item By truthfulness of $\mech$ with respect to $\hat{f}$ for every allocations $A,A'$ where $A$ is the allocation under truthful reporting, and $A'$ is the allocation of the non-truthful reporting. It holds that:
		$$ f_{v_1}(A_{v_1}) \stackrel{\eqref{eq:dichotomous_lower}}{\geq } \hat{f}_{v_1}(A_{v_1}) \geq  \hat{f}_{v_1}(A'_{v_1})\stackrel{ \eqref{eq:dichotomous_upper}}{\geq} \epsap f_{v_1}(A'_{v_1}).$$ 
			
	\end{enumerate}
\end{proof}

\begin{observation}
	For $\epsilon$-dichotomous valuations $f$, every reasonable allocation $A$ gives $\epsap$-approximation to the social welfare.
	\label{obs:welfare_approx}
\end{observation}
\begin{proof}
	Let $A^{opt}$ be the welfare maximizing allocation.
	We have that: 
	\begin{equation*}
	\sum_{v} f_v(A_v) \geq \sum_{v}\hat{f}_v(A_v)   \geq \sum_{v}\hat{f}_v(A^{opt}_v) \stackrel{\eqref{eq:dichotomous_upper}}{\geq} \epsap \sum_{v}f_v(A^{opt}_v),
	\end{equation*}
	where the second inequality is by definition of reasonable $A$ maximizes the social welfare of the corresponding dichotomous valuations. 
\end{proof}

\subsection{Impossibilities for $\epsilon$-Leveled Agents }

We next restate and prove Proposition~\ref{prop:no-det-leveled}.
\nodetleveled*
\begin{proof}
	Let $M=\{a,b,c\}$, and $V=\{u,v\}$.
	We denote an $\epsilon$-leveled valuation function $f$ over $M$ as the vector  $r_f=(f(a),f(b),f(c))$. Given a vector $r$ of size 3, we denote by $f_r$ the additive function such that for every $S$, $f_r(\{a\})=r_1, f_r(\{b\})=r_2, f_r(\{c\})=r_3$. 
	For an allocation $A$ and an agent $w \in V$, $A_w$ is the set of items allocated to agent $w$
	
	Let $\mech$ be a deterministic truthful mechanism that approximates the social welfare for $\epsilon$-leveled agents.
	We look at the three allocations $A^1=\mech((1+\epsilon,0,0),(1+\epsilon,0,0)),A^2=\mech((0,1+\epsilon,0),(0,1+\epsilon,0)),A^3=\mech((0,0,1+\epsilon),(0,0,1+\epsilon))$.
	By that $\mech$ is almost welfare maximizer, it must be that $a$ (resp. $b,c$) is allocated in $A^1$ (resp. $A^2,A^3$). Thus, there exists  an agent that in at least two of these three allocations, was allocated the item he desires.
	We assume w.l.o.g. that agent $u$ is that agent and $a\in A^1_u$ and $b\in A^2_u$.
	
	Let $A^4=\mech((1+\epsilon,1,0),(1+\epsilon,0,0))$.
	By that $\mech$ is almost welfare maximizer, we get that $A^4_u$ must contain $b$, 
	and by truthfulness of agent $u$ we get that $f_{(1+\epsilon,1,0)}(A^4_u) \geq f_{(1+\epsilon,1,0)}(A^1_u) = 1+\epsilon > 1 =f_{(1+\epsilon,1,0)}(\{b\})$, where the first equality is since $a\in A^1_u$. Thus, $A^4_u$ must also contain item $a$, so $A^4_u$ contains both items, $a$ and $b$. 
	
	By symmetry, for $A^5=\mech((1,1+\epsilon,0),(0,1+\epsilon,0))$, we get that $a,b \in A^5_u$.
	Let $A^6=\mech((1+\epsilon,1,0),(0,1+\epsilon,0))$.
	By truthfulness for agent $u$, we get that 
	$f_{(1+\epsilon,1,0)}(A^6_u) \geq f_{(1+\epsilon,1,0)}(A^5_u) =  2+\epsilon > 1+\epsilon  =\max(f_{(1+\epsilon,1,0)}(\{a\}),f_{(1+\epsilon,1,0)}(\{b\}))$, therefore, both $a$ and $b$ must be in $A^6_u$.  
	
	Let $A^7=\mech((1+\epsilon,1,0),(1,1+\epsilon,0))$.
	By truthfulness for agent $v$ it must be that $0= f_{(1+\epsilon,0,0)}(A^4_v) \geq  f_{(1+\epsilon,0,0)}(A^7_v)$, therefore $a \notin A^7_u$.
	By truthfulness for agent $v$ it must be that $0= f_{(0,1+\epsilon,0)}(A^6_v) \geq  f_{(0,1+\epsilon,0)}(A^7_v)$, therefore $b \notin A^7_u$.
	Thus, allocation $A^7$ does not satisfy the fairness property for agent $v$.	 
\end{proof}

We next restate and prove Proposition \ref{prop:no-rand-Lorentz-leveled}.
\norandLorentzleveled*

\begin{proof}
	Let the set of items be $M =\{L,H\}$, and consider two agents with the same $\epsilon$-leveled valuation function $f$, satisfying $f(L)=1$ and $f(H)=1+\epsilon$.
	When both agents report their true valuation, any Lorentz dominating allocation (and likewise for lex-min, NSW, min square, and likewise for a reasonable allocation that is EFX) gives each agent one item. 
	Thus, in that case at least one agent has positive probability of getting item $L$. W.l.o.g., let agent~$1$ be that agent. If agent~$1$ reports a valuation $f_1(L)=0,f_1(H)=1+\epsilon$ and agent~2 reports truthfully, the only Lorentz dominating allocation (and likewise for the other fairness properties considered here) is that agent~1 receives item $H$ (and agent~2 receives $L$). Therefore agent~1 gains higher utility by not reporting her true valuation function.
\end{proof}

\subsection{Proof of the Main Theorem for $\epsilon$-leveled Valuations}
\label{appendix:theorem}
In this section we restate and prove Theorem~\ref{thm:truthfulAmost01}. 
We first prove a useful claim about $\mechx$:
\claimmechx*
\begin{proof}
	$\mechx$ is reasonable since every item in $X$ will be allocated if and only if it is in $R_v$ for some agent $v$, and only to an agent that demands it. 
	
	$\mechx$ is truthful for additive dichotomous agents since if an agent receives the whole set $X$ by being truthful, he cannot gain by misreporting. Else, by manipulating, the agent can either add an unwanted item, or trade a wanted item by another which does not  increase his value since his valuation is dichotomous.
	
	{$\mechx$ is EF1 since the only case where an agent receives two items, is if he is the only one that demanded item $X[2]$. For two agents $u$ and $v$  such that $u$ has higher priority than $v$, $u$ does not envy $v$ at all since the only way for $v$ to get an item that $u$ desires, is if $u$ got another item.}
\end{proof}

We next restate and prove Theorem~\ref{thm:truthfulAmost01}, showing that  mechanism  $\meche$ satisfies multiple desired properties.

\truthfulAmost*
 
\begin{proof}
	In this proof, whenever we write expectation or probability, the probability space is over the randomness of $\meche$ (i.e., the list $X$, and the priorities $\sigma$).
	We prove Item \ref{item:truthful2} (truthfulness) in Lemma \ref{lem:level-truthful}, Item \ref{item:welfare2} (reasonable)
	in Lemma \ref{lem:level-reasonable}, Item \ref{item:fair2} (ex-post fairness) in Lemma \ref{lem:level-fair}, Item \ref{item:fair2-in-exp} (proportional in expectation) in Lemma \ref{lem:level-propportional}, and Item \ref{item:poly2} (polynomial time) in Lemma \ref{lem:level-polynomial}.
 \end{proof}
\begin{lemma}\label{lem:level-truthful}
	$\meche$ is a truthful in expectation mechanism. I.e.,
	each agent maximizes his expected value (over the randomization of the mechanism) by reporting the  set of items he demands truthfully, for any reports of the others.
	Additionally, the mechanism is ex-post $\epsilon$-truthful.
\end{lemma}
\begin{proof}
We prove that agent $v$ with demanded set $D_v$, and a corresponding valuation $f_v$, 
cannot benefit from deviating and reporting $R_v$ that is not $D_v$.
Let $R_i$ be the report of  agent $i$ for $i\neq v$.
Given the reports of all agents but $v$, let $A$ (resp. $\tilde{A}$) be the allocation when agent $v$ {is being truthful (resp. reporting $R_v$).} 
We denote by $\alb$ (resp. $\alba$) the corresponding allocation  
returned by $\mechb$, and we denote by $\alx$ (resp. $\alxa$) the corresponding allocation returned by $\mechx$.

Since  $\mechb,\mechx$ are truthful with respect to additive dichotomous agents, we get that $\hat{f}_v(\alb_v) \geq \hat{f}_v(\alba_v)$, and $\hat{f}_v(\alx_v) \geq \hat{f}_v(\alxa_v)$. 
Therefore we get that:
\begin{equation}
\hat{f}_v(A_v) = \hat{f}_v(\alb_v)+\hat{f}_v(\alx_v)  \geq \hat{f}_v(\alba_v)+\hat{f}_v(\alxa_v) =\hat{f}_v(\tilde{A}_v). \label{eq:mechanism}
\end{equation}

If $D_v \cap R_i =\emptyset$ for every $i\neq v$, reporting truthfully is 
dominant since in this case agent $v$ will get $D_v$ which maximizes his value.

Else; there exist an agent $i$ and an item $j$ such that $j\in R_i \cap D_{v}$. 
By Observation \ref{obs:guarantees} we know that $\mechb$ is $\epsilon$-truthful, implying that an agent loss from being truthful is small. We show that by reporting truthfully, the allocation $\alx_v$ is significantly preferred over $\alxa_v$ in expectation (enough to overcome the loss in $\mechb$.)
Combining these two observations yields the truthfulness of $\meche$.
To prove the claim about $\mechx$ we first show that:  
\begin{lemma}\label{lem:strong_truthful_mx}
	For every agent $v$, if there is an item $j$ and an agent $i$ such that $j\in R_i \cap D_{v}$, then:
	\begin{equation}
	E[f_v(\alx_v)] \geq \frac{1}{nm^2} +\frac{E[f_v(\alxa_v)]}{1+\epsilon}. \label{eq:alx} 
	\end{equation}
\end{lemma}
\begin{proof}
	We prove this inequality by considering two cases of $R_v$.
	
	\textbf{Case 1 [Hiding desired items]:} There exists $\ell \in D_v \setminus R_v$ 
	(i.e., $v$ misreported and did not report item $\ell$ that he actually demands). Let $e_1$ be the event that $X=(\ell)$, and $v$ is the highest priority agent according to $\sigma$ (in $\mechx$), and let $\bar{e_1}$ be the complement event. 
	Then: 
	\begin{eqnarray*}
		E[f_v(\alx_v)] & \stackrel{\eqref{eq:dichotomous_lower}}{\geq} & E[\hat{f}_v(\alx_v)]  \\
		& = & E[\hat{f}_v(\alx_v) ~|~ e_1] \cdot Pr[e_1] + E[\hat{f}_v(\alx_v) ~|~ \bar{e_1}] \cdot Pr[\bar{e_1}]  \\
		& \geq & E[\hat{f}_v(\alxa_v) + 1 ~|~ e_1] \cdot Pr[e_1] + E[\hat{f}_v(\alxa_v) ~|~ \bar{e_1}] \cdot Pr[\bar{e_1}]  \\
		& = & Pr[e_1] + E[\hat{f}_v(\alxa_v)] \\
		& \stackrel{\eqref{eq:dichotomous_upper}}{\geq}  & \frac{1}{nm^2} + \frac{E[f_v(\alxa_v)]}{1+\epsilon},
	\end{eqnarray*}
	where the second inequality holds since in event $e_1$, $\hat{f}_v(\alx_v)=1$ and $\hat{f}_v(\alxa_v)=0$, and that $\mechx$ is truthful.
	Additionally, the third inequality uses the fact that $\Pr[e_1] = \frac{1}{nm^2}$ ($v$ has highest priority with probability $1/n$, independently, $|X|=1$ with probability $1/m$, conditional on $|X|=1$, $l$ is the item in $X$ with probability $1/m$).

	\textbf{Case 2 [Reporting undesired items as desired]:}
	There exists $\ell \in R_v \setminus D_v$ (i.e., $v$ added undemanded item $\ell$ to his report). Let $e_2$ be the event that $X=(\ell,j)$ (recall that $j\in R_i \cap D_{v}$), and $v$ 
	has the highest priority in mechanism $\mechx$.
	Then:
	\begin{eqnarray*}
		E[f_v(\alx_v)] & \stackrel{\eqref{eq:dichotomous_lower}}{\geq} & E[\hat{f}_v(\alx_v)]  \\
		& = & E[\hat{f}_v(\alx_v) ~|~ e_2] \cdot Pr[e_2] + E[\hat{f}_v(\alx_v) ~|~ \bar{e_2}] \cdot Pr[\bar{e_2}]  \\
		& \geq & E[\hat{f}_v(\alxa_v) + 1 ~|~ e_2] \cdot Pr[e_2] + E[\hat{f}_v(\alxa_v) ~|~ \bar{e_2}] \cdot Pr[\bar{e_2}]  \\
		& = & Pr[e_2] + E[\hat{f}_v(\alxa_v)] \\
		& \stackrel{\eqref{eq:dichotomous_upper}}{\geq}  &  \frac{1}{nm^2} + \frac{E[f_v(\alxa_v)]}{1+\epsilon},
	\end{eqnarray*}
	where the second inequality holds since in event $e_2$, $\hat{f}_v(\alx_v)=1$ and $\hat{f}_v(\alxa_v)=0$, and that $\mechx$ is truthful.
	Additionally, the third inequality uses the fact that $\Pr[e_2] = \frac{1}{nm^2}$ ($v$ has highest priority with probability $1/n$, independently, $|X|=2$ with probability $(m-1)/m$, and $X=(l,j)$, conditional on $|X|=2$, with probability $1/(m(m-1))$).
\end{proof}
With Lemma \ref{lem:strong_truthful_mx} we now prove that $\meche$ is truthful in expectation.
It holds that:
\begin{eqnarray*}
	E[f_v(A_v)] & = & E[f_v(\alb_v)] + E[f_v(\alx_v)] \\ 
	& \stackrel{\eqref{eq:alx}}{\geq} & \frac{E[f_v(\alba_v)]}{1+\epsilon}  +  \frac{E[f_v(\alxa_v)]}{1+\epsilon} +\frac{1}{nm^2}\\
	& > & \frac{E[f_v(\tilde{A}_v)]}{1+\epsilon}+\frac{\epsilon \cdot E[f_v(\tilde{A}_v)]}{1+\epsilon}\\
	& = & E[f_v(\tilde{A}_v)],
\end{eqnarray*} 
where the first inequality is since that $\mechb$ is $\epsilon$-truthful by Observation \ref{obs:guarantees}, and
the second inequality is since $E[f_v(\tilde{A}_v)] \leq (1+\epsilon)m$ and $\epsilon <\frac{1}{nm^3}$.

We now prove that $\meche$ is ex-post $\epsilon$-truthful.
Since both $\mechb$ and $\mechx$ are (ex-post) truthful for additive dichotomous agents, it follows immediately that $\meche$ is (ex-post) truthful for additive dichotomous agents.
Combining with Observation \ref{obs:guarantees} we get that $\meche$ is ex-post $\epsilon$-truthful. 
\end{proof}

\begin{lemma}\label{lem:level-reasonable}
	$\meche$ is reasonable. Thus the welfare ex-post is at least $\epsap$-fraction of the maximum welfare, when agents are truthful.
	\end{lemma}
\begin{proof}
$\meche$ is reasonable since both $\mechb,\mechx$ are reasonable and since the agents are additive. 
If $A^1$ (resp. $A^2$) is a reasonable allocation for the set of items $M_1$ (resp. $M_2$), then $A^1 \cup A^2$ is reasonable for the set of items $M_1 \sqcup M_2$.

The guarantee of the social welfare follows by Observation \ref{obs:welfare_approx}.
\end{proof}

\begin{lemma}\label{lem:level-fair}
	$\meche$ guarantees every truthful agent his $\epsap$-maximin share. Additionally, $\meche$ is $\epsap$-EF1 for truthful agents.
\end{lemma}
\begin{proof}
 Assume that agent $v$ is truthful, that is $R_v=D_v$.
	By Observation \ref{obs:guarantees} it is enough to show that $\meche$ is maximin fair and EF1 for additive dichotomous valuations.
	
	Let $i$ be the priority of agent $v$ {in $\mechb$}.
	Under the reports of $R$, we denote by $A$ (resp. $\alb,\alx$) the allocation $\meche$ (resp. $\mechb,\mechx$).
	Note that if $i=n$ and $|R_v \cap X |\geq 1$ or $i=n-1$ and $|R_v \cap X| \geq 2 $ then $\alx_v \neq \emptyset$, therefore:
	\begin{equation}
	\hat{f}_v(\alx_v) \geq \left\lfloor\frac{i+|X\cap R_v|-1 }{n} \right\rfloor \label{eq:x_value}
	\end{equation}
	
	We next present some properties of $\mechb$ that we use in the proof that $\meche$ is EF1 and maximin fair.
	\begin{observation}
		In $\mechb$, an agent with MRF valuation does not envy agents with lower priority. \label{obs:envy}
	\end{observation}
	\begin{proof}
		Assume towards contradiction that an agent $v$ envies a lower priority agent $u$.
		Since $\mechb$ is not redundant, it means that $|\alb_v| = \hat{f}_v(\alb_v)  < \hat{f}_v(\alb_u) = |\alb_u \cap R_v|$.
		By the exchange property of matroids, there exists an item in $\alb_u \cap R_v$, such that transferring this item from $u$ to $v$, will be a non-redundant allocation in which $v$ gets higher value, and therefore will Lorenz-dominates $\alb$ which contradicts that $\mechb$ returns a Lorenz-dominating allocation with respect to the priority order.
	\end{proof}

	We now give a tighter bound of the number of items allocated by $\mechb$ to an agent depending on his priority.
	\begin{claim}\label{claim:PE-priority}
		For agents that report additive dichotomous valuations, for every $i\in [n]$, mechanism $\mechb$ allocates at least $ \left\lceil\frac{|R_v\setminus X|-i+1}{n}\right\rceil$ items  
		to the agent $v$ that  has priority $i$. 
	\end{claim}
	
	\begin{proof} 
		Fixing the set $X$,  let $n_v$ be the number of items allocated by $\mechb$ to agent $v$.
		Since $\mechb$ is EF1, for every agent $u$ with higher priority than $v$ (there are $i-1$ such agents), the number of items allocated to $u$ among $R_v\setminus X$ is at most $n_v+1$.
		{By Observation \ref{obs:envy},} in $\mechb$ agent $v$ does not envy agent with lower priority than $v$ (there are $n-i$ such agents), the number of items allocated to $u$ among $R_v\setminus X$ is at most $n_v$.
		Since $\mechb$ is welfare maximizer, it must be that all items in $R_v$ are allocated, thus
		$ (i-1)\cdot (n_v+1)  +n_v +(n-i)\cdot n_v \geq |R_v \setminus X|$, which implies the claim.
	\end{proof}
	
	With Observation \ref{obs:envy} and Claim \ref{claim:PE-priority} we can prove that $\meche$ is maximin fair and EF1 for additive dichotomous agents.
	
	We first prove that $\meche$ is maximin fair for additive dichotomous valuations.
	Consider a agent $v$.
	It holds that 
	\begin{equation}
	\mbox{maximin}(\hat{f}_v)
	=\left\lfloor\frac{|R_v|}{n}\right\rfloor.
	\label{eq:max_min_add}
	\end{equation}
	
	It holds that:
	\begin{eqnarray}
	\hat{f}_v(A_v)  &= & \hat{f}_v(\alb_v)+\hat{f}_v(\alx_v) \nonumber \\
	&\stackrel{\eqref{eq:x_value}}{\geq } & \left\lceil \frac{|R_v\setminus X|-i+1}{n}\right\rceil + \left\lfloor\frac{i+|X\cap R_v|-1 }{n} \right\rfloor \nonumber \\
	&\geq & \left\lfloor\frac{|R_v|}{n}\right \rfloor \stackrel{\eqref{eq:max_min_add}}{=} \mbox{maximin}(\hat{f}),
	\end{eqnarray}
	{where the first inequality is by Inequality~(\ref{eq:x_value}) and 
		Claim \ref{claim:PE-priority}, }and the last inequality is by the fact that $\lceil \frac{a}{n}\rceil +\lfloor \frac{b}{n}\rfloor \geq \lfloor \frac{a+b}{n} \rfloor$ for every integers $a,b$.

	{
		Now we prove that $\meche$ is EF1 for additive dichotomous agents.
		{We have that $\mechb$ is EF1 for additive dichotomous agents, furthermore, {by Observation \ref{obs:envy}} in the allocation $\alb$, agents with higher priority
			do not envy agents with lower priority (priorities in  $\mechb$).} 
		Let, $u,v$ be any two agents. If $u$ has higher priority than agent $v$ (in $\mechb$), then $ \hat{f}_u(\alb_u) \geq \hat{f}_u(\alb_v)$, and if $u$ has lower priority than agent $v$ (in $\mechb$) then $ \hat{f}_u(\alb_u) \geq \hat{f}_u(\alb_v)-1$.
		By Claim \ref{cl:mechx}, we have that $\mechx$ is EF1, and agents with higher priority (in $\mechx$) do not envy agents with lower priority.
		I.e.,  if $u$ has higher priority than agent $v$ (in $\mechx$), then $ \hat{f}_u(\alx_u) \geq \hat{f}_u(\alx_v)$, and if $u$ has lower priority than $v$ (in $\mechx$) then $ \hat{f}_u(\alx_u) \geq \hat{f}_u(\alx_v)-1$.
		We are now ready to prove that agent $u$  does not envy agent $v$ up to one item (i.e., $\hat{f}_u(A_u) \geq \hat{f}_u(A_v)-1$). 
		
		Since the priorities in $\mechb$ and $\mechx$ are reversed, we get that

		\textbf{Case 1:} If $u$ has higher priority than $v$ in $\mechb$ (and lower priority in $\mechx$) then   
		\begin{equation*}
		\hat{f}_u(A_u) = \hat{f}_u(\alb_u) + \hat{f}_u(\alx_u) \geq \hat{f}_u(\alb_v) +\hat{f}_u(\alx_v)  -1 = \hat{f}_u(A_v)-1.
		\end{equation*} 
		
		\textbf{Case 2:} If $u$ has lower priority than $v$ in $\mechb$ (and higher priority in $\mechx$) then   
		\begin{equation*}
		\hat{f}_u(A_u) = \hat{f}_u(\alb_u) + \hat{f}_u(\alx_u) \geq \hat{f}_u(\alb_v) -1 +\hat{f}_u(\alx_v)  = \hat{f}_u(A_v)-1.
		\end{equation*} 
		This completes the proof.
	}
	
	\end{proof}

\begin{lemma} \label{lem:level-propportional}
	 Each truthful agent receives his proportional  share in expectation.\footnote{For additive valuations, the proportional share is at least the maximin share.}
\end{lemma}
\begin{proof}
 Suppose that agent $v$ is truthful, that is $R_v=D_v$.
	We show that $v$ receives his proportional share in expectation.
	For agent $v$ and profile of reports $R=(R_1,\ldots,R_n)$, 
	if for all $i$ it holds that $R_i=R_v$ then the allocation will always be the same partition, allocated according to the permutation over priorities. Thus, agent $v$ will get each set of the partition with probability $\frac{1}{n}$ which yields an expected value of $\frac{f_v(R_v)}{n}$.
	
	Else,  there exists $i \neq v$ such that $R_i \neq R_{v}$.
	It holds that:
	\begin{equation}
	\frac{f_v(R_v)}{n} \stackrel{\eqref{eq:dichotomous_upper}}{\leq} \frac{(1+\epsilon)\hat{f}_v(R_v)}{n} = (1+\epsilon)\frac{|R_v|}{n}. \label{eq:proportional_upper}
	\end{equation}
	
	Under the reports of $R$, we denote by $A$ (resp. $\alb,\alx$) the allocation of $\meche$ (resp. $\mechb,\mechx$).
	
	{For every given list $S$ of size in $\{1,2\}$, by Claim \ref{claim:PE-priority}, given that  $X=S$, if agent $v$ has priority $i$ then $|\alb_v| \geq \left\lceil\frac{|R_v \setminus S|-i+1}{n}\right\rceil$.} 
	Thus, when considering the expectation over the priorities of $v$ for a fixed $S$: 
	\begin{equation}
	E_{\sigma}[\hat{f}_v(\alb_v) \mid X=S] \geq \frac{1}{n}\sum_{i=1}^n \left\lceil\frac{|R_v \setminus S|-i+1}{n}\right\rceil = \frac{|R_v \setminus S|}{n} = \frac{|R_v|}{n} -\frac{|R_v \cap S|}{n}, \label{eq:prop1}
	\end{equation}
	where the first equality is since for every $x=an+b$ (for $0 \leq b<n$), the number of terms in $\{\lceil \frac{x-i}{n}\rceil\}_{0 \leq i <n}$ that equals $a+1$ (resp. $a$) is $b$ (resp. $n-b$).
	
	When taking expectation also over $X$, we get that: 
	\begin{eqnarray}
	E[\hat{f}_v(\alb_v)] &\stackrel{\eqref{eq:prop1}}{\geq }&  \frac{|R_v|}{n}  -\frac{E[|R_v \cap X|]}{n}\nonumber\\
	& =  & \frac{|R_v|}{n} -\frac{|R_v| \cdot E[|X|]}{nm} = \frac{|R_v|}{n}\cdot \frac{m^2-2m+1}{m^2}. \label{eq:mb}
	\end{eqnarray}
	where for the last equality we used $E[|X|]= \frac{1}{m} + 2(1-\frac{1}{m}) = 2-\frac{1}{m}$.
	
	Equation \eqref{eq:mb} basically states that the expected number of items that $\mechb$ allocates to $v$ is at least $|R_v|$ times the expected fraction of items  that are not in $X$,  divided by $n$.
	
	\begin{lemma}\label{lem:more_x}
		For every agent $v$, if there exists an agent $i$ such that $R_v\neq R_i$ then
		\begin{equation}
		E[\hat{f}_v(\alx_v)] \geq \frac{1}{m^2n(n-1)}+\frac{|R_v|(2m-1)}{nm^2}. \label{eq:expx}
		\end{equation}
	\end{lemma}
	\begin{proof}
		Let $\alpha \stackrel{def}{=} |X|$.
		
		
		By considering the two possible values of $\alpha$ we observe that:
		\begin{equation}
		E[\hat{f}_v(\alx_v)]   =  
		\frac{1}{m} E[\hat{f}_v(\alx_v)~ |~ \alpha=1]+  
		\frac{m-1}{m} E[\hat{f}_v(\alx_v)~|~\alpha=2]. \label{eq:alpha}
		\end{equation}
		Similarly to Equation \eqref{eq:mb} for a fixed $\alpha'\in \{1,2\}$ we have that 
		\begin{equation}
		E[\hat{f}_v(\alx_v) \mid \alpha = \alpha'] \geq  \frac{E[|R_v \cap X| \mid \alpha =\alpha']}{n} = \frac{\alpha'\cdot |R_v|}{nm}, \label{eq:prop2}
		\end{equation}
		where Equation \eqref{eq:prop2} is by conditioning on the size of $R_v \cap X$. If $|R_v\cap X |=1$ and if $v$ is the highest priority agent in $\mechx$ then he gets it. If $|R_v\cap X |=2$ and $v$ among the two highest original priority agents in $\mechx$ then he gets at least one item.

		We consider two cases of $R=(R_1,\ldots,R_n)$:
		
		\textbf{Case 1:}
		There exist an agent $i$ and item $j$ such that $j\in R_v \setminus R_i$.
		In this case if $X=(j)$ then the probability over the priorities that agent $v$ will get item $j$ is at least $\frac{1}{n-1}$, thus:
		\begin{eqnarray}
		E[\hat{f}_v(\alx_v)\mid \alpha=1]& = & E[\hat{f}_v(\alx_v)\mid X=(j)\wedge \alpha=1]\cdot Pr[X=(j) \mid \alpha=1] +\nonumber \\
		& &  E[\hat{f}_v(\alx_v)\mid X\neq (j) \wedge\alpha=1]\cdot Pr[X \neq (j) \mid \alpha=1] \nonumber \\
		& \geq & \frac{1}{m}\cdot\frac{1}{n-1} + \frac{m-1}{m}\cdot \frac{|R_v|-1}{n(m-1)} = \frac{1}{mn(n-1)}+\frac{|R_v|}{nm},\label{eq:alpha3}
		\end{eqnarray}
		since for every $a\in R_v\setminus \{j\}$ if $X=(a)$ then $a$ is allocated to $v$ with probability of at least $\frac{1}{n}$ while if $X=(j)$, $j$  is allocated to $v$ with probability at least $\frac{1}{n-1}$.
		
		By combining Equations \eqref{eq:alpha},\eqref{eq:prop2},\eqref{eq:alpha3}, we get that:
		\begin{eqnarray*}
			E[\hat{f}_v(\alx_v)] & \geq & 
			\frac{1}{m}\left(\frac{1}{mn(n-1)}+\frac{|R_v|}{nm}\right)+
			\frac{m-1}{m}\cdot\frac{2\cdot |R_v|}{nm}\nonumber \\
			&= &\frac{1}{m^2n(n-1)}+\frac{|R_v|(2m-1)}{nm^2}.
		\end{eqnarray*}

		%
		\textbf{Case 2:}
		There exist an agent $i$ and an item $j$ such that $j\in R_i \setminus R_v$.
		In this case if $X=(j,\ell)$
		for $\ell \in R_v$, then the probability over the priorities that agent $v$ will get $\ell$ is at least $\frac{1}{n-1}$, thus:
		
		\begin{eqnarray}
		E[\hat{f}_v(\alx_v) \mid \alpha=2] & = & E[\hat{f}_v(\alx_v)\mid X=(j,\ell)\wedge \alpha=2]\cdot Pr[X=(j,\ell) \mid \alpha=2] +\nonumber \\
		& &  E[\hat{f}_v(\alx_v)\mid 2=|X\cap R_v| \wedge\alpha=2]\cdot Pr[|X\cap R_v| =2  \mid \alpha=2] +\nonumber   \\
		& & E[\hat{f}_v(\alx_v)\mid X\neq(j,\ell) \wedge |X\cap R_v|=1 \wedge \alpha=2]\times \nonumber \\ & & Pr[X\neq (j,\ell) \wedge |X\cap R_v|=1\mid \alpha=2] \nonumber \\
		& \geq  &  \frac{1}{n-1} \cdot \frac{1}{m(m-1)} + \nonumber \\
		&   &  \frac{2}{n} \cdot \frac{|R_v|(R_v|-1)}{m(m-1)} + \nonumber \\
		&   &  \frac{1}{n} \cdot \frac{2\cdot |R_v| (m-|R_v|)-1}{m(m-1)}  \nonumber \\
		& = & \frac{1}{m(m-1)n(n-1)} +\frac{2|R_v|}{mn},\label{eq:alpha4}
		\end{eqnarray}
		where the inequality holds since  if $X = (j,\ell)$ then the expected number of items agent $v$ gets is at least $\frac{1}{n-1}$. 
		If $X \neq (j,\ell)$ then the expected number of items agent $v$ gets is at least $\frac{|R_v \cap X|}{n}$.   

		By combining Equations \eqref{eq:alpha},\eqref{eq:prop2},\eqref{eq:alpha4}, we get that:
		\begin{eqnarray*}
			E[\hat{f}_v(\alx_v)] & \geq & 
			\frac{1}{m}\cdot \frac{|R_v|}{nm}+
			\frac{m-1}{m} \left(\frac{1}{m(m-1)n(n-1)} +\frac{2\cdot |R_v|}{nm}\right)\nonumber \\
			&= &\frac{1}{m^2n(n-1)}+\frac{|R_v|(2m-1)}{nm^2}. \label{eq:}
		\end{eqnarray*}
	\end{proof}
	
	With Lemma \ref{lem:more_x}, we have
	\begin{eqnarray*}
		E[f_v(A_v)] 
		& \stackrel{\eqref{eq:dichotomous_lower}}{\geq} & E[\hat{f}_v(A_v)]  \nonumber\\
		& = & E[\hat{f}_v(\alb_v)] + E[\hat{f}_v(\alx_v)] \\ 
		&\stackrel{\eqref{eq:mb}\eqref{eq:expx}}{\geq}& \frac{|R_v|}{n}\cdot \frac{m^2-2m+1}{m^2} +\frac{1}{m^2n(n-1)}+\frac{|R_v|(2m-1)}{nm^2} \\
		& \stackrel{}{> } & \frac{|R_v|}{n}+\frac{\epsilon \cdot m}{n} \geq \frac{(1+\epsilon)|R_v|}{n} \stackrel{\eqref{eq:proportional_upper}}{\geq}\frac{f(R_v)}{n},
		\label{eq:proportional_lower}   
	\end{eqnarray*}
	which is the proportional share of agent $v$.

\end{proof}
		
\begin{lemma}\label{lem:level-polynomial}
$\meche$ can be implemented in  polynomial time in the number of items and agents.	
\end{lemma}		
\begin{proof}		
Agents are asked to report to $\meche$ only their demand sets. Consequently, the computational  complexity of $\meche$ does not depend of the number of bits needed to represent the valuations, and only depends on the number of agents and items. Both $\mech^{B}$  and $\mechx$ can be implemented in time polynomial in the number of items and agents.  Other steps of $\meche$ include selecting a random permutation over the agents, and selecting at random one or two items to be included in $X$, and these steps can also be implemented efficiently (given a source of randomness). It follows that $\meche$ can be implemented in polynomial time.  
\end{proof}

\cout{
\subsection{$\meche$ is not EFX for Additive Dichotomous Valuations}
\begin{example}[$\meche$ is not EFX for additive dichotomous valuations]\label{example:not-EFX}
	Consider setting with $2$ items, and two players with additive dichotomous valuations. Player 1 desires only item $1$ while player 2 desires both of the items.
	In case where $X=(1)$ and player 1 with the higher priority, player 2 will get both of the items, and this allocation is not EFX.
\end{example} 
}

\section{XOS valuations}\label{sec:XOS}
We next consider dichotomous valuations beyond the submodular case. A class of valuations that contains submodular valuations is the class of XOS valuations.
An XOS valuation $f$ is defined by a set of additive valuations $\{f_1,\ldots,f_k\}$ and for every $S$, $f(S)=\max_{i\in[k]}f_i(S)$. An XOS dichotomous valuation, is a function that is both XOS and dichotomous.  
In this appendix we present some negative results regarding what can be achieved in XOS markets. We use the following construction of an XOS dichotomous valuation. Given a 
family $F$ of sets of items, we define $f_F(S)= \max_{T \in F} |T\cap S|$. Clearly $f_F$ is XOS and dichotomous, 
since we can define for every $T \in F$, the additive function $f_T(S)=|T\cap S|$, and $f_F$ is the max over the $\{f_T\}_{T\in F}$. 

We use such valuations to show  
that it is not possible to extend our result for submodular dichotomous agents to XOS dichotomous agents, even if there are only two agents and only four items. We show that truthfulness and welfare maximization are at odds, even if one disregards all fairness considerations. 
This holds not only for deterministic truthful mechanisms, but even for randomized mechanisms that are only required to be truthful in expectation, as long as the mechanism must still maximize welfare ex-post. 

\begin{proposition}
	For the setting with two dichotomous XOS agents and four items, there is no randomized truthful-in-expectation mechanism 
	that always maximizes welfare.
\label{prop:DownwardClosed_constraints}
\end{proposition}
\begin{proof}
Let the set of items $M=\{1,2,3,4\}$.
Given a family $F$ of feasible subsets of $M$, 
	let $f_F$ be the XOS dichotomous function $$f_F(S)= \max_{T \in F} |T\cap S|$$	
Consider any mechanism that always picks an allocation that maximizes the welfare.
If both agents have the same family $F_1$ with only one feasible set $T =  \{2,3,4\}$,
then there is an agent that gets more than one item in expectation, as welfare maximization implies that all three items in $\{2,3,4\}$ must be allocated.
W.l.o.g., we assume that agent~1 is that agent. Suppose now that agent~1 has the family $F_2$, that contains $T$ and the set $\{1\}$, 
then if agent~1 reports $F_2$ (and agent~2 reports $F_1$), a welfare maximizing mechanism must allocate item~1 to agent~1 and the remaining items to agent~2. 
Yet agent~1 can get higher expected value by reporting $F_1$, and thus the mechanism is not truthful in expectation.
\end{proof}

{
\begin{remark}
	The same arguments of the proof of Proposition~\ref{prop:DownwardClosed_constraints} shows that similarly there is no randomized truthful-in-expectation mechanism that always maximizes the Nash social welfare.
\end{remark}
}

We next show that every truthful mechanism cannot give a $\sqrt{3}-1$ approximation to the optimal social welfare.
\begin{theorem}
	For any $c>\sqrt{3}-1$, there is no truthful deterministic allocation mechanism for two dichotomous XOS agents that always gives a $c$-fraction of the maximal welfare.
\end{theorem}
\begin{proof}
	Assume towards contradiction that there is such a mechanism.
	Consider a market with two agents and two (large enough) disjoint sets of items $A,B$ as we define below.
	We define $F_{A}=\{A\},F_{AB}=\{A,B\}$.  
If both agents report their valuation is $f_{F_A}$, by welfare approximation at least $c\cdot |A|$ are allocated, and w.l.o.g., assume that agent~$1$ receives at least $\frac{c}{2} \cdot |A|$.
If $|B| = \frac{c}{2} \cdot |A| -1$, then if agent 1 reports his valuation is $f_{F_{AB}}$, instead of $f_{F_A}$, he cannot receive items from $B$ (since it will decrease his welfare). Thus, the welfare is bounded by $|A|$ although the maximal social welfare is $|A|+|B|$.
Since  $c \cdot (|A|+ \frac{c}{2} \cdot |A| -1) > |A|$, (for large enough $|A|$) this leads to a contradiction.
\end{proof}

We next show that for dichotomous XOS valuations there are instances with $n$ items in which 
	the maximum welfare is almost factor $2-1/n$ larger than the welfare of any EF1 allocation.
\begin{theorem} \label{thm:SW-EF1}
	There exists a setting with $n$ agents with   dichotomous XOS valuations in which the maximal social welfare is $2nk+n-k$,
	while any EF1 allocation has social welfare at most $n(k+1)$.
	Thus, for any fixed $n$, for any $\epsilon>0$,  
	the ratio of the maximal welfare to the  welfare of any EF1 allocation is at least  $2-1/n- \epsilon$ when $k$ is large enough. 
\end{theorem}
\begin{proof}
	Fix a large k. Let $S_1,\ldots,S_k$ be disjoints sets of items such that $|S_1|=n(k+1)$, and $|S_j|= k$ for $2\leq j\leq n$. Let $F_1=\{S_1\}$, and let $F_j=\{S_1,S_j\}$ for  $j>1$, and the valaution of agent $j$ is $f_{F_j}$. 
	The maximal social welfare is obtained  by giving each agent $j$ the set $S_j$, yielding a  welfare of  $n(k+1) + (n-1)k =2nk+n-k$. 
	When considering  an EF1 allocation, either it only allocates items from $S_1$. In this case the welfare is bounded by $n(k+1)$. Else there exists an agent $j\neq1$ that gets items from $S_j$. In this case, by the EF1 property no agent can get more than $k+1$ items from $S_1$, therefore their value cannot exceed $k+1$. Thus, the welfare cannot exceed $n(k+1)-1$. 
	
The ratio of the maximal welfare to the maximal welfare of an EF1 allocation is $(2nk+n-k)/((k+1)n) =(2k+1)/(k+1)-k/((k+1)n) $ and it tends to $2-1/n$ as $k$ grows large. 
\end{proof}

We now consider the notion of Nash Social Welfare (NSW) and show that for dichotomous XOS valuations, 
the maximum NSW can be about twice as large as the NSW of any EF1 allocation. 

\begin{theorem}
	There exists a setting with $n$ agents with   dichotomous XOS valuations in which any EF1 allocation has Nash Social Welfare of at most $k/2+1$, while the maximal NSW is  $k \cdot (1/4)^{1/n}$. Thus,
	for any $\epsilon>0$, the ratio of the maximal NSW to the NSW of any EF1 allocation is at least $2-\epsilon$ when $n$ is large enough.
\end{theorem}
\begin{proof}
	Let $k>>n$ be a large even number.
	Consider the following setting with $n$ agents and $k(n-1)$ items.
	Let $S_j$ for $1\leq j \leq n-1$ be disjoint sets of size $k$, and let $F=\{S_j \mid 1\leq j \leq n-1\}$.
Each agent valuation is $f_F$.
The maximal NSW is obtained by giving every agent $j\leq n-2$ the set $S_j$, and split $S_{n-1}$ equally between agent $n-1$ and agent $n$.  This allocation has a NSW of $(k^{n-2} \cdot (k/2)^2)^{1/n} = k \cdot (1/4)^{1/n}$.
On the other hand, in any allocation at least one agent gets at most $k/2$ items, so in any  EF1 allocation, every agent gets at most $k/2+1$ items, so the NSW an EF1 allocation is bounded by $((k/2+1)^n)^{1/n} = k/2+1$.
\end{proof}

We next show that for the setting of dichotomous XOS, truthful allocation mechanisms cannot maximize the NSW (and constant fraction of the welfare will sometimes be lost).
\begin{theorem}
	For any $c>2^{-1/3}$, there is no truthful deterministic allocation mechanism for two XOS dichotomous agents that always gives a $c$-fraction of the maximal Nash social welfare.
\end{theorem}
\begin{proof}
	Assume towards contradiction that there is such a mechanism.
	Consider a market with two agents and two (large enough) disjoint sets of items $A,B$ as we define below.
	Let $F_{A}=\{A\}$, and let $F_{AB}=\{A,B\}$.
If both agents report their valuation is $f_{F_A}$, by assumption the allocation NSW must be such that at least $c\cdot |A|$ items are allocated, and w.l.o.g., assume that agent $1$ receives at least $\frac{c}{2} \cdot |A|$.
If $|B| = \frac{c}{2} \cdot |A| -1$, then if agent 1 reports his valuation is $f_{F_{AB}}$, instead of $f_{F_A}$, he cannot receive items from $B$ (since it will decrease his welfare). Thus, the Nash social welfare is bounded by $|A|/2$ although the maximal social welfare is $\sqrt{|A||B|}$.
	Since  $c \sqrt{|A||B|} \approx c\cdot \sqrt{\frac{c}{2}} \cdot |A| > |A|/2$, (for large enough $|A|$) this leads to a contradiction. 
\end{proof}

\section{Best-of-Both-Worlds via Random Priorities} 
\label{app:RPE}
{For completeness, we first present a proof of a simple observation for agents with subadditive valuations: a randomized ex-ante envy-free allocation that is supported on (deterministic) allocations that are non-wasteful (for each agent, the marginal value of the set on unallocated items is zero) is also  ex-ante proportional (each agent gets at least $\frac{1}{n}$ of his value for the grand bundle).}
\begin{observation}\label{obs:ex-ante-EF-prop}
	For the class of for subadditive valuation functions, 
	any ex-ante envy-free randomized allocation that is not wasteful is also ex-ante proportional. 
\end{observation}
\begin{proof}
	Assume that  $A$ is the (random) allocation and let $A_0$ be the (random) set of unallocated items. Let $f_i$ be the valuation function of agent $i$. For any realization of $A$, as the realized allocation is not wasteful it holds that $f_i(A_i \cup A_0)=f_i(A_i)$, and by subadditivity, $f_i(A_i \cup A_0) + \sum_{j\neq i } f_i(A_j) \geq f_i(M)$.
	By ex-ante envy-freeness, $E[f_i(A_i)] \geq E[f_i(A_j)]$ for every  $i, j$, and thus $E[f_i(A_i)] \geq \frac{1}{n}  \left(E[f_i(A_i)] + \sum_{j\neq i}E[f_i(A_j)]\right)$.	
	Combining these three equations imply the claim:  $$E[f_i(A_i)] \geq \frac{1}{n} \left(E[f_i(A_i)] + \sum_{j\neq i}E[f_i(A_j)] \right) =  \frac{1}{n} \left(E[f_i(A_i \cup A_0)] + \sum_{j\neq i}E[f_i(A_j)] \right)   \geq \frac{1}{n} f_i(M).
		$$
\end{proof}

We next show that our RPE mechanism is stochastically envy free, which together with our results for the PE mechanism, immediately imply Theorem~\ref{thm:RPE-main}.

\begin{lemma}	\label{lem:EV}
	For the class of submodular dichotomous valuations functions, the random priority egalitarian (RPE) mechanism  
	is stochastically envy-free.
	This implies that it is also ex-ante envy free and ex-ante proportional for this class.
\end{lemma}
\begin{proof}
For the purpose of following the proof of this theorem, the reader is advised to recall Theorem~\ref{thm:LorenzDuttaRay} and the notion of auxiliary items that precedes it.

Consider two agents $i$ and $j$, and two priority orders $\pi$ and $\pi'$, where $i$ appears before $j$ in $\pi$, and $\pi'$ is identical to $\pi$, except that $i$ and $j$ switch locations.
When allocating both items as well as the auxiliary items, let $\hat{A}$ ($\hat{A}'$, respectively)  be a Lorenz dominating allocation under priority order $\pi$ ($\pi'$, respectively), and let $A$ ($A'$, respectively) be the corresponding allocations when the auxiliary items are removed.
Recall that given $\pi$, all Lorenz dominating allocations have the same sorted vector of utilities. Moreover, Lorenz domination implies that (ignoring auxiliary items) $A$ and $A'$ also have the same vector of utilities, up to permuting the names of the agents. As $i$ appears before $j$ in $\pi$, it implies that in allocation $A$, agent $i$ cannot envy $j$, but $j$ might envy $i$. Suppose that $j$ does envy $i$ under $A$, that is, $v_j(A(i)) > v_j(A(j))$. Then we show below that $|A'(j)| \ge |A(i)|$ and that $|A(j)| \ge |A'(i)| $. Lemma~\ref{lem:EV} follows easily from the claim, and the fact that $ v_j(A(j)) = |A(j)|$ and $ v_j(A'(j)) = |A'(j)|$, since the RPE mechanism never allocates undesired items, while $ v_j(A(i))\le |A(i)| $ and $ v_j(A'(i)) \le |A'(i)|$ as valuations are dichotomous.

We now prove the claim.  As the RPE mechanism never allocates undesired items, $v_j(A(j))=|A(j)|$. 
The fact that $j$ envies $i$ in $A$ together with the EFX property implies that $|A(i)| = |A(j)| + 1$, and that $v_j(A(i)) = |A(i)|$. The matroid exchange property implies that  
we can move one item from $A(i)$ to $A(j)$, and by this get an allocation $B$ in which $v_j(B(j)) = v_j(A(j)) + 1 = |A(i)|$ and $v_i(B(i)) = |B(i)| = |A(i)| - 1 = |A(j)|$. As $B$ has the same utility vector as $A$ (up to permuting the names of agents), $B$ is a Lorenz dominating allocation.
As before, we use $\hat{B}$ to denote the allocation that is identical to $B$ but with each agent also getting his  auxiliary item. If $\hat{B}$
is a Lorenz dominating allocation with respect to priority order $\pi'$, then the claim is proved. Hence we assume that $\hat{A}'$ strictly Lorenz dominates $\hat{B}$ 
with respect to $\pi'$. (Example~\ref{ex:A'domB} shows that indeed it may happen that
$\hat{A}'$ strictly Lorenz dominates $\hat{B}$  
with respect to $\pi'$.)

Consider $\pi'$ (which defines auxiliary items as in Theorem~\ref{thm:LorenzDuttaRay}), and for each of the allocations
$\hat{A}'$ and $\hat{B}$ 
order the utilities of agents from smallest to largest. Call the first of these sorted vectors
$L_{\hat{A}'}$ and the other $L_{\hat{B}}$.  
When auxiliary items are removed, we refer to these vectors as $\lfloor L_{A'} \rfloor$ and $\lfloor L_{B} \rfloor$.
Note that these vectors are still sorted.  Similarly, we define the vectors $L_{\hat{A}}$ and $\lfloor L_{A} \rfloor$ for the allocation $\hat{A}$ with auxiliary items as in $\pi$. Observe that $L_{\hat{A}} = L_{\hat{B}}$, which holds because $(\pi,\hat{A})$ and $(\pi',\hat{B})$ only interchange the roles of $i$ and $j$ (both the auxiliary items are interchanged, and the number of allocated items to each agent are interchanged).

As $A'$ and $B$ are both Lorenz dominating allocations, then $\lfloor L_{A'} \rfloor$ and $\lfloor L_{B} \rfloor$ are identical. However, with the auxiliary items, $L_{\hat{A}'}$ strictly Lorenz dominates $L_{\hat{B}}$. Let $P$ ($S$, respectively) be the set of agents in the longest prefix (suffix, respectively) in which $L_{\hat{A}'}$ and $L_{\hat{B}}$ are identical. (The utility identifies the agent, because of the auxiliary items.)

{\bf Proof that $|A(j)| \ge |A'(i)|$.}
If $i \in P$ then $|A'(i)| = |B(i)| = |A(j)|$, as desired. If $i \not\in P$ then also $j \not\in P$ (because $|B(j)| > |B(i)|$). Let $p$ be the agent following $P$ in the sorted vector for $B$, and let $p'$ be the agent following $P$ in the sorted vector for $A'$. We refer to $P \cup \{p'\}$ as the extended prefix of $A'$ (and likewise, $P \cup \{p\}$ is the extended prefix for $B$).

\begin{enumerate}

    \item If $p' = i$ and $p \not= i$ then $|A'(i)| \le |B(i)|$ (otherwise $\lfloor L_{A'} \rfloor$ and $\lfloor L_{B} \rfloor$ cannot be identical), implying $|A(j)| \ge |A'(i)|$, as desired.

        \item The case $p' = p = i$ cannot hold because strict Lorenz domination would imply that $|A'(i)| > |B(i)|$. This contradicts the fact that $\lfloor L_{A'} \rfloor$ and $\lfloor L_{B} \rfloor$ are identical.

         \item If $p' \not= i$ then consider the allocation $\hat{A}'$, but with auxiliary item values set from $\pi$ instead of $\pi'$. 
         The fact that $\hat{A}$ Lorenz dominates $\hat{A}'$ with respect to $\pi$ (together with $L_{\hat{A}} = L_{\hat{B}}$) implies that $L_{\hat{A}'}$ for $\pi$ has a different extended prefix compared to $L_{\hat{A}'}$ for $\pi'$. This can only happen if $i$ is moved into the extended prefix (this will happen if $i$ and $p'$ receive the same number of items under $A'$, and $i$ precedes $p'$ in $\pi$ but not in $\pi'$), but then $|A(j)| \ge |A'(i)|$, as desired.


\end{enumerate}

The combination of the above cases establishes that $|A(j)| \ge |A'(i)|$.

{\bf Proof that $|A'(j)| \ge |A(i)|$.} (This proof is analogous to that for $|A(j)| \ge |A'(i)|$, but we present it in full for completeness.)
If $j \in S$ then $|A'(j)| = |B(j)| = |A(i)|$, as desired. If $j \not\in S$ then also $i \not\in S$ (because $|B(j)| > |B(i)|$). Let $s$ be the agent preceding $S$ in the sorted vector for $B$, and let $s'$ be the agent preceding $S$ in the sorted vector for $A'$.  We refer to $S \cup \{s'\}$ as the extended suffix of $A'$ (and likewise, $S \cup \{s\}$ is the extended prefix for $B$).

\begin{enumerate}

    \item If $s' = j$ and $s \not= j$ then $|A'(j)| \ge |B(j)|$ (otherwise $\lfloor L_{A'} \rfloor$ and $\lfloor L_{B} \rfloor$ cannot be identical), implying $|A'(j)| \ge |A(i)|$, as desired.

    \item The case $s' = s = j$ cannot hold because strict Lorenz domination would imply that $|A'(j)| < |B(j)|$. This contradicts the fact that $\lfloor L_{A'} \rfloor$ and $\lfloor L_{B} \rfloor$ are identical.

      \item If $s' \not= j$ and $s \not= j$, then consider the allocation $\hat{A}'$, but with auxiliary item values set from $\pi$ instead of $\pi'$. The fact that $\hat{A}$ Lorenz dominates $\hat{A}'$ with respect to $\pi$ (together with $L_{\hat{A}} = L_{\hat{B}}$) implies that $L_{\hat{A}'}$ for $\pi$ has a different extended suffix compared to $L_{\hat{A}'}$ for $\pi'$. This can only happen if $j$ is moved into the extended suffix (this will happen if $j$ and $s'$ receive the same number of items under $A'$, and $j$ precedes $p'$ in $\pi'$ but not in $\pi$), but then $|A'(j)| \ge |A(i)|$, as desired.


\end{enumerate}

The combination of the above cases establishes that $|A'(j)| \ge |A(i)|$.
\end{proof}


In the next example, we show that allocation $B$ in the proof of Lemma~\ref{lem:EV} might not be Lorenz dominating with respect to $\pi'$.
\begin{example}
\label{ex:A'domB}
Suppose that there are four agents $\{1,2,3,4\}$ and six items $\{a,b,c,d,e_1,e_2\}$. 
The sets of items desired by the agents are $(a,e_1,e_2)$, $(a,b)$, $(c,e_1,e_2)$, $(a,d,e_1,e_2)$, respectively. Valuation functions are additive, except for one exception, which is that for agent~1 items $e_1$ and $e_2$ are substitutes of each other. In every Lorenz dominating allocation two agents get a pair of items each, and two agents get one item each.

For priority order $\pi = (1,2,3,4)$, a Lorenz dominating allocation $A$ is $\{(a,e_1), (b), (c,e_2), (d)\}$. In this allocation agent~4 envies agent~1. For the permuted priority order $\pi' = (4,2,3,1)$, allocation $B$ would leave the bundles allocated to agents~2 and~3 as they were in $A$. However, the allocation $A'$ that is Lorenz dominating for $\pi'$ is $\{(d,e_2), (a, b), (c), (e_1)\}$. In this allocation agent~2 gains an item and agent~3 loses an item. Hence $B$ is not a Lorenz dominating allocation with respect to $\pi'$.
\end{example}

	In the following example we show the difference between ex-ante envy-freeness and stochastic envy-freeness.
\begin{example} \label{ex:ex-ante-EV-vs-stoc}
	Consider setting with five items $(a,b,c,d,e)$ and three additive dichotomous agents, where agent~1 desires $(a,b,c)$, whereas the other two agents desire $(a,b,c,d,e)$. The fractional allocation in which agent~1 gets item $a$ and agents~2 and~3 get half of each of the remaining items is ex-ante EF. If this fractional allocation is rounded in such a way that agent~2 gets items $(b,c)$ with probability $\frac{1}{2}$ and items $(d,e)$ and with probability $\frac{1}{2}$ (agent~3 gets the remaining items), then  ex-post, this allocation is EFX. However, the randomized allocation is not stochastically EF: agent~1 never gets value more than~1, but other agents do get bundles that are worth to agent~1 more than~1 with positive probability.
\end{example}


\section{Allocating All Items}
\label{app:all}

In Theorem~\ref{thm:submodular-main-intro} we present a truthful deterministic allocation mechanism that is welfare maximizing and EFX, but leaves all undesired items unallocated (non-redundant).
One might wonder if it is possible to always allocate all items and obtain a similar result.
We next show that the result of Theorem  \ref{thm:submodular-main-intro} is impossible to obtain when one insists on allocating all items (even undesired ones).
Specifically, we show that there is no truthful deterministic allocation mechanism that always allocates all items, maximizes welfare and is EFX. Moreover, this impossibility holds even for additive dichotomous agents, and even for two agents.

\begin{theorem}
	For 2 additive dichotomous agents, there is no truthful deterministic allocation mechanism that is welfare maximizing, EFX, and always allocates all items.
\end{theorem}
\begin{proof}
Consider a market with $m\geq 21$ items.
When both agents report they want the same set $X$ of size 3 from $M$, 
by EFX and welfare maximizing, we get that the allocation must be such that one agent receives two items among $X$ and the other receives the third and all items not in $X$.
By a counting argument we get that there are $6$ items $a,b,x_1,x_2,x_3,x_4$, such that for every  $X=\{a,b,x_i\}$ for $i\in [4]$ the same agent receives exactly the set $\{a,b\}$ and the other agent receives $M\setminus\{a,b\}$.
This is true since there are ${m\choose 3}$ sets 
of size 3, and only $2 \cdot {m\choose 2}$ allocations that gives one agent 2 items and the other, the rest of the items. Since for $m\geq 21$ it holds that ${m\choose3} > 3\cdot 2 \cdot {m\choose2}$, we get that there are four sets of size 3 with the same allocation.
We assume w.l.o.g., that agent 1 is the agent that receives $a,b$ whenever both agents reports $\{a,b,x_i\}$.

By truthfulness and welfare maximizing if agent 2 changes his report to $M\setminus\{a,b\}$, agent 1 receives $\{a,b\}$ and agent 2 receives $M\setminus\{a,b\}$.
If agent $1$ reports he wants $\{a,b,x_1,x_2,x_3,x_4\}$ and agent $2$ reports he wants $M\setminus\{a,b\}$ then by EFX agent 1 must receive at least $3$ items and $a,b$ among them. Contradicting that it is truthful for him to report $\{a,b,x_i\}$ for every $i \in [4]$.
\end{proof}

\if

There exist items $a,b,x_1,\ldots x_4$ such that whenever both players want the set  $\{a,b,x_i\}$ agent 1 gets $a,b$.

\begin{itemize}
	\item If $|X\cap f(X,X) |> \frac{|X|}{2}$ then $|X \cap Y \cap f(X,X)| \geq |X \cap Y \cap f(X,Y)|$.
	\item If $|X\cap f(X,X)| < \frac{|X|}{2}$ then $|X \cap Y \cap f(X,X)| \leq |X \cap Y \cap f(Y,X)|$.
\end{itemize}

\begin{theorem}
	For 2 additive dichotomous agents, there is no truthful deterministic allocation mechanism that allocates always all items for $m > TBD$ items, and is welfare maximizer and EFX.
\end{theorem}
\begin{proof}
Let $f(S_1,S_2)$ be the allocation to agent 1 ($M\setminus f(S_1,S_2)$ is the allocation to agent 2).
If $|f(X,X)| > \frac{|X|}{2}$ then 
\end{proof}

\begin{claim}
	approximating WF (without normalization) gives bad approximation to fairness since if there is one agent with values much higher than all others, and each agent deserve one item, 
\end{claim}

If we consider deterministic EFX welfare maximizing truthful allocating mechanisms, we will also have the following properties:
 
Let $f(S_1,S_2)$ be the allocation to agent 1 ($M\setminus f(S_1,S_2)$ is the allocation to agent 2).

$f(f(M,M),M) =f(M,M)$ by truthfulness.

$|f(M,M) \cap f(f(M,M),f(M,M))|  \leq   \frac{|f(M,M)|}{2} $, otherwise by EFX $M\setminus f(f(M,M),f(M,M))$ is given to agent 2, which is greater then $M\setminus f(M,M)$ which is given to agent 2 when reporting $M$, contradicting truthfulness.

\begin{corollary}
	There is no prioritized such mechanism.
\end{corollary}
\begin{proof}
Consider a market with 2 agents and 6 items $M = \{1,\ldots,6\}$.
	W.l.o.g., let agent 1 be the one that agent with the higher priority.

	W.l.o.g. $f(M,M)=\{1,2,3\}$, then by the former claim $|f(\{1,2,3\},\{1,2,3\}) \cap \{1,2,3\} |=1$
\end{proof}

For $|A|>1$ if $|f(A,A)| > \frac{|A|}{2}$ then for every $B$, $f(B,B) \neq A$, otherwise $f(A,B)=A$ and then $B \setminus f(A,A)$ better than  $B \setminus f(A,B)$, contradicting truthfulness of agent 2.
Similarly if $|f(A,A)| < \frac{|A|}{2}$ then for every $B$, $f(B,B) \neq B\setminus A$.

\fi

\section{Group Strategyproofness}
\label{sec:group-SP}

In this appendix we consider the aspect of group strategyproofness of our PE and RPE allocation mechanisms.
{A mechanism is \emph{strongly group strategyproof} if there is no deviation by a group in which every member of the group weakly gains, and one strictly so.
A mechanism is \emph{weakly group strategyproof} if there is no deviation by a group in which every member of the group strictly gains.

~\citet{BM2004} prove that for 
unit-demand dichotomous valuations, no Pareto optimal deterministic allocation mechanism  is  strongly group strategyproof. As the class of submodular dichotomous valuations that we consider contains unit-demand valuations, 
strong group strategyproof is not obtainable in our setting. For randomized allocation we  have the following impossibility:  

\begin{observation}
No randomized allocation mechanisms for submodular dichotomous valuations can be simultaneously ex-ante strongly group strategyproof, ex-post Pareto optimal and EF1.
\end{observation}
\begin{proof}
Consider a setting with three agents $a_1, a_2, a_3$ with dichotomous valuations and four items $e_1, e_2, e_3, e_4$. Agent $a_1$ is additive over $e_1,e_2$, agent $a_2$ is additive over $e_3,e_4$, and agent $a_3$ is unit demand over all items. In every EF1 PO allocation, all items are allocated, and $a_3$ gets exactly one item. In a randomized allocation mechanism that is ex-post EF1 and PO, w.l.o.g., $a_3$ has positive probability of receiving item $e_1$. Then the group $\{a_1,a_3\}$ has a deviation that benefits $a_1$ without hurting $a_3$, and this is for $a_3$ to report that $a_3$ is unit demand over $e_3,e_4$. With this deviation, 
in every {realized allocation that is EF1 and PO,} 
agent $a_1$  gets both $e_1$ and $e_2$ (sometimes gaining an item), whereas agent $a_3$ still gets one item {in that realization (either $e_3$ or $e_4$) and losses nothing}. 
\end{proof}

We next discuss the weaker notion of weak group strategyproofness, as well as the  properties of our mechanisms for the additive case. For additive dichotomous valuations, our PE mechanism is weakly group strategyproof, as shown by \citet{halpern2020fair}.} 
The next example shows that our PE allocation mechanism (and likewise the mechanism of \citet{halpern2020fair}) is not strongly group strategyproof, even if the dichotomous valuations are additive. 
\begin{example}
	There are three agents, 1,2 and 3,  and two items, $a$ and $b$.
	The agents are ordered according to their priorities (i.e., agent 1 has higher priority than agents 2 and 3, and agent 2 has higher priority than agent 3).
	Agent 1 wants both items $a,b$, while agent 2 wants only item $a$, and agent 3 wants item $b$.
	If agents $1,3$ collaborated and agent $1$ reports he wants only item $a$, then, agent $3$ gains an item, while agent $1$ does not lose an item.
\end{example}

We next show an example in which our RPE mechanism in Section~\ref{sec:rpe} is neither ex-ante Lorenz dominating, nor {weakly} 
group strategyproof, even if the dichotomous valuations are additive.


\begin{example}
	There are eight agents $a_1, \ldots, a_8$  with dichotomous additive valuations over twelve items $e_1, \ldots e_{12}$.
	Agents $a_1, \ldots ,a_4$ each wants only items $e_1, \ldots , e_6$, whereas agents $a_5, \ldots, a_8$ each wants all twelve items. In any realized allocation, agents $a_5, \ldots ,a_8$ combined get at least six items. The fractional Lorenz dominating allocation gives every agent $\frac{3}{2}$ items. However, when agents $a_5, \ldots, a_8$ have highest priority, they get eight items under PE. Hence under RPE each of them gets average value strictly greater than $\frac{3}{2}$ due to symmetry (and as they always get at least 6 items), whereas each of agents 1 to 4 gets average value strictly smaller than $\frac{3}{2}$. This establishes that RPE is not ex-ante Lorenz dominating.
	
If agents $a_5$ and $a_6$ change their reports to be $e_1, \ldots, e_9$ and agents $a_7$ and $a_8$ change their reports to be $\{e_1, \ldots, e_6\} \cup \{e_{10}, e_{11}, e_{12}\}$ then there is no priority order in which the group $a_5, \ldots, a_8$ gets fewer items than they would get under truthful reporting (with the same priority order). However, on priority order $a_7, a_8, a_1, \ldots a_6$ they get one more item compared to truthful reporting. Symmetry arguments imply that in expectation, any member of the group $a_5, \ldots, a_8$ strictly gains from the deviation above. Hence RPE is not {weakly} 
group strategyproof. 
\end{example}


\cout{
\mbc{I think we have decided  to remove this:}

\begin{example}
	There are three agents $a_1, a_2, a_3$  with dichotomous additive valuations over six items $e_1, \ldots e_{6}$.
	Agents $a_1$ wants only items $e_1, e_2$, whereas agents $a_2, a_3$ each wants all six items. When reporting truthfully, the only allocation is the one that each agent  gets exactly 2 desired items. 	
	If agents $a_2$ and $a_3$ change their reports to be with probability half that agent $a_2$ desires only $e_1,e_2$, while agent $a_3$ reports all six items, and with probability half, agent $a_3$ reports he desires only $e_1,e_2$, while agent $a_2$ desires all six items, then the expected number of items each of agents $a_2,a_3$ will receive is $\frac{5}{2}$, which is strictly better for them.  Hence RPE is not weakly 
	group strategyproof. 
\end{example}

}


\cout{

}

\end{document}